\theoremstyle{remark}
\newtheorem{thm}{Theorem}
\newtheorem{conj}{Conjecture}
\newtheorem{lem}{Lemma}
\newtheorem{cor}{Corollary}
\newtheorem{defn}{Definition}
\newtheorem{exmp}{Example}
\newtheorem{rem}{Remark}
\newtheorem{obs}{Observation}
\title{On the Optimal Broadcast Rate of the Two-Sender Unicast Index Coding Problem with Fully-Participated Interactions}
\author{Chinmayananda Arunachala, Vaneet Aggarwal, and B. Sundar Rajan. \thanks{C. Arunachala and B. S. Rajan are with the Dept.
		of Electrical Communication Engg., Indian Institute of Science, Bengaluru
		560012, KA, India, email: \{chinmayanand,bsrajan\}@iisc.ac.in. 	V. Aggarwal is with the School of Industrial
		Engineering at Purdue University, West Lafayette, IN, USA 47907. He is also with the 
		Dept. of Electrical Communication Engg., Indian Institute of Science, Bengaluru
		560012, KA, India, email:
		vaneet@purdue.edu.}}
\begin{document}
\maketitle

\begin{abstract}
	The problem of two-sender unicast index coding consists of two senders and a set of receivers. Each receiver demands a unique message and possesses some of the messages demanded by other receivers as its side-information. Every demanded message is present with at least one of the senders. Senders avail the knowledge of the side-information at the receivers to reduce the number of broadcast transmissions. Solution to this problem consists of finding the optimal number of coded transmissions from the two senders. One important class of the two-sender problem consists of the messages at the senders and the side-information at the receivers satisfying  \emph{fully-participated interactions}. This paper provides the optimal broadcast rates, for all the unsolved cases of the two-sender problem with fully-participated interactions when the associated \emph{interaction digraphs} contain cycles. The optimal broadcast rates are provided in terms of those of the three independent single-sender problems associated with the two-sender problem. This paper also provides an achievable  broadcast rate with $t$-bit messages for any finite $t$ and any two-sender problem with fully-participated interactions belonging to $(i)$ any one of the six instances (classes) of the two-sender problem when the associated interaction digraph does not contain any cycle, and $(ii)$ one of the classes of the two-sender problem when the associated interaction digraph contains cycles. The achievable broadcast rates are obtained by exploiting the symmetries of the confusion graph to color the same according to the two-sender graph coloring.    
\end{abstract}

\section{Introduction}
Index coding problem (ICP) with a single sender and a set of receivers was introduced in \cite{BK}. Each receiver has some messages as its side-information and demands a unique message. The sender avails the cognizance of the side-information present at all the receivers to reduce the number of broadcast transmissions, such that all the receivers can decode their demands using the broadcast transmissions and their side-information. Many practical scenarios demand for distributed transmissions, where the messages are distributed among multiple senders. Content is delivered in cellular networks using large storage capacity nodes called caching helpers \cite{KAC}, where the messages are distributed among the helpers to reduce the total average delay of all the users. Data is stored over multiple storage nodes to account for any failure in one or more storage nodes in distributed storage networks \cite{luo2016coded,xiang2016joint}. Hence, multi-sender ICP is of practical significance. 

Multi-sender ICPs were first studied by Ong et al. in \cite{SUOH}. They studied a special class of multi-sender ICPs, where each receiver knows a unique message and demands a subset of other messages. An iterative algorithm was proposed which provided different lower bounds on the optimal codelength based on the \emph{strongly-connected component} of the \emph{information-flow graph} selected in each iteration. Tightness of the lower bounds for the optimal codelength was not quantified and no performance guarantees were given for the algorithm \cite{SUOH}. Two-sender unicast ICPs were studied by Thapa et al. \cite{COJ}. They extended some graph theory based single-sender index coding schemes to the two-sender unicast ICP (TUICP). No class of non-trivial  two-sender problems were identified for which the proposed schemes were optimal. Tightness of the gap between the optimal codelengths and the codelengths given by the proposed schemes were not quantified for any special class of the TUICP. Many variations of multi-sender ICPs were studied and bounds for the capacity region were given \cite{sadeghi2016distributed,YPFK,MOJ2,MOJ}. These works assume that there are independent channels with fixed finite capacities from every sender to every receiver. This is in contrast with the previous works where a single shared channel was assumed with the  transmissions from   multiple senders being orthogonal in time. Variations of random coding were used to provide the bounds and the bounds were tightened with further improvements in the encoding schemes. 

Thapa et al. \cite{CTLO} studied the TUICP using a new variation of graph coloring called the two-sender graph coloring to account for the non-availability of some messages at each sender. The \emph{confusion graph} was colored according to the two-sender graph coloring to obtain the optimal broadcast rate with $t$-bit messages for any finite $t$. The TUICP described by the \emph{side-information digraph} has been analyzed using three independent single-sender sub-problems  described by the vertex-induced sub-digraphs of the side-information digraph and the \emph{interactions} between these sub-problems. The TUICP was divided into 64 classes based on these \emph{interactions}. The type of interaction was captured using the \emph{interaction digraph} obtained from the associated side-information digraph. There are 64 possibile interaction digraphs broadly classified into two cases: Case I and Case II. Case I consists of acyclic interaction digraphs and the optimal broadcast rate with $t$-bit messages for any finite $t$ was obtained using the two-sender graph coloring of the confusion graph for half of the sub-cases with any type of interactions (\emph{fully-participated} or \emph{partially-participated}). For the remaining sub-cases, a conjecture was stated. Case II was further classified into five sub-cases. For Case II-A, the optimal broadcast rate with $t$-bit messages for any finite $t$ was obtained for any two-sender problem with any type of interactions. For other sub-cases, only fully-participated interactions were considered. For Cases II-C, II-D, and II-E, the optimal broadcast rate with $t$-bit messages for any finite $t$ was obtained for some sub-cases based on the relation between the corresponding optimal broadcast rates of the sub-problems. For other sub-cases, upper bounds were provided. The optimal broadcast rates with $t$-bit messages for any finite $t$, the corresponding code constructions and the upper bounds (for the optimal broadcast rates) were provided in terms of those of the three related single-sender unicast index coding sub-problems.

Optimal scalar linear codes were obtained for some special sub-cases of Case II-C and Case II-D with partially-participated interactions using the notion of joint extensions of two single-sender index coding problems  \cite{CBSR}. Optimal linear broadcast rates with $t$-bit messages for any finite $t$ and the corresponding code constructions were provided for all the cases of the two-sender problem with fully-participated interactions in \cite{CVBSR1}.

\begin{table}[h]
	\centering
	\begin{tabular}{|c|c|c|c|c|}
		\hline
		CASE & 
		$\beta_t(\mathcal{D}^k,\mathcal{P})$ \\  \hline
		I,$\mathcal{H}_{16}$ & $\beta_t(\mathcal{D}_{2}^{16,\mathcal{P}})+\beta_t(\mathcal{D}_{1*3}^{16,\mathcal{P}})$ \\ \hline
		I,$\mathcal{H}_{18}$ & $\beta_t(\mathcal{D}_{2}^{18,\mathcal{P}})+\beta_t(\mathcal{D}_{1 \circ 3}^{18,\mathcal{P}})$ \\ \hline
		I,$\mathcal{H}_{20}$ & $\beta_t(\mathcal{D}_{1}^{20,\mathcal{P}})+\beta_t(\mathcal{D}_{2*3}^{20,\mathcal{P}})$ \\ \hline
		I,$\mathcal{H}_{21}$ & $\beta_t(\mathcal{D}_{1}^{21,\mathcal{P}})+\beta_t(\mathcal{D}_{2 \circ 3}^{21,\mathcal{P}})$ \\ \hline
		I,$\mathcal{H}_{23}$ & $\beta_t(\mathcal{D}_{2}^{23,\mathcal{P}})+\beta_t(\mathcal{D}_{3 \circ 1}^{23,\mathcal{P}})$ \\ \hline
		I,$\mathcal{H}_{25}$ &  $\beta_t(\mathcal{D}_{1}^{25,\mathcal{P}})+\beta_t(\mathcal{D}_{3 \circ 2}^{25,\mathcal{P}})$ \\ \hline
		\multirow{2}{*}{II-E,$\mathcal{H}_{k}$}  & 
		\multirow{2}{*}[.5ex]{$max\{\beta_t(\mathcal{D}_{1}^{k,\mathcal{P}})+\beta_t(\mathcal{D}_{2}^{k,\mathcal{P}}),\beta_t(\mathcal{D}_{1}^{k,\mathcal{P}})$} \\[5pt]
		\multirow{2}{*}{} & \multirow{2}{*}[1ex]{$+\beta_t(\mathcal{D}_{3}^{k,\mathcal{P}}),\beta_t(\mathcal{D}_{2}^{k,\mathcal{P}})+\beta_t(\mathcal{D}_{3}^{k,\mathcal{P}})\}$} \\[5pt] \hline
	\end{tabular}	
	\vspace{5pt}
	\caption[LoF entry]{Summary of the achievability results for any tuicp with fully-participated
		interactions between the sub-digraphs $\mathcal{D}_{1}^{k,\mathcal{P}},\mathcal{D}_{2}^{k,\mathcal{P}}$, and $\mathcal{D}_{3}^{k,\mathcal{P}}$ in the side-information digraph $\mathcal{D}^{k}$. 
		
		The definitions of the digraphs $\mathcal{D}_{i \circ  j}^{k,\mathcal{P}}$ and $\mathcal{D}_{i * j}^{k,\mathcal{P}}$ are given in the definitions \ref{deflexi1} and \ref{defdisj1} respectively. The listing of all the interaction digraphs $\mathcal{H}_k$ and the related cases is given in Figure \ref{interenum}. }
	\label{table1}
	\vspace{-15pt}
\end{table}

In this paper, we first provide an achievable broadcast rate with $t$-bit messages for any finite $t$ and any two-sender problem belonging to any one of the six sub-cases of Case I of the TUICP with fully-participated interactions, for which no achievable broadcast rates were given earlier. This is obtained by providing a valid  two-sender graph coloring of the confusion graph associated with the two-sender problem. This exploits some unexplored symmetries of the confusion graph. We propose new ways of grouping its vertices to color it according to the two-sender graph coloring. The proofs indicate a possibility for the Conjecture 1, given in \cite{CTLO} to be false. However, we don't disprove the conjecture.  We also provide an achievable broadcast rate with $t$-bit messages for any finite $t$ and any two-sender problem belonging to Case II-E of the TUICP with fully-participated interactions. This is obtained using a code-construction for the two-sender problem using codes of the associated single-sender sub-problems. This serves as a tighter upper bound on the optimal broadcast rate with $t$-bit messages for any finite $t$ compared to those given in \cite{CTLO}. The results on the achievable broadcast rates presented in this paper are given in Table \ref{table1}. All the notations and definitions required to understand the results are given in Sections II and III.

We then provide the optimal broadcast rates for all the two-sender problems belonging to Case II with fully-participated interactions, for which only upper bounds were provided earlier. The optimal broadcast rates are given in terms of those of the three related single-sender unicast index coding sub-problems. Thus, the complexity of finding the optimal broadcast rate for the TUICP with fully-participated interactions is reduced to that of finding the optimal broadcast rate for the single-sender unicast ICP, which is an NP-Hard problem in general. We use the existing results available for the single-sender unicast index coding problem to prove tight lower bounds on the optimal broadcast rates based on the two-sender graph coloring of the confusion graphs. The matching upper bounds were provided by Thapa et al. \cite{CTLO} for problems belonging to Cases II-C and II-D. For Case II-E, we utilize the achievable broadcast rate with $t$-bit messages for any finite $t$ given in this paper, to obtain an upper bound on the optimal broadcast rate which matches the lower bound obtained by using the results of Cases II-C and II-D. All the results on the optimal broadcast rates of any TUICP with fully-participated interactions are given in Table \ref{table2}. All the notations and definitions required to understand the results are given in Sections II and III. The results marked with a $``!"$ are the ones which also hold for any  partially-participated interactions. The results marked by $``*"$ are the ones which are provided in this paper and was partially resolved in \cite{CTLO}. 

\begin{table}[h]
	\centering
	\begin{tabular}{|c|c|c|c|c|}
		\hline
		CASE & 
		$\beta(\mathcal{D}^k,\mathcal{P})$ \\  \hline
		I    & $\beta(\mathcal{D}_{1}^{k,\mathcal{P}})+\beta(\mathcal{D}_{2}^{k,\mathcal{P}})+\beta(\mathcal{D}_{3}^{k,\mathcal{P}})!$ \\ \hline
		II-A & $\beta(\mathcal{D}_{1}^{k,\mathcal{P}})+\beta(\mathcal{D}_{2}^{k,\mathcal{P}})+\beta(\mathcal{D}_{3}^{k,\mathcal{P}})!$ \\ \hline
		II-B & $max\{\beta(\mathcal{D}_{3}^{k,\mathcal{P}}),\beta(\mathcal{D}_{1}^{k,\mathcal{P}})+\beta(\mathcal{D}_{2}^{k,\mathcal{P}})\}$ \\ \hline
		II-C & $\beta(\mathcal{D}_{2}^{k,\mathcal{P}})+max\{\beta(\mathcal{D}_{1}^{k,\mathcal{P}}),\beta(\mathcal{D}_{3}^{k,\mathcal{P}})\}*$ \\ \hline
		II-D & $\beta(\mathcal{D}_{1}^{k,\mathcal{P}})+max\{\beta(\mathcal{D}_{2}^{k,\mathcal{P}}),\beta(\mathcal{D}_{3}^{k,\mathcal{P}})\}*$ \\ \hline
		\multirow{2}{*}{II-E} & 
		\multirow{2}{*}[.5ex]{$max\{\beta(\mathcal{D}_{1}^{k,\mathcal{P}})+\beta(\mathcal{D}_{2}^{k,\mathcal{P}}),\beta(\mathcal{D}_{1}^{k,\mathcal{P}})$} \\[5pt]
		\multirow{2}{*}{} & \multirow{2}{*}[1ex]{$+\beta(\mathcal{D}_{3}^{k,\mathcal{P}}),\beta(\mathcal{D}_{2}^{k,\mathcal{P}})+\beta(\mathcal{D}_{3}^{k,\mathcal{P}})\}*$} \\[5pt] \hline
	\end{tabular}	
	\vspace{5pt}
	\caption[LoF entry]{Optimal broadcast rates for any tuicp with fully-participated interactions between the sub-digraphs  $\mathcal{D}_{1}^{k,\mathcal{P}},\mathcal{D}_{2}^{k,\mathcal{P}}$, and $\mathcal{D}_{3}^{k,\mathcal{P}}$ in the side-information digraph $\mathcal{D}^{k}$. The listing of all the cases is given in Figure \ref{interenum}. }
	\label{table2}
	\vspace{-15pt}
\end{table}

The key results of this paper are summarized as follows. 
\begin{itemize}
	\item Achievable broadcast rates with $t$-bit messages for any finite $t$ is given for six sub-cases of the TUICP belonging to Case I with \emph{fully-participated interactions}, for which no non-trivial achievable schemes were known prior to this work. 
	\item Achievable broadcast rates with $t$-bit messages for any finite $t$ is given for Case II-E of the TUICP with \emph{fully-participated interactions}, which serve as tighter upper bounds for the optimal  broadcast rates with $t$-bit messages for any finite $t$, compared to those known prior to this work.
	\item Optimal broadcast rates are established for all the sub-cases of Case II with \emph{fully-participated interactions}, for which only upper bounds were known earlier.
\end{itemize}
\par The remainder of the paper is organized as follows. Section II introduces the problem setup and establishes the required definitions and notations. Section III recapitulates the notion of the confusion graph and the two-sender graph coloring of the same. Section IV provides achievable broadcast rates with finite-length messages, for some sub-cases of the two-sender problem with fully-participated interactions belonging to Case I and all sub-cases of Case II-E. Section V provides optimal broadcast rates for all the TUICPs with fully-participated interactions belonging to Cases II-C, II-D, and II-E. Section VI concludes the paper.

\section{Problem Formulation and Definitions}

In this section, we formulate the two-sender unicast index coding problem, and establish the  notations and definitions used in this paper. 
\par  The set $\{1,2,\cdots, n\}$ is denoted as $[n]$. In a two-sender unicast index coding problem (TUICP), there are $m$ independent messages given by the set $\mathcal{M} =\{{\bf{x}}_1,{\bf{x}}_2,\cdots,{\bf{x}}_{m}\}$, where ${\bf{x}}_i \in \mathbb{F}_2^{t \times 1}$, $\forall i \in [m]$, and $t \geq 1$. There are $m$ receivers. The $i$th receiver demands ${\bf{x}}_i$ and has $\mathcal{K}_i \subseteq \mathcal{M} \setminus \{{\bf{x}}_i\}$ as its side-information. The $s$th sender is denoted by $\mathcal{S}_{s}$, $s \in \{1,2\}$. $\mathcal{S}_{s}$ possesses the message set  $\mathcal{M}_{s}$, such that $\mathcal{M}_{s} \subset \mathcal{M}$, and $\mathcal{M}_{1} \cup \mathcal{M}_{2}=\mathcal{M}$. Each sender knows the identity of the messages present with the other sender. The senders transmit through a noiseless broadcast channel and the transmissions from different senders are orthogonal in time.  Single-sender unicast ICP is a special case of TUICP, where $\mathcal{M}_{1}=\mathcal{M}$ and  $\mathcal{M}_{2}=\Phi$. 

Given an instance of the TUICP, each codeword of a two-sender index code consists of two sub-codewords broadcasted by the two senders respectively. An encoding function for the sender $\mathcal{S}_{s}$ is given by $\mathbb{E}_{s}:\mathbb{F}_{2}^{|\mathcal{M}_{s}|t \times 1} \rightarrow   \mathbb{F}_{2}^{p_{s} \times 1}$, such that $\mathcal{C}_s=\mathbb{E}_{s}(\mathcal{M}_s)$, where $p_s$ is the length  of the sub-codeword $\mathcal{C}_s$ transmitted by $\mathcal{S}_s$, $s \in \{1,2\}$. The sub-codewords from the two senders are sent one after the other. The $i$th receiver has a decoding function given by $\mathbb{D}_{i}:\mathbb{F}_{2}^{(p_{1}+p_{2}+|\mathcal{K}_{i}|t) \times 1} \rightarrow   \mathbb{F}_{2}^{t \times 1}$, such that ${\bf{x}}_i = \mathbb{D}_{i}(\mathcal{C}_1,\mathcal{C}_2,\mathcal{K}_i)$, $\forall i \in [m]$, i.e., it can decode ${\bf{x}}_i$ using its side-information and the received codeword consisting of $\mathcal{C}_1$ and $\mathcal{C}_2$. For single-sender unicast ICP, $\mathcal{M}_{2}=\Phi$.  Hence, $p_2=0$. In this case, we assume that only $\mathbb{E}_1$ exists.

\par We state the definitions of broadcast rate of an index code, the optimal broadcast rate of a two-sender problem with $t$-bit messages for any  finite $t$, and the optimal broadcast rate of the same, as given in \cite{CTLO}. Note that the definitions take into account both linear and non-linear encoding schemes.
\begin{defn}[Broadcast rate, \cite{CTLO}] 
	The broadcast rate of an index code (for a single-sender problem or a two-sender problem) described by $(\{\mathbb{E}_{j}\},\{\mathbb{D}_{i}\})$ is the total number of transmitted bits  per received message bits ($t$-bit messages for some finite $t$), given by $p_{t} \triangleq \frac{(p_{1}+p_{2})}{t}$. 
	
	The optimal (minimum) length of any index code for a given ICP and $t$-bit messages is called the optimal codelength.
	
\end{defn}
\begin{defn}[Optimal broadcast rate with $t$-bit messages for any finite $t$, \cite{CTLO}]
	The optimal broadcast rate for a given ICP with $t$-bit messages and any finite $t$ is given by $\beta_{t} \triangleq \underset{{\{\mathbb{E}_j\}}}{min}$  $p_{t}$. 
\end{defn}
Note that for a single-sender unicast ICP, only $\mathbb{E}_1$ is considered in the expression for $\beta_{t}$.
\begin{defn}[Optimal broadcast rate, \cite{CTLO}]
	The optimal broadcast rate of a given ICP is given by   $\beta \triangleq \underset{t}{inf} \beta_{t} = \underset{t \rightarrow \infty}{lim} \beta_{t}$.
\end{defn}

\par We state some definitions from graph theory \cite{DBW}, that will be used in this paper. 

A directed graph (also called digraph) given by $\mathcal{D}=(\mathcal{V}(\mathcal{D}),\mathcal{E}(\mathcal{D}))$, consists of a set of vertices $\mathcal{V}(\mathcal{D})$, and a set of edges $\mathcal{E}(\mathcal{D})$ which is a set of ordered pairs of vertices. A sub-digraph  $\mathcal{G}$ of a digraph $\mathcal{D}$ is a digraph, whose vertex set satisfies $\mathcal{V}(\mathcal{G}) \subseteq \mathcal{V}(\mathcal{D})$, and edge set satisfies $\mathcal{E}(\mathcal{G}) \subseteq \mathcal{E}(\mathcal{D})$. The sub-digraph of $\mathcal{D}$ induced by the vertex set $\mathcal{V}(\mathcal{G})$ is the  digraph whose vertex set is $\mathcal{V}(\mathcal{G})$, and the edge set is given by  $\mathcal{E}(\mathcal{G})=\{(u, v): u,v \in \mathcal{V}(\mathcal{G}), (u, v) \in \mathcal{E}(\mathcal{D})\}$. A directed path in a digraph $\mathcal{D}$ is a sequence of distinct vertices $\{v_1,\cdots,v_r\}$, such that $(v_i,v_{i+1}) \in \mathcal{E}(\mathcal{D})$, $\forall i \in [r-1]$. A cycle in a digraph $\mathcal{D}$ is a sequence of distinct vertices $(v_{1},\cdots,v_{c})$, such that $(v_{i},v_{i+1}) \in \mathcal{E}(\mathcal{D})$,  $\forall i \in [c-1]$, and $(v_{c},v_{1}) \in \mathcal{E}(\mathcal{D})$. 

For an undirected graph, the edge set consists of a set of unordered pairs of vertices. Two vertices are said to be adjacent if there exists an edge between the two vertices. A proper graph coloring of an undirected graph $\mathcal{D}$ is an onto function $J: \mathcal{V}(\mathcal{D}) \rightarrow \mathcal{J}$, where $\mathcal{J}$ is a set of colors such that, if $(u,v) \in \mathcal{E}(\mathcal{D})$, then $J(u) \neq J(v)$. The minimum number of colors required for any proper coloring of an undirected graph $\mathcal{D}$ is its chromatic number and is denoted by  $\chi(\mathcal{D})$. Two undirected graphs $\mathcal{G}$ and $\mathcal{H}$ are said to be isomorphic if there exists a bijection between the vertex sets $\mathcal{V}(\mathcal{G})$ and $\mathcal{V}(\mathcal{H})$ given by $f: \mathcal{V}(\mathcal{G}) \rightarrow \mathcal{V}(\mathcal{H})$, such that $(u,v) \in \mathcal{E}(\mathcal{G})$ iff $(f(u),f(v)) \in \mathcal{E}(\mathcal{H})$. A subset of vertices of an undirected graph $\mathcal{V}(\mathcal{G})$ is said to be independent, if there is no edge between any two vertices of the subset. A clique $\mathcal{C}$ of an undirected graph $\mathcal{G}$ is a vertex-induced subgraph of  $\mathcal{G}$ such that there is an edge between every pair of vertices in $\mathcal{C}$. A largest clique of a given graph is a clique with maximum number of vertices. The clique
number $\omega(\mathcal{G})$ of an undirected graph $\mathcal{G}$ is the number of vertices in a largest clique of $\mathcal{G}$. 

The following graph products of any two given undirected graphs are used in this paper. 
\begin{defn}[Lexicographic product]
	The lexicographic product $\mathcal{G}$ of two undirected graphs $\mathcal{G}_1$ and $\mathcal{G}_2$ is denoted by $\mathcal{G}_1 \circ \mathcal{G}_2$, where $\mathcal{V}(\mathcal{G}) = \mathcal{V}(\mathcal{G}_1) \times \mathcal{V} (\mathcal{G}_2)$ and
	$((u_1, u_2), (v_1,v_2)) \in \mathcal{E}(\mathcal{G})$ iff
	$(u_1, v_1) \in \mathcal{E}(\mathcal{G}_1)$ or
	$((u_1 = v_1)$ and $(u_2,v_2) \in \mathcal{E}(\mathcal{G}_2))$. 
	\label{deflexi}
\end{defn}

\begin{figure}
	\centering
	\begin{tikzpicture}
	[place/.style={circle,draw=black!100,thick}]
	\node at (-2,0) [place]  (11c) {1,1};
	\node at (0,0) [place]   (21c) {2,1};
	\node at (2,0) [place]   (31c) {3,1};
	\node at (-2,2) [place]  (12c) {1,2};
	\node[label=$\mathcal{G}_1 \circ \mathcal{G}_2$] at (0,2) [place]   (22c) {2,2};
	\node at (2,2) [place]   (32c) {3,2};
	\node at (-2,-2) [place] (13c) {1,3};
	\node at (0,-2) [place]  (23c) {2,3};
	\node at (2,-2) [place]  (33c) {3,3};
	\node[label=$\mathcal{G}_2$] at (-4,2) [place]  (2c)  {2};
	\node at (-4,0) [place]  (1c) {1};
	\node at (-4,-2) [place] (3c) {3};
	\draw [thick,-] (2c.south) -- (1c.north);
	\draw [thick,-] (1c.south) -- (3c.north);
	\draw [thick,-] (3c) to [out=135,in=-135] (2c);
	\node at (-2,4) [place]  (1'c) {1};
	\node[label=$\mathcal{G}_1$] at (0,4) [place]   (2'c)  {2};
	\node at (2,4) [place]   (3'c) {3};
	\draw [thick,-] (1'c.east) -- (2'c.west);
	\draw [thick,-] (2'c.east) -- (3'c.west);
	\draw [thick,-] (11c.south) -- (13c.north);
	\draw [thick,-] (12c.south) -- (11c.north);
	\draw [thick,-] (21c.south) -- (23c.north);
	\draw [thick,-] (22c.south) -- (21c.north);
	\draw [thick,-] (31c.south) -- (33c.north);
	\draw [thick,-] (32c.south) -- (31c.north);
	\draw [thick,-] (12c.east) -- (22c.west);
	\draw [thick,-] (11c.east) -- (21c.west);
	\draw [thick,-] (13c.east) -- (23c.west);
	\draw [thick,-] (22c.east) -- (32c.west);
	\draw [thick,-] (21c.east) -- (31c.west);
	\draw [thick,-] (23c.east) -- (33c.west);
	\draw [thick,-] (12c) to [out=-45,in=135]  (21c);
	\draw [thick,-] (11c) to [out=-45,in=135]  (23c);
	\draw [thick,-] (12c) to [out=-65,in=125]  (23c);
	\draw [thick,-] (13c) to [out=65,in=-125]  (22c);
	\draw [thick,-] (22c) to [out=-65,in=115]  (33c);
	\draw [thick,-] (23c) to [out=65,in=-115]  (32c);
	\draw [thick,-] (13c) to [out=45,in=-135]  (21c);
	\draw [thick,-] (11c) to [out=45,in=-135]  (22c);
	\draw [thick,-] (21c) to [out=45,in=-135]  (32c);
	\draw [thick,-] (23c) to [out=45,in=-135]  (31c);
	\draw [thick,-] (22c) to [out=-45,in=135]  (31c);
	\draw [thick,-] (21c) to [out=-45,in=135]  (33c);
	\draw [thick,-] (12c) to [out=-135,in=135]  (13c);
	\draw [thick,-] (22c) to [out=-115,in=115]  (23c);
	\draw [thick,-] (32c) to [out=-55,in=55]  (33c);
	\end{tikzpicture}
	\caption{Lexicographic product of  $\mathcal{G}_1$ and $\mathcal{G}_2$.}
	\label{figlexico}
\end{figure}
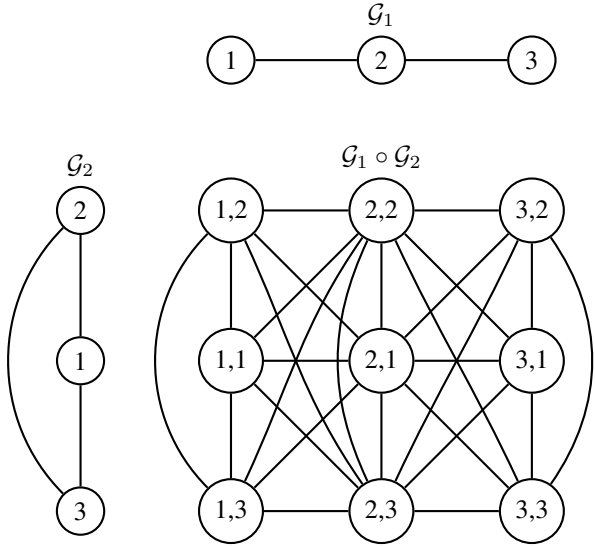

\begin{defn}[Disjunctive product]
	The disjunctive product $\mathcal{G}$  is denoted by $\mathcal{G}_1 * \mathcal{G}_2$, where  $\mathcal{V}(\mathcal{G}) = \mathcal{V}(\mathcal{G}_1) \times \mathcal{V} (\mathcal{G}_2)$ and
	$((u_1, u_2), (v_1,v_2)) \in \mathcal{E}(\mathcal{G})$ iff
	$(u_1, v_1) \in \mathcal{E}(\mathcal{G}_1)$ or $(u_2,v_2) \in \mathcal{E}(\mathcal{G}_2)$.
	\label{defdisj}
\end{defn}

\begin{figure}
	\centering
	\begin{tikzpicture}
	[place/.style={circle,draw=black!100,thick}]
	\node[label=$\mathcal{G}_2$] at (-4,2) [place]  (2c)  {2};
	\node at (-4,0) [place]  (1c) {1};
	\node at (-4,-2) [place] (3c) {3};
	\draw [thick,-] (2c.south) -- (1c.north);
	\draw [thick,-] (1c.south) -- (3c.north);
	\draw [thick,-] (3c) to [out=135,in=-135] (2c);
	\node at (-2,4) [place]  (1'c) {1};
	\node[label=$\mathcal{G}_1$] at (0,4) [place]   (2'c)  {2};
	\node at (2,4) [place]   (3'c) {3};
	\draw [thick,-] (1'c.east) -- (2'c.west);
	\draw [thick,-] (2'c.east) -- (3'c.west);
	\node at (-2,0) [place]  (11c) {1,1};
	\node at (0,0) [place]   (21c) {2,1};
	\node at (2,0) [place]   (31c) {3,1};
	\node at (-2,2) [place]  (12c) {1,2};
	\node[label=$\mathcal{G}_1 * \mathcal{G}_2$] at (0,2) [place]   (22c) {2,2};
	\node at (2,2) [place]   (32c) {3,2};
	\node at (-2,-2) [place] (13c) {1,3};
	\node at (0,-2) [place]  (23c) {2,3};
	\node at (2,-2) [place]  (33c) {3,3};
	\draw [thick,-] (11c.south) -- (13c.north);
	\draw [thick,-] (12c.south) -- (11c.north);
	\draw [thick,-] (21c.south) -- (23c.north);
	\draw [thick,-] (22c.south) -- (21c.north);
	\draw [thick,-] (31c.south) -- (33c.north);
	\draw [thick,-] (32c.south) -- (31c.north);
	\draw [thick,-] (12c.east) -- (22c.west);
	\draw [thick,-] (11c.east) -- (21c.west);
	\draw [thick,-] (13c.east) -- (23c.west);
	\draw [thick,-] (22c.east) -- (32c.west);
	\draw [thick,-] (21c.east) -- (31c.west);
	\draw [thick,-] (23c.east) -- (33c.west);
	\draw [thick,-] (12c) to [out=-45,in=135]  (21c);
	\draw [thick,-] (11c) to [out=-45,in=135]  (23c);
	\draw [thick,-] (12c) to [out=-55,in=125]  (23c);
	\draw [thick,-] (13c) to [out=55,in=-125]  (22c);
	\draw [thick,-] (22c) to [out=-55,in=125]  (33c);
	\draw [thick,-] (23c) to [out=55,in=-125]  (32c);
	\draw [thick,-] (13c) to [out=45,in=-135]  (21c);
	\draw [thick,-] (11c) to [out=45,in=-135]  (22c);
	\draw [thick,-] (21c) to [out=45,in=-135]  (32c);
	\draw [thick,-] (23c) to [out=45,in=-135]  (31c);
	\draw [thick,-] (22c) to [out=-45,in=135]  (31c);
	\draw [thick,-] (21c) to [out=-45,in=135]  (33c);
	\draw [thick,-] (12c) to [out=-135,in=135]  (13c);
	\draw [thick,-] (22c) to [out=-115,in=115]  (23c);
	\draw [thick,-] (32c) to [out=-55,in=55]  (33c);
	\draw [thick,-] (13c) to [out=35,in=-105] (32c);
	\draw [thick,-] (13c) to [out=30,in=-145] (31c);
	\draw [thick,-] (11c) to [out=-35,in=160] (33c);
	\draw [thick,-] (11c) to [out=35,in=-155] (32c);
	\draw [thick,-] (12c) to [out=-35,in=155] (31c);
	\draw [thick,-] (12c) to [out=-75,in=155] (33c);
	\end{tikzpicture}
	\caption{Disjunctive product of $\mathcal{G}_1$ and $\mathcal{G}_2$.}
	\label{figdisjunct}
\end{figure}
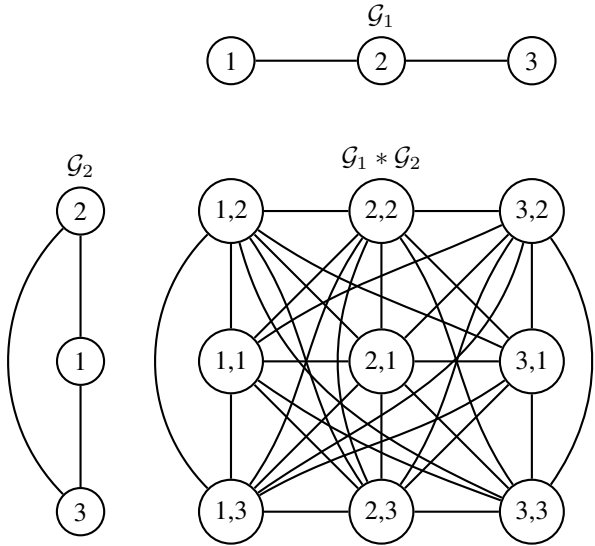

An example of the lexicographic product of two graphs $\mathcal{G}_1$ and $\mathcal{G}_2$ is shown in Figure \ref{figlexico}. For the same $\mathcal{G}_1$ and $\mathcal{G}_2$, the disjunctive product is shown in Figure \ref{figdisjunct}.

For any unicast ICP (either single-sender or two-sender), the knowledge of side-information and demands of all the receivers is represented by the side-information digraph $\mathcal{D}=(\mathcal{V}(\mathcal{D}),\mathcal{E}(\mathcal{D}))$, where the vertex set is given by $\mathcal{V}(\mathcal{D})=\{v_{1},\cdots,v_{m}\}$. The vertex $v_{i}$ represents the $i$th receiver which demands the message ${\bf{x}}_i$. Due to the one-to-one relationship between the $i$th receiver and ${\bf{x}}_i$, $v_{i}$ also represents ${\bf{x}}_i$. Hence, we refer to  $v_i$ as the $i$th message, the $i$th receiver and the  $i$th vertex  interchangeably. The edge set is given by $\mathcal{E}(\mathcal{D})=\{(v_{i},v_{j}): {\bf{x}}_{j} \in \mathcal{K}_{i}, i, j \in [m]\}$. 
The message sets $\mathcal{P}_{1} = \mathcal{M}_{1} \setminus \mathcal{M}_{2}$ and $\mathcal{P}_{2} = \mathcal{M}_{2} \setminus \mathcal{M}_{1}$ contain the messsages available only with $S_{1}$ and $S_{2}$ respectively.  $\mathcal{P}_{3} = \mathcal{M}_{1} \cap \mathcal{M}_{2}$ is the set of messages available with both the senders. Let $m_i=|\mathcal{P}_{i}|, i \in \{1,2,3\}$. Let $\mathcal{P}=(\mathcal{P}_{1},\mathcal{P}_{2},\mathcal{P}_{3})$. Any TUICP $\mathcal{I}$ can be described in terms of the two tuple $(\mathcal{D},\mathcal{P})$, as $\mathcal{I}(\mathcal{D},\mathcal{P})$. The optimal broadcast rates $\beta$ and $\beta_{t}$ of any TUICP $\mathcal{I}(\mathcal{D},\mathcal{P})$ are denoted by $\beta(\mathcal{D},\mathcal{P})$ and $\beta_{t}(\mathcal{D},\mathcal{P})$ respectively. Similarly, an achievable broadcast rate $p_t$ is denoted by $p_t(\mathcal{D},\mathcal{P})$. For a single-sender unicast ICP with side-information digraph $\mathcal{D}$, the $\beta$ and $\beta_{t}$ are denoted by $\beta(\mathcal{D})$ and $\beta_{t}(\mathcal{D})$ respectively.

The TUICP has been analyzed using three disjoint sub-digraphs of the side-information digraph (equivalently three sub-problems) induced by the three disjoint vertex sets respectively \cite{CTLO}. Let $\mathcal{D}_s$ be the sub-digraph of $\mathcal{D}$, induced by the vertices $\{v_j: {\bf{x}}_j \in \mathcal{P}_{s}, j \in [m]\}$, where $s \in \{1,2,3\}$. If there exists an edge from some vertex in $\mathcal{V}(\mathcal{D}_{i})$ to some vertex in $\mathcal{V}(\mathcal{D}_{j})$, in the side-information digraph $\mathcal{D}$, $i,j \in \{1,2,3\}, i \neq j$, then we say that there is an interaction from $\mathcal{D}_{i}$ to $\mathcal{D}_{j}$, and denote it by $\mathcal{D}_{i} \rightarrow \mathcal{D}_{j}$. We say that the interaction $\mathcal{D}_{i} \rightarrow \mathcal{D}_{j}$ is fully-participated, if there are edges from every vertex in $\mathcal{V}(\mathcal{D}_{i})$ to every vertex in $\mathcal{V}(\mathcal{D}_{j})$. Otherwise, it is said to be a partially-participated interaction. We say that the TUICP has fully-participated interactions if all the existing interactions are fully-participated interactions. For a given two-sender problem, we define a digraph called the interaction digraph, which captures the type of interactions between the sub-digraphs of the side-information digraph. 

\begin{defn}[Interaction digraph]
	For a given TUICP $\mathcal{I}(\mathcal{D},\mathcal{P})$, the digraph $\mathcal{H}$ with  $\mathcal{V}(\mathcal{H}) = \{1,2,3\}$ and $\mathcal{E}(\mathcal{H})=\{(i,j) | \mathcal{D}_{i} \rightarrow \mathcal{D}_{j}, i \neq j, i,j \in \{1,2,3\}\}$, is defined as the interaction digraph of the side-information digraph $\mathcal{D}$.
	\label{definterdigraph}
\end{defn}

Note that a given side-information digraph can correspond to different interaction digraphs based on the choice of the message tuple $\mathcal{P}$. The edges $(i,j)$ and $(j,i)$ in any interaction digraph are denoted by a single edge with arrows at both the ends, $i,j \in \{1,2,3\}$. There are 64 possibilities for the digraph $\mathcal{H}$ as shown in Figure \ref{interenum}, which were enlisted  and classified in \cite{CTLO}. The number written below each interaction digraph in the figure is used as the subscript to denote the specific interaction digraph. The side-information digraph $\mathcal{D}$ describing a given two-sender problem with the interaction digraph $\mathcal{H}_{k}$ is denoted by $\mathcal{D}^{k}$, $k \in \{1,2,\cdots,64\}$. For any TUICP $\mathcal{I}(\mathcal{D}^k,\mathcal{P})$, the corresponding sub-digraphs $\mathcal{D}_i$, $i \in \{1,2,3\}$, are denoted as $\mathcal{D}_{i}^{k,\mathcal{P}}$. Any TUICP $\mathcal{I}(\mathcal{D}^k,\mathcal{P})$ is analyzed using the three single-sender unicast ICPs with the  side-information digraphs $\mathcal{D}_{i}^{k,\mathcal{P}}$, $i \in \{1,2,3\}$. Note that all the possible interaction digraphs are classified into two cases broadly: Case I and Case II. Case I consists of acyclic interaction digraphs (i.e., with no cycles). Case II is further classified into five sub-cases as shown in Figure \ref{interenum}. We illustrate the above definitions using an example.    
\begin{figure}[!htbp]
	\begin{center}
		\begin{tikzpicture}
		[place/.style={circle,draw=black!100,thick}]
		\node at (-5.5,.2) [place] (1c) {1};
		\node at (-6.7,-.7) [place] (5c) {5};
		\node at (-6,-1.8) [place] (4c) {4};
		\node at (-4.7,-1.8) [place] (3c) {3};
		\node at (-4.2,-.7) [place] (2c) {2};
		\node at (-3.6,-1.8) [place] (1h) {1}; 	 	
		\node at (-2.1,-1.8) [place] (2h) {2};
		\node at (-2.8,-.7) [place] (3h) {3};	
		\node at (-0.5,-.2) [place] (1'h) {1}; 	 	
		\node at (1,-.2) [place] (2'h) {2};
		\node at (1,-1.1) [place] (3'h) {3};    		     	
		\node at (-0.5,-1.1) [place] (4'h) {4};
		\node at (0.2,-1.9) [place] (5'h) {5}; 		     	
		\draw (-5.4,-2.5) node {$\mathcal{D}$};
		\draw (-2.9,-2.5) node {$\mathcal{H}$};
		\draw (-1.3,-.2) node {$\mathcal{D}_1$};
		\draw (-1.3,-1.1) node {$\mathcal{D}_2$};     		  
		\draw (-0.9,-2) node {$\mathcal{D}_{3}$};
		\draw [thick,->] (2h) to [out=110,in=-45] (3h);     		     	
		\draw [thick,<->] (3h) to [out=-140,in=65] (1h);    		     	
		\draw [thick,->] (1'h) to [out=20,in=160] (2'h);     		    
		\draw [thick,->] (2'h) to [out=-160,in=-20] (1'h);
		\draw [thick,->] (3'h) to [out=160,in=20] (4'h);     		     	
		\draw [thick,->] (4'h) to [out=-20,in=-160] (3'h);     		     	
		\draw [thick,->] (1c) to [out=-5,in=120] (2c);     	
		\draw [thick,->] (3c) to [out=160,in=20] (4c);    		     	
		\draw [thick,->] (4c) to [out=-20,in=-160] (3c);
		\draw [thick,->] (3c) to [out=135,in=-35] (5c);     		     	
		\draw [thick,->] (2c) to [out=160,in=-45] (1c);    	
		\draw [thick,->] (1c) to [out=-170,in=65] (5c);  
		\draw [thick,->] (5c) to [out=30,in=-130] (1c);  
		\draw [thick,->] (5c) to [out=0,in=180] (2c);	     	
		\end{tikzpicture}
		\caption{Example to illustrate the interaction digraph and the three sub-digraphs of a given side-information digraph of the two-sender problem given in Example \ref{exampp1}.}
		\label{examp1}
	\end{center}
\end{figure}
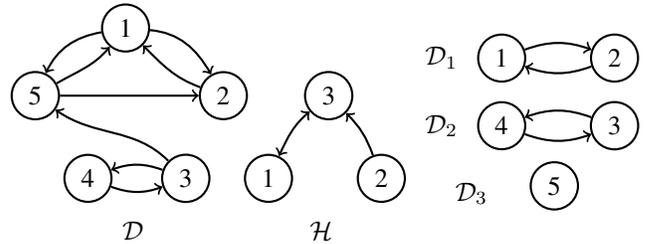
\begin{figure*}[!htbp]
	\begin{center}
		\includegraphics[width=41pc]{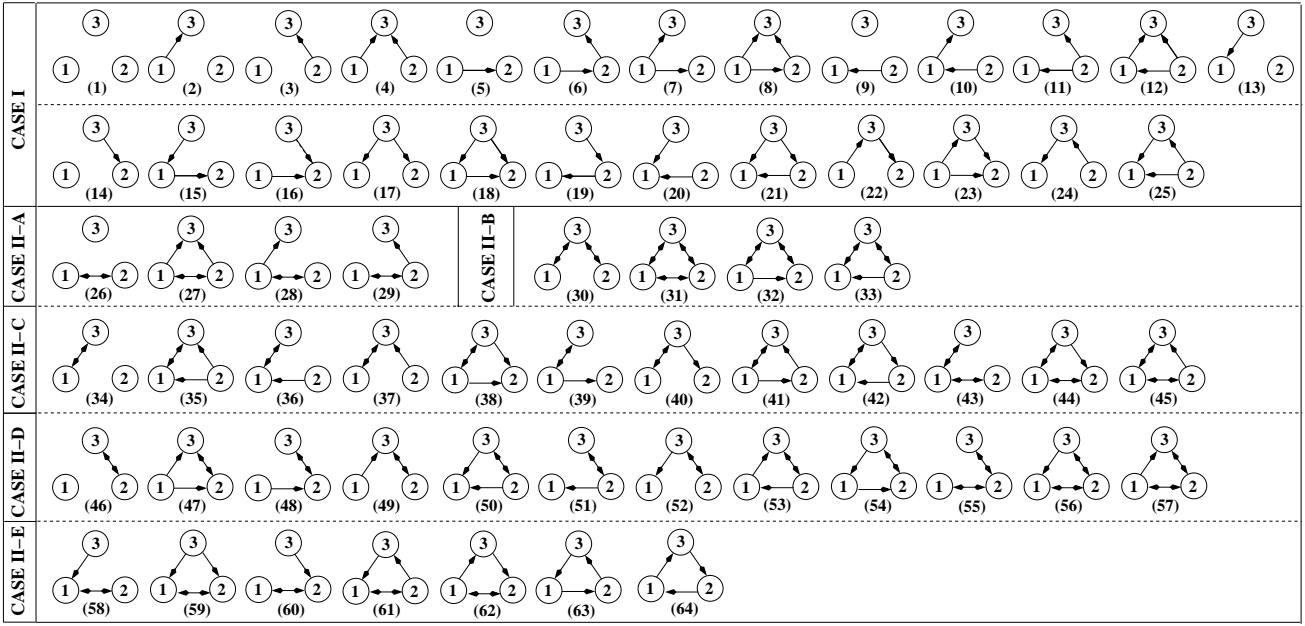}
		\caption{Enumeration of all the possible interactions between the sub-digraphs $\mathcal{D}_1$, $\mathcal{D}_2$, and $\mathcal{D}_3$, denoted by the interaction digraph $\mathcal{H}$.}
		\label{interenum}
	\end{center}
	\hrule
\end{figure*}
\begin{exmp}
	Consider the TUICP with $m=5$ messages, where the $i$th receiver demands $i$th message ${\bf{x}}_{i}$. Sender $\mathcal{S}_1$ has $\mathcal{M}_1=\{{\bf{x}}_1,{\bf{x}}_2,{\bf{x}}_5\}$. Sender $\mathcal{S}_2$ has $\mathcal{M}_2=\{{\bf{x}}_3,{\bf{x}}_4,{\bf{x}}_5\}$. Hence, $\mathcal{P}_1=\{{\bf{x}}_1,{\bf{x}}_2\}$, $\mathcal{P}_2=\{{\bf{x}}_3,{\bf{x}}_4\}$, and $\mathcal{P}_3=\{{\bf{x}}_5\}$. The side-information of all the receivers are given as follows: $\mathcal{K}_{1}=\{{\bf{x}}_{2},{\bf{x}}_{5}\}$, $\mathcal{K}_{2}=\{{\bf{x}}_{1}\}$, $\mathcal{K}_{3}=\{{\bf{x}}_{4},{\bf{x}}_{5}\}$, $\mathcal{K}_{4}=\{{\bf{x}}_{3}\}$, $\mathcal{K}_{5}=\{{\bf{x}}_{1},{\bf{x}}_{2}\}$. The side-information digraph  $\mathcal{D}$ and the corresponding interaction digraph  $\mathcal{H}$ are shown in Figure \ref{examp1}.  The vertex-induced sub-digraphs $\mathcal{D}_1$, $\mathcal{D}_2$, and $\mathcal{D}_3$ induced by the messages in $\mathcal{P}_1$, $\mathcal{P}_2$, and $\mathcal{P}_3$ respectively are also shown in the figure. Note that the interaction $\mathcal{D}_{3} \rightarrow \mathcal{D}_{1}$ is fully-participated. Others are partially-participated interactions. The interaction digraph shown in Figure \ref{examp1} is $\mathcal{H}_{37}$ as given in Figure \ref{interenum}. Hence, the side-information digraph $\mathcal{D}$ can also be denoted as $\mathcal{D}^{37}$.
	\label{exampp1}
\end{exmp}

\par The following notations are required for the construction of a two-sender index code from single-sender index codes. Let $\mathcal{C}_1$ and $\mathcal{C}_2$ be  two codewords of length $l_1$ and $l_2$ respectively. $\mathcal{C}_1 \oplus \mathcal{C}_2$ denotes the bit-wise XOR of $\mathcal{C}_1$ and $\mathcal{C}_2$ after zero-padding the shorter message at the least significant positions to match the length of the longer message. The resulting length of the codeword is  $max(l_1,l_2)$. For example, if $\mathcal{C}_1=1010$, and $\mathcal{C}_2=110$, then $\mathcal{C}_1 \oplus \mathcal{C}_2 = 0110$.  $\mathcal{C}[a:b]$ denotes the vector obtained by picking the bits from bit position $a$ to bit position $b$, starting from the most significant position of the codeword $\mathcal{C}$, with $a,b \in [l]$, $l$ being the length of $\mathcal{C}$. For example $\mathcal{C}_1[2:4]=010$.

\section{Confusion Graphs and the Two-sender graph coloring}
\par In this section, we review confusion graphs and recapitulate some results on the two-sender graph coloring of confusion graphs provided in \cite{CTLO}. We also provide some definitions which are used to describe the symmetries of the confusion graph. Then, we state and prove a lemma related to these symmetries, which is used to establish the main results in this paper.

\par Consider a unicast ICP (single-sender or two-sender) described by a side-information digraph $\mathcal{D}$ with $m$ messages. Let ${\bf{x}} = ({\bf{u}}_1,...,{\bf{u}}_m)$ and ${\bf{x}}' = ({\bf{v}}_1,...,{\bf{v}}_m)$ be two tuples of realizations of $m$ messages, where ${\bf{u}}_i,{\bf{v}}_i \in \mathbb{F}_{2}^{t}, \forall i \in [m]$. The tuples ${\bf{x}}$ and ${\bf{x'}}$  are said to be confusable at the $i$th receiver, if ${\bf{u}}_i \neq {\bf{v}}_i$ and ${\bf{u}}_j = {\bf{v}}_j$ for all $j$ such that ${\bf{x}}_j \in \mathcal{K}_i$. Two tuples are said to be confusable if they are confusable at some reciever. Confusion at a receiver refers to existence of confusable tuples at the receiver. In index coding, two tuples of realizations of $m$ messages  that are confusable cannot be encoded to the same codeword as one of the receivers cannot decode the demanded message succesfully using the broadcasted codeword and its side-information. The confusion graph is defined as follows.
\begin{defn}[Confusion graph, \cite{CTLO}] 
	The confusion graph of a side-information digraph $\mathcal{D}$ with $m$ vertices and $t$-bit messages is an undirected graph, denoted by $\Gamma_{t}(\mathcal{D})=(\mathcal{V}(\Gamma_{t}(\mathcal{D})),\mathcal{E}(\Gamma_{t}(\mathcal{D})))$, where $\mathcal{V}(\Gamma_{t}(\mathcal{D})) = \{ {\bf{x}} : {\bf{x}} \in \mathbb{F}_{2}^{mt} \}$ and $\mathcal{E}(\Gamma_{t}(\mathcal{D})) = \{ ({\bf{x}},{\bf{x}}') :  {\bf{x}}$ and ${\bf{x}}'$ are confusable$\}$.
\end{defn}

We require the following notations and definitions to state our results in the following section. For any TUICP with the side-information digraph $\mathcal{D}^{k}$, $k \in [64]$, and any message set tuple $\mathcal{P}$ with $t$-bit messages for any finite $t$, we have the following definitions.

\begin{defn}
	$\mathcal{D}_{u \circ v}^{k,\mathcal{P}}$ denotes the side-information digraph whose confusion graph $\Gamma_t(\mathcal{D}_{u \circ v}^{k,\mathcal{P}})$ is given by  $\Gamma_t(\mathcal{D}_{u}^{k,\mathcal{P}}) \circ  \Gamma_t(\mathcal{D}_{v}^{k,\mathcal{P}}) $, $u \neq v, u,v \in \{1,2,3\}$ (where the  lexicographic graph product denoted using $``\circ"$ is given in Definition \ref{deflexi}).
	\label{deflexi1}
\end{defn}

\begin{defn}
	$\mathcal{D}_{u*v}^{k,\mathcal{P}}$ denotes the side-information digraph whose confusion graph $\Gamma_t(\mathcal{D}_{u*v}^{k,\mathcal{P}})$ is given by  $\Gamma_t(\mathcal{D}_{u}^{k,\mathcal{P}}) * \Gamma_t(\mathcal{D}_{v}^{k,\mathcal{P}})$, $u \neq v, u,v \in \{1,2,3\}$ (where the disjunctive graph product denoted using $``*"$ is given in Definition \ref{defdisj}). 
	\label{defdisj1}
\end{defn}

\par We use the following notation used in \cite{CTLO}, in the context of confusion graphs. Each realization of the bits of concatenated messages belonging to $\mathcal{P}_1$, $\mathcal{P}_2$, and $\mathcal{P}_3$, (i.e., each element of $\mathbb{F}_2^{tm_1}$, $\mathbb{F}_2^{tm_2}$ and $\mathbb{F}_2^{tm_3}$ respectively), is represented by unique tuples  ${\bf{b}}_{\mathcal{P}_{1}}^{i}$,  ${\bf{b}}_{\mathcal{P}_{2}}^{j}$, and ${\bf{b}}_{\mathcal{P}_{3}}^{k}$  respectively. Superscripts $i,i' \in [2^{tm_1}],$ $j,j' \in [2^{tm_2}],$ and $k,k' \in [2^{tm_3}]$ are used to represent possible realizations of concatenation of all the messages belonging to $\mathcal{P}_1, \mathcal{P}_2$, and $\mathcal{P}_3$ of $tm_1$, $tm_2$, and $tm_3$ bits respectively. Each message tuple $({\bf{x}}_1,...,{\bf{x}}_m)$ can be uniquely written as $({\bf{b}}_{\mathcal{P}_{1}}^{i},{\bf{b}}_{\mathcal{P}_{2}}^{j},{\bf{b}}_{\mathcal{P}_{3}}^{k})$ for some $i,j$, and $k$. Hence, each vertex of the confusion graph can be labelled by a unique tuple $({\bf{b}}_{\mathcal{P}_{1}}^{i},{\bf{b}}_{\mathcal{P}_{2}}^{j},{\bf{b}}_{\mathcal{P}_{3}}^{k})$.
\par Consider a valid coloring of the confusion graph $\Gamma_{t}(\mathcal{D})$ with a set of colors $\mathcal{J}$. This results in $|\mathcal{J}|$ sets of vertices, such that all the vertices in a given set are colored with a unique color. Each set of vertices is independent and can be coded into the same codeword, as no pair of vertices in the given set are confusable. Hence, sending a codeword is equivalent to sending the identity of a color. As $\chi(\Gamma_{t}(\mathcal{D}))$ is the minimum number of colors required, the optimal codelength is $\lceil \log_2 \chi(\Gamma_{t}(\mathcal{D}))  \rceil$ bits. The classical graph coloring of the confusion graph may not yield the optimal codelength for the two-sender unicast ICP, as there is a constraint on the  coloring due to the non-availability of some messages at one of the senders. To account for the encoding done by the two senders, two-sender graph coloring had been introduced in \cite{CTLO}.
\begin{defn}[Two-sender graph coloring of $\Gamma_{t}(\mathcal{D})$, \cite{CTLO}] 	
	Let two onto functions $J_1: \mathbb{F}_{2}^{tm_1} \times \mathbb{F}_{2}^{tm_3} \rightarrow \mathcal{J}_1$ and $J_2 : \mathbb{F}_{2}^{tm_2} \times \mathbb{F}_{2}^{tm_3} \rightarrow \mathcal{J}_2$ be the coloring functions carried out by senders $\mathcal{S}_1$ and $\mathcal{S}_2$ respectively. A proper two-sender graph coloring of $\Gamma_{t}(\mathcal{D})$ is an onto function $J_0: \mathbb{F}_{2}^{tm_1} \times \mathbb{F}_{2}^{tm_2} \times \mathbb{F}_{2}^{tm_3} \rightarrow \mathcal{J}_1 \times \mathcal{J}_2$ where $J_o(({\bf{b}}_{\mathcal{P}_{1}}^{i},{\bf{b}}_{\mathcal{P}_{2}}^{j},{\bf{b}}_{\mathcal{P}_{3}}^{k}))=(J_{1}({\bf{b}}_{\mathcal{P}_{1}}^{i},{\bf{b}}_{\mathcal{P}_{3}}^{k}),J_2({\bf{b}}_{\mathcal{P}_{2}}^{j},{\bf{b}}_{\mathcal{P}_{3}}^{k}))$ such that if $({\bf{b}}_{\mathcal{P}_{1}}^{i},{\bf{b}}_{\mathcal{P}_{2}}^{j},{\bf{b}}_{\mathcal{P}_{3}}^{k})$ and $({\bf{b}}_{\mathcal{P}_{1}}^{i'},{\bf{b}}_{\mathcal{P}_{2}}^{j'},{\bf{b}}_{\mathcal{P}_{3}}^{k'})$ are adjacent vertices of $\Gamma_{t}(\mathcal{D})$,
	then $J_o(({\bf{b}}_{\mathcal{P}_{1}}^{i},{\bf{b}}_{\mathcal{P}_{2}}^{j},{\bf{b}}_{\mathcal{P}_{3}}^{k})) \neq J_o(({\bf{b}}_{\mathcal{P}_{1}}^{i'},{\bf{b}}_{\mathcal{P}_{2}}^{j'},{\bf{b}}_{\mathcal{P}_{3}}^{k'}))$.
\end{defn}	
\par Note that the two ordered pairs of colors given by $(c_1,c_2)$ and $(c_{1}',c_{2}')$, where $c_i,c_{i}' \in \mathcal{J}_{i}$, with $i \in \{1,2\}$ are said to be different iff $c_1 \neq c_{1}'$ or $c_2 \neq c_{2}'$  or both. We recapitulate some basic results on the two-sender graph coloring stated as Lemmas 1 to 4 in \cite{CTLO}. These lemmas are used in coloring the confusion graph according to the two-sender graph coloring.
\begin{lem}[Lemma 1, \cite{CTLO}]
	For any two vertices $({\bf{b}}_{\mathcal{P}_{1}}^{i},{\bf{b}}_{\mathcal{P}_{2}}^{j},{\bf{b}}_{\mathcal{P}_{3}}^{k})$ and $({\bf{b}}_{\mathcal{P}_{1}}^{i'},{\bf{b}}_{\mathcal{P}_{2}}^{j},{\bf{b}}_{\mathcal{P}_{3}}^{k})$ in  $\Gamma_{t}(\mathcal{D})$ which are confusable, if  $J_o(({\bf{b}}_{\mathcal{P}_{1}}^{i},{\bf{b}}_{\mathcal{P}_{2}}^{j},{\bf{b}}_{\mathcal{P}_{3}}^{k})) = (c_1,c_2)$  and $J_o(({\bf{b}}_{\mathcal{P}_{1}}^{i'},{\bf{b}}_{\mathcal{P}_{2}}^{j},{\bf{b}}_{\mathcal{P}_{3}}^{k})) = (c_{1}',c_{2}')$, then we must have $c_1 \neq c_{1}'$ and $c_2 = c_{2}'$ for some $c_1,c_{1}' \in \mathcal{J}_{1}$ and $c_2,c_{2}' \in \mathcal{J}_{2}$.
	\label{lemThapaColor1}
\end{lem}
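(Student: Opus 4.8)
The plan is to read off both conclusions directly from the product structure of the two-sender coloring function $J_o$ together with the properness requirement. The decisive observation is that the second coordinate of $J_o$ is produced by $J_2$, which by definition depends only on the $\mathcal{P}_2$- and $\mathcal{P}_3$-components of a vertex and is blind to the $\mathcal{P}_1$-component. The two vertices in the statement, $({\bf{b}}_{\mathcal{P}_{1}}^{i},{\bf{b}}_{\mathcal{P}_{2}}^{j},{\bf{b}}_{\mathcal{P}_{3}}^{k})$ and $({\bf{b}}_{\mathcal{P}_{1}}^{i'},{\bf{b}}_{\mathcal{P}_{2}}^{j},{\bf{b}}_{\mathcal{P}_{3}}^{k})$, agree exactly in their $\mathcal{P}_2$-index $j$ and their $\mathcal{P}_3$-index $k$, differing only in the $\mathcal{P}_1$-index. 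So first I would write out the two second-coordinate colors explicitly, $c_2 = J_2({\bf{b}}_{\mathcal{P}_{2}}^{j},{\bf{b}}_{\mathcal{P}_{3}}^{k})$ and $c_2' = J_2({\bf{b}}_{\mathcal{P}_{2}}^{j},{\bf{b}}_{\mathcal{P}_{3}}^{k})$, and note that $J_2$ is being evaluated at identical arguments in both cases, whence $c_2 = c_2'$ is immediate.

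For the remaining claim $c_1 \neq c_1'$, I would appeal to properness. Because the two vertices are confusable, they are adjacent in $\Gamma_t(\mathcal{D})$, so the definition of a proper two-sender graph coloring forces $J_o$ to assign them distinct values, i.e.\ $(c_1,c_2) \neq (c_1',c_2')$. Since two ordered pairs of colors are declared different iff they disagree in at least one coordinate, and I have already established $c_2 = c_2'$, the only way the pairs can differ is through the first coordinate; hence $c_1 \neq c_1'$, completing the argument.

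I do not expect any genuine obstacle here: the lemma is an immediate consequence of (i) the fact that $J_2$ ignores the $\mathcal{P}_1$-data, which pins down $c_2 = c_2'$, and (ii) the properness of the coloring applied to an adjacent pair, which then isolates the disagreement to the first coordinate. The only point warranting a moment of care is the bookkeeping that the arguments fed to $J_2$ are truly identical for the two vertices, which is exactly what it means for them to differ solely in the $\mathcal{P}_1$-coordinate; once that is flagged, each of the two conclusions follows in a single line.
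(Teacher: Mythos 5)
Your proof is correct. There is nothing in this paper to compare it against: the statement is recapitulated without proof from reference \cite{CTLO} (it is Lemma 1 there), and your argument is precisely the intended one-line justification from the definition of a proper two-sender coloring --- the second coordinate of $J_o$ is $J_2({\bf{b}}_{\mathcal{P}_{2}}^{j},{\bf{b}}_{\mathcal{P}_{3}}^{k})$, evaluated at identical arguments for both vertices, so $c_2 = c_{2}'$ is forced, and properness on the adjacent (confusable) pair then pushes the required disagreement into the first coordinate, giving $c_1 \neq c_{1}'$. No step is missing.
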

\begin{lem}[Lemma 2, \cite{CTLO}]
	For any two vertices $({\bf{b}}_{\mathcal{P}_{1}}^{i},{\bf{b}}_{\mathcal{P}_{2}}^{j},{\bf{b}}_{\mathcal{P}_{3}}^{k})$ and $({\bf{b}}_{\mathcal{P}_{1}}^{i},{\bf{b}}_{\mathcal{P}_{2}}^{j'},{\bf{b}}_{\mathcal{P}_{3}}^{k})$ in  $\Gamma_{t}(\mathcal{D})$ which are confusable, if  $J_o(({\bf{b}}_{\mathcal{P}_{1}}^{i},{\bf{b}}_{\mathcal{P}_{2}}^{j},{\bf{b}}_{\mathcal{P}_{3}}^{k})) = (c_1,c_2)$  and $J_o(({\bf{b}}_{\mathcal{P}_{1}}^{i},{\bf{b}}_{\mathcal{P}_{2}}^{j'},{\bf{b}}_{\mathcal{P}_{3}}^{k})) = (c_{1}',c_{2}')$, then we must have $c_1 = c_{1}'$ and $c_2 \neq c_{2}'$ for some $c_1,c_{1}' \in \mathcal{J}_{1}$ and $c_2,c_{2}' \in \mathcal{J}_{2}$.
	\label{lemThapaColor2}
\end{lem}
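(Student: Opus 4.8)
The plan is to read both coordinates of the color pairs directly off the definition of $J_o$ and then invoke properness of the coloring; the key structural fact is that $\mathcal{S}_1$'s coloring function is blind to the $\mathcal{P}_2$-realization. First I would note that the two vertices $({\bf{b}}_{\mathcal{P}_{1}}^{i},{\bf{b}}_{\mathcal{P}_{2}}^{j},{\bf{b}}_{\mathcal{P}_{3}}^{k})$ and $({\bf{b}}_{\mathcal{P}_{1}}^{i},{\bf{b}}_{\mathcal{P}_{2}}^{j'},{\bf{b}}_{\mathcal{P}_{3}}^{k})$ agree in their $\mathcal{P}_1$- and $\mathcal{P}_3$-components (both carry the indices $i$ and $k$) and differ only in the $\mathcal{P}_2$-component ($j$ versus $j'$). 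Since $\mathcal{S}_1$'s coloring function $J_1$ takes only the $\mathcal{P}_1$- and $\mathcal{P}_3$-realizations as its arguments, evaluating $J_o(\cdot)=(J_1(\cdot),J_2(\cdot))$ at the two vertices gives $c_1 = J_1({\bf{b}}_{\mathcal{P}_{1}}^{i},{\bf{b}}_{\mathcal{P}_{3}}^{k}) = c_1'$. This disposes of the first of the two claimed conclusions with no further computation.

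For the second conclusion, I would use that the two vertices are confusable and hence adjacent in $\Gamma_t(\mathcal{D})$. Properness of the two-sender graph coloring requires $J_o$ to assign distinct color pairs to adjacent vertices, so $(c_1,c_2) \neq (c_1',c_2')$. By the convention recorded just after the definition of $J_o$, two ordered pairs of colors are different iff they differ in at least one coordinate; thus $c_1 \neq c_1'$ or $c_2 \neq c_2'$. Combining this with the equality $c_1 = c_1'$ established above forces $c_2 \neq c_2'$, which is exactly the desired statement.

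There is essentially no hard step here: the lemma is a direct consequence of the decomposed form of $J_o$ in which the first coordinate cannot detect a change confined to the $\mathcal{P}_2$-slot. The only point deserving care is the formal meaning of ``different'' for ordered color pairs, which the excerpt already fixes; once that is invoked the deduction is immediate. I note for completeness that the companion Lemma~\ref{lemThapaColor1} is the mirror image of this argument under the interchange of the two senders, with $J_2$ (blind to the $\mathcal{P}_1$-slot) playing the role that $J_1$ plays here, so that a change confined to $\mathcal{P}_1$ leaves $c_2$ fixed and forces $c_1 \neq c_1'$.
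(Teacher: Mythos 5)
Your proof is correct. Note that the paper does not actually prove this lemma --- it is stated as a recapitulation of Lemma~2 of \cite{CTLO}, so there is no in-paper proof to compare against; your argument (read $c_1 = J_1({\bf{b}}_{\mathcal{P}_{1}}^{i},{\bf{b}}_{\mathcal{P}_{3}}^{k}) = c_1'$ directly off the decomposed form of $J_o$, then use adjacency of the confusable pair and the stated convention for inequality of ordered color pairs to force $c_2 \neq c_{2}'$) is exactly the immediate deduction from the definition of the two-sender graph coloring, and is the standard argument for this fact.
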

\begin{lem}[Lemma 3, \cite{CTLO}]
	For any two vertices $({\bf{b}}_{\mathcal{P}_{1}}^{i},{\bf{b}}_{\mathcal{P}_{2}}^{j},{\bf{b}}_{\mathcal{P}_{3}}^{k})$ and $({\bf{b}}_{\mathcal{P}_{1}}^{i'},{\bf{b}}_{\mathcal{P}_{2}}^{j'},{\bf{b}}_{\mathcal{P}_{3}}^{k})$ in  $\Gamma_{t}(\mathcal{D})$ which are confusable due to some vertices in $\mathcal{D}_{1}$ and $\mathcal{D}_{2}$, if  $J_o(({\bf{b}}_{\mathcal{P}_{1}}^{i},{\bf{b}}_{\mathcal{P}_{2}}^{j},{\bf{b}}_{\mathcal{P}_{3}}^{k})) = (c_1,c_2)$  and $J_o(({\bf{b}}_{\mathcal{P}_{1}}^{i'},{\bf{b}}_{\mathcal{P}_{2}}^{j'},{\bf{b}}_{\mathcal{P}_{3}}^{k})) = (c_{1}',c_{2}')$, then we must have $c_1 \neq c_{1}'$ and $c_2 \neq c_{2}'$ for some $c_1,c_{1}' \in \mathcal{J}_{1}$ and $c_2,c_{2}' \in \mathcal{J}_{2}$.
	\label{lemThapaColor3}
\end{lem}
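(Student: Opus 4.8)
The plan is to decompose the simultaneous change in the $\mathcal{P}_1$ and $\mathcal{P}_2$ blocks into two single-block changes and then invoke Lemma~\ref{lemThapaColor1} and Lemma~\ref{lemThapaColor2} separately. Write $v=({\bf{b}}_{\mathcal{P}_{1}}^{i},{\bf{b}}_{\mathcal{P}_{2}}^{j},{\bf{b}}_{\mathcal{P}_{3}}^{k})$ and $v'=({\bf{b}}_{\mathcal{P}_{1}}^{i'},{\bf{b}}_{\mathcal{P}_{2}}^{j'},{\bf{b}}_{\mathcal{P}_{3}}^{k})$; these share the $\mathcal{P}_3$ block but differ on both the $\mathcal{P}_1$ and $\mathcal{P}_2$ blocks. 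I read the hypothesis ``confusable due to some vertices in $\mathcal{D}_1$ and $\mathcal{D}_2$'' as asserting the existence of a receiver $r_1$ in $\mathcal{D}_1$ (its demand lying in $\mathcal{P}_1$) and a receiver $r_2$ in $\mathcal{D}_2$ (its demand lying in $\mathcal{P}_2$) at each of which $v$ and $v'$ are confusable. The two auxiliary vertices I would introduce are $w_1=({\bf{b}}_{\mathcal{P}_{1}}^{i'},{\bf{b}}_{\mathcal{P}_{2}}^{j},{\bf{b}}_{\mathcal{P}_{3}}^{k})$, which differs from $v$ only in the $\mathcal{P}_1$ block, and $w_2=({\bf{b}}_{\mathcal{P}_{1}}^{i},{\bf{b}}_{\mathcal{P}_{2}}^{j'},{\bf{b}}_{\mathcal{P}_{3}}^{k})$, which differs from $v$ only in the $\mathcal{P}_2$ block.

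First I would show that $v$ and $w_1$ are confusable at $r_1$. Since $r_1$ demands a $\mathcal{P}_1$ message on which $v$ and $v'$ already disagree, and $w_1$ inherits the $\mathcal{P}_1$ block $i'$ of $v'$, the demand of $r_1$ still differs between $v$ and $w_1$; and because $v$ and $w_1$ differ from one another only on coordinates inside the $\mathcal{P}_1$ block (agreeing everywhere else), the confusability of $v$ and $v'$ at $r_1$ forces every side-information coordinate of $r_1$ to agree for $v$ and $w_1$ as well. Thus $v$ and $w_1$ are confusable and differ only in the $\mathcal{P}_1$ block. Since $J_2$ reads only the $(\mathcal{P}_2,\mathcal{P}_3)$ blocks, which are identical for $v$ and $w_1$, Lemma~\ref{lemThapaColor1} applies and yields $c_1\neq c_1'$ (with the $\mathcal{J}_2$ colors equal). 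By the mirror-image argument with $r_2$ and $w_2$, using that $J_1$ depends only on the $(\mathcal{P}_1,\mathcal{P}_3)$ blocks which agree for $v$ and $w_2$, Lemma~\ref{lemThapaColor2} yields $c_2\neq c_2'$. Combining the two gives the claim.

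The hard part will be the bookkeeping that the single-block auxiliary vertices genuinely preserve confusability at the chosen witness: I must check that freezing the $\mathcal{P}_2$ block back to its $v$-value leaves both the disagreement in $r_1$'s demanded coordinate and the agreement on all of $r_1$'s side-information coordinates intact, and symmetrically for $r_2$. This is exactly where the hypothesis that there are \emph{separate} witnesses in $\mathcal{D}_1$ and in $\mathcal{D}_2$ is used; without both witnesses one could only recover a single inequality, as in Lemmas~\ref{lemThapaColor1} and~\ref{lemThapaColor2}.
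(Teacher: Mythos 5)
Your proof is correct, but note that the paper does not actually prove Lemma~\ref{lemThapaColor3}: it is recapitulated from \cite{CTLO} (as are Lemmas~\ref{lemThapaColor1}, \ref{lemThapaColor2}, and \ref{lemThapaColor4}) and used as an imported fact, so there is no in-paper proof to compare against. Your argument supplies the missing derivation, and it is the natural one: reduce the two-block change to two one-block changes via the auxiliary vertices $w_1=({\bf{b}}_{\mathcal{P}_{1}}^{i'},{\bf{b}}_{\mathcal{P}_{2}}^{j},{\bf{b}}_{\mathcal{P}_{3}}^{k})$ and $w_2=({\bf{b}}_{\mathcal{P}_{1}}^{i},{\bf{b}}_{\mathcal{P}_{2}}^{j'},{\bf{b}}_{\mathcal{P}_{3}}^{k})$, check that confusability at the $\mathcal{D}_1$-witness (resp.\ the $\mathcal{D}_2$-witness) survives freezing the other block back to its $v$-value, and then invoke Lemma~\ref{lemThapaColor1} (resp.\ Lemma~\ref{lemThapaColor2}). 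Your confusability bookkeeping is right: coordinates inside $\mathcal{P}_1$ agree for $v$ and $w_1$ because they agree for $v$ and $v'$, and all other coordinates of $v$ and $w_1$ are literally equal. Your reading of the hypothesis --- separate witnesses in $\mathcal{D}_1$ and in $\mathcal{D}_2$ --- is also the intended one, and indeed without both witnesses only one of the two inequalities could be forced.

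One point of exposition to tighten. After Lemma~\ref{lemThapaColor1} gives $J_1(v)\neq J_1(w_1)$, the step converting this into $c_1\neq c_1'$ is the identification $J_1(w_1)=J_1(v')=c_1'$, which holds because $J_1$ reads only the $(\mathcal{P}_1,\mathcal{P}_3)$ blocks and these are identical for $w_1$ and $v'$. What you cite at that point --- that $J_2$ reads only the $(\mathcal{P}_2,\mathcal{P}_3)$ blocks, which are identical for $v$ and $w_1$ --- is true, but it is the wrong invariance for this bridge: it justifies $J_2(v)=J_2(w_1)$, not $J_1(w_1)=c_1'$. The same swap occurs in the mirror step, where the needed fact is $J_2(w_2)=J_2(v')=c_2'$, justified by $J_2$ reading only the $(\mathcal{P}_2,\mathcal{P}_3)$ blocks, identical for $w_2$ and $v'$. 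Both identifications are immediate from the definition of $J_o$ as a pair of coloring functions each depending on only two blocks, so this is a wording fix rather than a gap, but in a written-up version the bridge should be stated for the pairs $(w_1,v')$ and $(w_2,v')$ explicitly.
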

\begin{lem}[Lemma 4, \cite{CTLO}]
	For any two vertices $({\bf{b}}_{\mathcal{P}_{1}}^{i},{\bf{b}}_{\mathcal{P}_{2}}^{j},{\bf{b}}_{\mathcal{P}_{3}}^{k})$ and $({\bf{b}}_{\mathcal{P}_{1}}^{i},{\bf{b}}_{\mathcal{P}_{2}}^{j},{\bf{b}}_{\mathcal{P}_{3}}^{k'})$ in  $\Gamma_{t}(\mathcal{D})$ which are confusable, if  $J_o(({\bf{b}}_{\mathcal{P}_{1}}^{i},{\bf{b}}_{\mathcal{P}_{2}}^{j},{\bf{b}}_{\mathcal{P}_{3}}^{k})) = (c_1,c_2)$  and $J_o(({\bf{b}}_{\mathcal{P}_{1}}^{i},{\bf{b}}_{\mathcal{P}_{2}}^{j},{\bf{b}}_{\mathcal{P}_{3}}^{k'})) = (c_{1}',c_{2}')$, then we must have either $c_1 \neq c_{1}'$, or $c_2 \neq c_{2}'$, or both, for some $c_1,c_{1}' \in \mathcal{J}_{1}$ and $c_2,c_{2}' \in \mathcal{J}_{2}$.
	\label{lemThapaColor4}
\end{lem}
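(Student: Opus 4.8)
The plan is to read the conclusion off directly from the definition of a proper two-sender graph coloring together with the defining property of edges in the confusion graph; the only content worth marshalling is why this case yields a weaker conclusion than Lemmas~\ref{lemThapaColor1}--\ref{lemThapaColor3}. First I would observe that, by the definition of the confusion graph $\Gamma_t(\mathcal{D})$, its edge set consists exactly of pairs of confusable tuples. Hence the two confusable vertices $({\bf{b}}_{\mathcal{P}_{1}}^{i},{\bf{b}}_{\mathcal{P}_{2}}^{j},{\bf{b}}_{\mathcal{P}_{3}}^{k})$ and $({\bf{b}}_{\mathcal{P}_{1}}^{i},{\bf{b}}_{\mathcal{P}_{2}}^{j},{\bf{b}}_{\mathcal{P}_{3}}^{k'})$ are adjacent in $\Gamma_t(\mathcal{D})$.

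Next I would invoke the defining requirement of a proper two-sender graph coloring: adjacent vertices of $\Gamma_t(\mathcal{D})$ must receive distinct colors under $J_o$. Thus $J_o(({\bf{b}}_{\mathcal{P}_{1}}^{i},{\bf{b}}_{\mathcal{P}_{2}}^{j},{\bf{b}}_{\mathcal{P}_{3}}^{k})) \neq J_o(({\bf{b}}_{\mathcal{P}_{1}}^{i},{\bf{b}}_{\mathcal{P}_{2}}^{j},{\bf{b}}_{\mathcal{P}_{3}}^{k'}))$, that is, $(c_1,c_2) \neq (c_1',c_2')$. By the convention recalled just after the coloring definition, two ordered pairs of colors are unequal precisely when they differ in at least one coordinate, which gives $c_1 \neq c_1'$ or $c_2 \neq c_2'$ (or both), as claimed.

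The only subtlety --- and the reason the conclusion here is weaker than in Lemmas~\ref{lemThapaColor1} and \ref{lemThapaColor2} --- is that both sub-colorings $J_1$ and $J_2$ take the $\mathcal{P}_3$-component as an argument. In Lemma~\ref{lemThapaColor1} the two vertices share their $\mathcal{P}_2$- and $\mathcal{P}_3$-components, so $J_2$ forces $c_2 = c_2'$ and adjacency then pins $c_1 \neq c_1'$; symmetrically in Lemma~\ref{lemThapaColor2}. Here the two vertices differ only in the $\mathcal{P}_3$-component $k$ versus $k'$, which feeds into both $J_1$ and $J_2$, so neither coordinate can be forced to agree, and the strongest assertion available is that the pairs differ somewhere. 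I expect no genuine obstacle: the statement is essentially a specialization of the coloring axiom to one particular adjacent pair, and the only ``work'' is noting why no stronger coordinate-wise conclusion can hold.
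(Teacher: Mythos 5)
Your proof is correct: the paper itself states this lemma without proof (it is recapitulated from \cite{CTLO}), and your argument is exactly the intended one --- confusability means adjacency in $\Gamma_t(\mathcal{D})$, the properness of $J_o$ forces $(c_1,c_2) \neq (c_1',c_2')$, and the stated convention on inequality of ordered pairs unpacks this to $c_1 \neq c_1'$ or $c_2 \neq c_2'$ or both. Your side remark on why no stronger coordinate-wise conclusion is available here (the differing $\mathcal{P}_3$-component enters both $J_1$ and $J_2$, unlike in Lemmas \ref{lemThapaColor1} and \ref{lemThapaColor2}) is also accurate.
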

\par The optimal broadcast rate for the TUICP with $t$-bit messages for every finite $t$ is given by Theorem 2 in \cite{CTLO}.
\begin{lem}[Theorem 2, \cite{CTLO}]
	\begin{equation} \beta_{t}(\mathcal{D},\mathcal{P})=\underset{\mathcal{J}_1,\mathcal{J}_2}{min}{ \frac{ \lceil \log_{2}{|\mathcal{J}_{1}|} \rceil + \lceil \log_{2}{|\mathcal{J}_{2}|} \rceil}{t} }
	\end{equation} 
	\label{lemtwocolorrate}	
\end{lem}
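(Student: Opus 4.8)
The plan is to prove the equality by establishing the two inequalities $\beta_t(\mathcal{D},\mathcal{P}) \le \min_{\mathcal{J}_1,\mathcal{J}_2}(\lceil \log_2|\mathcal{J}_1|\rceil + \lceil \log_2|\mathcal{J}_2|\rceil)/t$ and $\beta_t(\mathcal{D},\mathcal{P}) \ge \min_{\mathcal{J}_1,\mathcal{J}_2}(\lceil \log_2|\mathcal{J}_1|\rceil + \lceil \log_2|\mathcal{J}_2|\rceil)/t$ separately, by exhibiting a correspondence between valid two-sender index codes and proper two-sender graph colorings of $\Gamma_t(\mathcal{D})$. The guiding principle, as in the single-sender case, is that ``transmitting a codeword is the same as announcing a color,'' and that properness of the coloring is exactly what decodability demands.

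For achievability (the $\le$ direction), I would start from a pair of coloring functions $J_1 : \mathbb{F}_2^{tm_1} \times \mathbb{F}_2^{tm_3} \to \mathcal{J}_1$ and $J_2 : \mathbb{F}_2^{tm_2} \times \mathbb{F}_2^{tm_3} \to \mathcal{J}_2$ realizing a proper two-sender coloring $J_o = (J_1,J_2)$. Because $\mathcal{M}_1 = \mathcal{P}_1 \cup \mathcal{P}_3$ and $\mathcal{M}_2 = \mathcal{P}_2 \cup \mathcal{P}_3$, sender $\mathcal{S}_1$ knows exactly $(\mathbf{b}_{\mathcal{P}_1}^i,\mathbf{b}_{\mathcal{P}_3}^k)$ and $\mathcal{S}_2$ knows exactly $(\mathbf{b}_{\mathcal{P}_2}^j,\mathbf{b}_{\mathcal{P}_3}^k)$, so each sender can evaluate its own coloring function and broadcast the index of the resulting color using $\lceil \log_2 |\mathcal{J}_s| \rceil$ bits. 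I would then verify decodability: a receiver $i$ holding $\mathcal{K}_i$ and the received color pair considers all message tuples consistent with $\mathcal{K}_i$ and the announced pair; any two such tuples differing in $\mathbf{x}_i$ agree on every $\mathbf{x}_j \in \mathcal{K}_i$ and are therefore confusable at receiver $i$, so by properness they receive distinct color pairs, a contradiction. Hence $\mathbf{x}_i$ is uniquely determined, and the induced code has rate $(\lceil \log_2 |\mathcal{J}_1| \rceil + \lceil \log_2 |\mathcal{J}_2| \rceil)/t$, giving the upper bound.

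For the converse (the $\ge$ direction), I would take any valid two-sender index code $(\mathbb{E}_1,\mathbb{E}_2,\{\mathbb{D}_i\})$ with sub-codeword lengths $p_1,p_2$ and set $J_1 := \mathbb{E}_1$ and $J_2 := \mathbb{E}_2$, restricting codomains to their images $\mathcal{J}_1,\mathcal{J}_2$ so that the maps are onto; this is legitimate precisely because $\mathbb{E}_s$ depends only on the messages in $\mathcal{M}_s$, which is what the two-sender coloring definition permits. I would then argue that $J_o = (J_1,J_2)$ is proper: if two confusable (adjacent) tuples received the same color pair, both senders would emit identical sub-codewords on them, so the receiver at which they are confusable could not distinguish them, contradicting correct decoding. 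Since $|\mathcal{J}_s| \le 2^{p_s}$ gives $\lceil \log_2 |\mathcal{J}_s| \rceil \le p_s$, the minimum over colorings is at most $(p_1+p_2)/t$; minimizing over all valid codes (which by definition attains $\beta_t(\mathcal{D},\mathcal{P})$) yields the matching lower bound, and combining the two directions gives equality.

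The main obstacle, and the step I would treat most carefully, is the exact equivalence between the decodability constraint and the properness of the two-sender coloring, in particular verifying that ``confusable $\Rightarrow$ distinct color pair'' holds in both directions, and that the ceiling is applied per sender, so that one obtains $\lceil \log_2 |\mathcal{J}_1| \rceil + \lceil \log_2 |\mathcal{J}_2| \rceil$ rather than $\lceil \log_2(|\mathcal{J}_1|\,|\mathcal{J}_2|) \rceil$. This last point is where the two-sender structure genuinely differs from the single-sender result $\lceil \log_2 \chi(\Gamma_t(\mathcal{D})) \rceil$: because the senders encode and transmit separately (orthogonally in time), each contributes an independently rounded number of bits, so the rate is governed by the factored coloring sizes $(|\mathcal{J}_1|,|\mathcal{J}_2|)$ rather than by a single chromatic number. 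Lemmas \ref{lemThapaColor1}--\ref{lemThapaColor4} would support the properness verification by pinning down, case by case, how the two color coordinates must change for each type of confusability.
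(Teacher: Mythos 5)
This statement is imported by the paper as a known result (Theorem~2 of \cite{CTLO}); the paper itself gives no proof of it, so there is no internal proof to compare against. Your argument is correct and is essentially the standard code--coloring equivalence that the cited reference establishes: the achievability direction (a proper two-sender coloring induces a code, decodability following from properness because any two candidate tuples sharing the receiver's side-information and color pair but differing in the demand would be confusable), the converse direction (the encoders $\mathbb{E}_1,\mathbb{E}_2$ induce onto coloring functions of exactly the required functional form, properness following from correct decoding, with $|\mathcal{J}_s|\le 2^{p_s}$ giving $\lceil \log_2|\mathcal{J}_s|\rceil \le p_s$), and you correctly isolate the genuinely two-sender subtlety that orthogonal transmissions force the per-sender rounding $\lceil \log_2|\mathcal{J}_1|\rceil + \lceil \log_2|\mathcal{J}_2|\rceil$ rather than $\lceil \log_2(|\mathcal{J}_1||\mathcal{J}_2|)\rceil$.
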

We illustrate the two-sender graph coloring of the confusion graph using an example. 
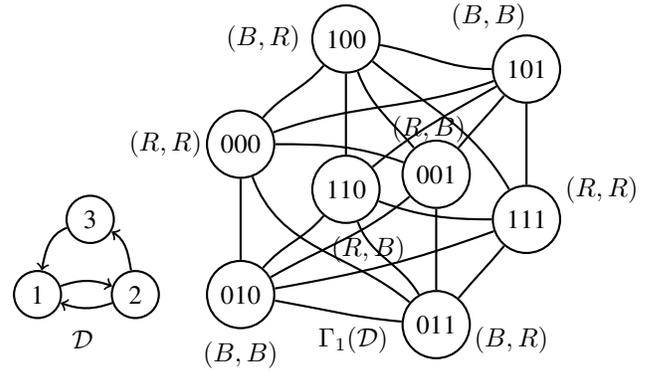
\begin{figure}[!htbp]
	\begin{center}
		\begin{tikzpicture}
		[place/.style={circle,draw=black!100,thick}]
		\node at (-6.5,-2.2) [place] (1'h) {1}; 	 	
		\node at (-5.2,-2.2) [place] (2'h) {2};
		\node at (-5.8,-1.2) [place] (3'h) {3};    		     	  	
		\draw (-5.9,-2.8) node {$\mathcal{D}$};     	
		\draw [thick,->] (1'h) to [out=20,in=160] (2'h);     		    
		\draw [thick,->] (2'h) to [out=-160,in=-20] (1'h);
		\draw [thick,->] (3'h) to [out=-160,in=80] (1'h);  
		\draw [thick,->] (2'h) to [out=100,in=-30] (3'h); 
		\node at (-3.8,-0.2) [place] (0'h) {000}; 	 	
		\node at (-3.8,-2.2) [place] (2'h) {010};
		\node at (-2.4,1.2) [place] (4'h) {100};
		\node at (-2.4,-.8) [place] (6'h) {110}; 	 	
		\node at (-1.2,-0.6) [place] (1'h) {001};
		\node at (-1.2,-2.6) [place] (3'h) {011}; 		
		\node at (0,0.8) [place] (5'h) {101};
		\node at (0,-1.2) [place] (7'h) {111};     
		\draw [thick,-] (2'h) to [out=-10,in=175] (3'h);			
		\draw [thick,-] (6'h) to [out=-20,in=180] (7'h);	
		\draw [thick,-] (0'h) to [out=0,in=160] (1'h);			
		\draw [thick,-] (4'h) to [out=-10,in=-180] (5'h);
		\draw [thick,-] (0'h) to [out=50,in=-130] (4'h);			
		\draw [thick,-] (1'h) to [out=50,in=-130] (5'h);
		\draw [thick,-] (2'h) to [out=50,in=-130] (6'h);			
		\draw [thick,-] (3'h) to [out=50,in=-130] (7'h);
		\draw [thick,-] (0'h) to [out=-90,in=90] (2'h);			
		\draw [thick,-] (4'h) to [out=-90,in=90] (6'h);
		\draw [thick,-] (5'h) to [out=-90,in=90] (7'h);			
		\draw [thick,-] (1'h) to [out=-90,in=90] (3'h);
		\draw [thick,-] (2'h) to [out=10,in=-160] (7'h);			
		\draw [thick,-] (0'h) to [out=20,in=-160] (5'h);
		\draw [thick,-] (6'h) to [out=-70,in=120] (3'h);			
		\draw [thick,-] (4'h) to [out=-70,in=130] (1'h);
		\draw [thick,-] (2'h) to [out=30,in=-140] (1'h);			
		\draw [thick,-] (6'h) to [out=40,in=-150] (5'h);
		\draw [thick,-] (0'h) to [out=-70,in=140] (3'h);			
		\draw [thick,-] (4'h) to [out=-40,in=120] (7'h);
		\draw (-2.3,-2.8) node {$\Gamma_1(\mathcal{D})$};
		\draw (-4.8,-0.2) node {$(R,R)$};
		\draw (-1.3,0) node {$(R,B)$};  
		\draw (-3.8,-3) node {$(B,B)$};
		\draw (-0.2,-2.8) node {$(B,R)$};       
		\draw (-3.5,1.2) node {$(B,R)$};
		\draw (-.5,1.5) node {$(B,B)$};  
		\draw (-2.1,-1.6) node {$(R,B)$};
		\draw (1,-0.8) node {$(R,R)$}; 
		\end{tikzpicture}
		
		\caption{Side-information digraph and two-sender graph coloring of its confusion graph for the two-sender problem given in Example \ref{exampp2}.}
		\label{con_graph}
	\end{center}
\end{figure}
\begin{exmp}
	Consider the following TUICP with $t=1$ and $m=3$ messages. $S_1$ has $\mathcal{M}_{1}=\{{\bf{x}}_1,{\bf{x}}_3\}$ and $S_2$ has $\mathcal{M}_{2}=\{{\bf{x}}_2,{\bf{x}}_3\}$. Hence, $\mathcal{P}_1={\bf{x}}_1$, $\mathcal{P}_2={\bf{x}}_2$, and $\mathcal{P}_3={\bf{x}}_3$. The side-information of the  receivers are as follows: $\mathcal{K}_1={\bf{x}}_2,$ $\mathcal{K}_2=\{{\bf{x}}_1,{\bf{x}}_3\}$, and  $\mathcal{K}_3={\bf{x}}_1$. Hence, we have $\mathcal{V}(\mathcal{D}_i)={\bf{x}}_i,$ $i \in \{1,2,3\}$. The side-information digraph and its confusion graph are shown in Figure \ref{con_graph}. The confusion graph $\Gamma_{1}(\mathcal{D})$ has $2^m=8$ vertices representing all possible binary tuples $({\bf{b}}_{\mathcal{P}_{1}}^{i},{\bf{b}}_{\mathcal{P}_{2}}^{j},{\bf{b}}_{\mathcal{P}_{3}}^{k})$, $i,j,k \in \{1,2\}$, of length three. Edges are drawn between every two confusable tuples. For example, there is an edge between $(0,1,0)$ and $(1,1,1)$ due to confusion at receiver $1$. After the construction of the confusion graph, all the vertices are colored by each sender. In the ordered pair of colors, the first color is associated with $\mathcal{S}_1$ and the second color is associated with $\mathcal{S}_2$. Color RED is denoted as R and BLUE is denoted as B in Figure \ref{con_graph}. Coloring is done based on Lemmas \ref{lemThapaColor1} to \ref{lemThapaColor4}. Hence, if $S_1$ colors $(0,1,0)$ with BLUE color, it must color $(1,1,1)$ with another color, say RED. Similarly, we can color other vertices using the two-sender graph coloring. It can be easily verified that only two colors are required at each sender to color the confusion graph. The two-sender coloring shown in Figure \ref{con_graph} can be easily verified to be a valid two-sender coloring.  Hence, $\mathcal{J}_1=\mathcal{J}_2=\{$RED,BLUE$\}$. Assuming a map from the colors to binary bits that maps RED to $1$ and BLUE to $0$, the tuple $(0,0,0)$ can be mapped to the codeword $11$, the tuple $(0,0,1)$ can be mapped to the codeword $10$, and so on. Thus the two-sender index code consists of codewords given by $\{00,01,10,11\}$. The first bit of the codeword is sent by $\mathcal{S}_1$, and the second bit is sent by $\mathcal{S}_2$. Thus, $\beta_{t}(\mathcal{D},\mathcal{P}) \leq 2$ for any $t \geq 1$. As each sender has a single message which is not present with the other sender, each of them must atleast send one bit. Thus,  $\beta_{t}(\mathcal{D},\mathcal{P}) \geq 2$. Hence, $\beta_{t}(\mathcal{D},\mathcal{P}) = 2$.
	\label{exampp2}
\end{exmp} 	 
To exploit the symmetries of confusion graphs and facilitate the two-sender graph coloring, the vertices of $\Gamma_{t}(\mathcal{D})$ can be grouped in different ways. Let $\mathcal{B}_{{\bf{b}}_{\mathcal{P}_{1}}^{i}} \triangleq \{({\bf{b}}_{\mathcal{P}_{1}}^{i},{\bf{b}}_{\mathcal{P}_{2}}^{j},{\bf{b}}_{\mathcal{P}_{3}}^{k}) :$ for some fixed ${\bf{b}}_{\mathcal{P}_{1}}^{i}\}$ with cardinality $2^{tm_2} \times 2^{tm_3}$. Similarly, $\mathcal{B}_{{\bf{b}}_{\mathcal{P}_{2}}^{j}}$ and $\mathcal{B}_{{\bf{b}}_{\mathcal{P}_{3}}^{k}}$ are also defined. The subgraph of $\Gamma_{t}(\mathcal{D})$ induced by the vertices belonging to  $\mathcal{B}_{{\bf{b}}_{\mathcal{P}_{3}}^{k}}$ is called the $k$th $K$-block. There are $2^{tm_{3}}$ $K$-blocks. Similarly, the subgraph of $\Gamma_{t}(\mathcal{D})$ induced by the vertices belonging to  $\mathcal{B}_{{\bf{b}}_{\mathcal{P}_{2}}^{j}}$ is called the $j$th $J$-block. There are $2^{tm_{2}}$ $J$-blocks. Similarly, $i$th $I$-block $\mathcal{B}_{{\bf{b}}_{\mathcal{P}_{1}}^{i}}$ is also defined. We define three types of inter-block edges.
\begin{defn}(Inter-block edges)
	An edge between two vertices, each belonging to a different $I$-block of $\Gamma_{t}(\mathcal{D})$ is called an inter-$I$-block edge. An edge between two vertices, each belonging to a different $J$-block of $\Gamma_{t}(\mathcal{D})$ is called an inter-$J$-block edge.
	An edge between two vertices, each belonging to a different $K$-block of $\Gamma_{t}(\mathcal{D})$ is called an inter-$K$-block edge.
\end{defn}
\par We require the following lemma to exploit the symmetry in the  two-sender graph coloring of the confusion graph. 

\begin{lem} All $I$-blocks in a given confusion graph are isomorphic to each other. Similarly, all $J$-blocks are isomorphic to each other, and all $K$-blocks are isomorphic to each other, in a given confusion graph. 
	\label{lemblockiso}
\end{lem}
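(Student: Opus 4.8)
The plan is to exhibit, for each pair of blocks of the same type, an explicit graph isomorphism between them and verify that it preserves edges. I will prove the claim for $K$-blocks in detail; the arguments for $I$-blocks and $J$-blocks are identical after permuting the roles of the coordinates $\mathcal{P}_1,\mathcal{P}_2,\mathcal{P}_3$. Recall that a vertex of $\Gamma_t(\mathcal{D})$ is labelled $({\bf{b}}_{\mathcal{P}_{1}}^{i},{\bf{b}}_{\mathcal{P}_{2}}^{j},{\bf{b}}_{\mathcal{P}_{3}}^{k})$, and the $k$th $K$-block is the subgraph induced by all vertices sharing the same third coordinate ${\bf{b}}_{\mathcal{P}_{3}}^{k}$. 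Fix two values $k,k' \in [2^{tm_3}]$; I want to show the $k$th and $k'$th $K$-blocks are isomorphic.

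First I would define the candidate bijection. Let ${\bf{b}}_{\mathcal{P}_{3}}^{k}, {\bf{b}}_{\mathcal{P}_{3}}^{k'} \in \mathbb{F}_2^{tm_3}$ be the two fixed realizations, and set $\boldsymbol{\delta} = {\bf{b}}_{\mathcal{P}_{3}}^{k} \oplus {\bf{b}}_{\mathcal{P}_{3}}^{k'}$. Define $f$ from the $k$th $K$-block to the $k'$th $K$-block by
\begin{equation}
f\bigl(({\bf{b}}_{\mathcal{P}_{1}}^{i},{\bf{b}}_{\mathcal{P}_{2}}^{j},{\bf{b}}_{\mathcal{P}_{3}}^{k})\bigr) = ({\bf{b}}_{\mathcal{P}_{1}}^{i},{\bf{b}}_{\mathcal{P}_{2}}^{j},{\bf{b}}_{\mathcal{P}_{3}}^{k'}),
\end{equation}
i.e.\ $f$ leaves the $\mathcal{P}_1$ and $\mathcal{P}_2$ coordinates untouched and replaces the third coordinate by ${\bf{b}}_{\mathcal{P}_{3}}^{k'}$ (equivalently, XORs the full message tuple with $(\mathbf{0},\mathbf{0},\boldsymbol{\delta})$). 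This map is clearly a bijection between the two vertex sets, since both blocks are indexed by the same pairs $(i,j)$ ranging over $[2^{tm_1}]\times[2^{tm_2}]$.

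The key step is to verify that $f$ preserves edges, i.e.\ that two vertices in the $k$th block are confusable if and only if their images in the $k'$th block are confusable. This is where I would argue carefully using the definition of confusability. Two vertices ${\bf{x}},{\bf{x}}'$ are adjacent iff they are confusable at some receiver $r$, meaning ${\bf{u}}_r \neq {\bf{v}}_r$ while the entries agree on all of $\mathcal{K}_r$. The crucial observation is that confusability is invariant under adding the \emph{same} constant tuple to both ${\bf{x}}$ and ${\bf{x}}'$: for any fixed $\boldsymbol{\Delta}$, the tuples ${\bf{x}}\oplus\boldsymbol{\Delta}$ and ${\bf{x}}'\oplus\boldsymbol{\Delta}$ have exactly the same pattern of coordinatewise agreements and disagreements as ${\bf{x}}$ and ${\bf{x}}'$, because $(\,{\bf{u}}_r\oplus\boldsymbol{\Delta}_r) \ne ({\bf{v}}_r\oplus\boldsymbol{\Delta}_r)$ iff ${\bf{u}}_r\ne{\bf{v}}_r$, and likewise for the agreement condition on $\mathcal{K}_r$. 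Applying this with $\boldsymbol{\Delta}=(\mathbf{0},\mathbf{0},\boldsymbol{\delta})$ shows that the pair $\{{\bf{x}},{\bf{x}}'\}$ is confusable iff $\{f({\bf{x}}),f({\bf{x}}')\}$ is confusable, so $f$ is an isomorphism. The same translation-invariance argument, applied with $\boldsymbol{\Delta}$ supported on the $\mathcal{P}_1$ (resp.\ $\mathcal{P}_2$) coordinates, handles the $I$-blocks (resp.\ $J$-blocks), completing the proof.

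I do not anticipate a genuine obstacle here; the content is entirely the translation-invariance of the confusability relation under coordinatewise XOR, which follows directly from the definition. The one point requiring slight care is to state explicitly that $f$ maps \emph{into} the $k'$th block (the third coordinate of every image equals ${\bf{b}}_{\mathcal{P}_{3}}^{k'}$ by construction) and that it is onto that block, so that it is a bona fide isomorphism of the induced subgraphs rather than merely an edge-preserving injection.
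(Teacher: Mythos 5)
Your proof is correct, and the bijection you construct (keep the unfixed coordinates, overwrite the block label) is the same one the paper uses implicitly; the difference lies in how edge preservation is justified. The paper localizes the edges inside a block: since both endpoints of an intra-$I$-block edge share the same ${\bf{b}}_{\mathcal{P}_1}^{i}$ sub-label, any such edge must arise from confusion at a receiver in $\mathcal{V}(\mathcal{D}_2 \cup \mathcal{D}_3)$, and the confusability condition at such a receiver involves only the $({\bf{b}}_{\mathcal{P}_2}^{j},{\bf{b}}_{\mathcal{P}_3}^{k})$ sub-labels (any side-information drawn from $\mathcal{P}_1$ agrees automatically within the block), so the same pairs of $(j,k)$-labels are adjacent in every block. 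You instead prove that XOR by a fixed tuple $\boldsymbol{\Delta}$ preserves the coordinatewise agreement/disagreement pattern of any two message tuples, hence preserves confusability at every receiver; your map is then the restriction to one block of a translation automorphism of the whole graph $\Gamma_t(\mathcal{D})$. Your route is slightly more general and more self-contained: it establishes that $\Gamma_t(\mathcal{D})$ is vertex-transitive under coordinatewise XOR, with the lemma as an immediate corollary, and it needs no discussion of which receivers cause which edges. The paper's receiver-based bookkeeping is not wasted effort, however: the classification of edges by the sub-digraph responsible for them is exactly the machinery reused in the appendix proofs of Theorems \ref{thmH1620}--\ref{thmH2325}, so the paper's phrasing doubles as groundwork for the later arguments, whereas your symmetry argument would still leave that classification to be done there.
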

\begin{proof}
	We prove the lemma for all $I$-blocks. The proof is similar for all $J$-blocks and all $K$-blocks.
	\par Every vertex in any $i$th $I$-block induced by the vertices in $\mathcal{B}_{{\bf{b}}_{\mathcal{P}_1}^{i}}$ has the same ${\bf{b}}_{\mathcal{P}_1}^{i}$ sub-label, $i \in [2^{tm_1}]$. Thus, any edge in any $i$th $I$-block is only due to confusion at the vertices (i.e, receivers) belonging to $\mathcal{V}(\mathcal{D}_{2} \cup \mathcal{D}_{3})$. Every $I$-block has $2^{tm_2} \times 2^{tm_3}$ vertices. If 
	there is an edge given by  $(({\bf{b}}_{\mathcal{P}_{1}}^{i},{\bf{b}}_{\mathcal{P}_{2}}^{j},{\bf{b}}_{\mathcal{P}_{3}}^{k}),({\bf{b}}_{\mathcal{P}_{1}}^{i},{\bf{b}}_{\mathcal{P}_{2}}^{j'},{\bf{b}}_{\mathcal{P}_{3}}^{k'}))$ in $i$th $I$-block induced by $\mathcal{B}_{{\bf{b}}_{\mathcal{P}_{1}}^{i}}$, then
	there is an edge given by  $(({\bf{b}}_{\mathcal{P}_{1}}^{i'},{\bf{b}}_{\mathcal{P}_{2}}^{j},{\bf{b}}_{\mathcal{P}_{3}}^{k}),({\bf{b}}_{\mathcal{P}_{1}}^{i'},{\bf{b}}_{\mathcal{P}_{2}}^{j'},{\bf{b}}_{\mathcal{P}_{3}}^{k'}))$ in $i'$th $I$-block induced by  $\mathcal{B}_{{\bf{b}}_{\mathcal{P}_{1}}^{i'}}$, $i \neq i'$ 
	and vice versa, as the confusion is only due to tuples $({\bf{b}}_{\mathcal{P}_{2}}^{j},{\bf{b}}_{\mathcal{P}_{3}}^{k})$ and $({\bf{b}}_{\mathcal{P}_{2}}^{j'},{\bf{b}}_{\mathcal{P}_{3}}^{k'})$, at some vertex  
	belonging to $\mathcal{V}(\mathcal{D}_{2} \cup \mathcal{D}_{3})$. Hence, all $I$-blocks are isomorphic to each other.
\end{proof}

\section{An achievable broadcast rate with finite length messages for some sub-cases of CASE I and all sub-cases of CASE II-E} 

In this section, we provide an  achievable broadcast rate with $t$-bit messages for any finite $t$, for some sub-cases of Case I with fully-participated interactions, using a valid two-sender graph coloring of the confusion graph. No non-trivial achievable broadcast rate  was known for these sub-cases with $t$-bit messages for any finite $t$. In particular, an achievable broadcast rate $\beta_{t}(\mathcal{D}^{k},\mathcal{P})$ is given for any TUICP with side-information digraph $\mathcal{D}^k$, $k \in \{16,18,20,21,23,25\}$, having fully-participated interactions between its sub-digraphs $\mathcal{D}_{i}^{k,\mathcal{P}}$, $i \in \{1,2,3\}$. 

We also provide an achievable broadcast rate with $t$-bit messages for any finite $t$, for all the  sub-cases of Case II-E with fully-participated interactions, using a code-construction based on the optimal codes for the single-sender sub-problems. This provides a tighter upper bound on $\beta_{t}(\mathcal{D}^{k},\mathcal{P})$, $k \in \{58,59,\cdots,64\}$, when compared to that given in \cite{CTLO}. 

We first review the related results known prior to this paper. The following conjecture was stated in \cite{CTLO}. 
\begin{conj}[Conjecture 1, \cite{CTLO}]
	For any side-information digraph $\mathcal{D}^k$, $k \in \{13,14,\cdots,25\}$, having any type of interaction (i.e., either fully-participated or partially-participated) between its sub-digraphs $\mathcal{D}_{i}^{k,\mathcal{P}}$, $i \in \{1,2,3\}$, for any $\mathcal{P}$, and $t$-bit messages for any finite $t$,
	\[    
	\beta_{t}(\mathcal{D}^{k},\mathcal{P})=\beta_{t}(\mathcal{D}_{1}^{k,\mathcal{P}})+\beta_{t}(\mathcal{D}_{2}^{k,\mathcal{P}})+\beta_{t}(\mathcal{D}_{3}^{k,\mathcal{P}})+\epsilon/t, 
	\]	   
	for some $\epsilon \in \{-2,-1,0\}$.
\end{conj}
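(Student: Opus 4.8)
The plan is to reduce everything to chromatic numbers of confusion graphs through Lemma~\ref{lemtwocolorrate}. Write $\chi_i \triangleq \chi(\Gamma_t(\mathcal{D}_i^{k,\mathcal{P}}))$ and $\ell_i \triangleq \lceil \log_2 \chi_i \rceil$, so that $\beta_t(\mathcal{D}_i^{k,\mathcal{P}}) = \ell_i/t$, and let $p^\star \triangleq \min_{\mathcal{J}_1,\mathcal{J}_2}\bigl(\lceil \log_2 |\mathcal{J}_1| \rceil + \lceil \log_2 |\mathcal{J}_2| \rceil\bigr)$ be the minimum over all valid two-sender colorings, so that $\beta_t(\mathcal{D}^k,\mathcal{P}) = p^\star/t$. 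Since $p^\star$ and $\ell_1+\ell_2+\ell_3$ are integers, the conjecture is equivalent to the two-sided estimate $\ell_1+\ell_2+\ell_3-2 \le p^\star \le \ell_1+\ell_2+\ell_3$; the admissible set $\epsilon \in \{-2,-1,0\}$ then follows at once.

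For the upper bound I would build a valid two-sender coloring from optimal single-sender colorings $f_i$ of $\Gamma_t(\mathcal{D}_i^{k,\mathcal{P}})$. Exploiting that the messages of $\mathcal{D}_3^{k,\mathcal{P}}$ are held by both senders, I set $J_1({\bf{b}}_{\mathcal{P}_1}^{i},{\bf{b}}_{\mathcal{P}_3}^{k}) = (f_1({\bf{b}}_{\mathcal{P}_1}^{i}),f_3({\bf{b}}_{\mathcal{P}_3}^{k}))$ and $J_2({\bf{b}}_{\mathcal{P}_2}^{j},{\bf{b}}_{\mathcal{P}_3}^{k}) = f_2({\bf{b}}_{\mathcal{P}_2}^{j})$. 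Propriety is then verified against the four confusability patterns of Lemmas~\ref{lemThapaColor1}--\ref{lemThapaColor4}: because all interactions are fully participated, a confusion at any receiver of $\mathcal{D}_i^{k,\mathcal{P}}$ forces the coordinates of the two tuples indexing that receiver's side information to agree, so the coordinate of $(J_1,J_2)$ carrying $f_i$ is forced to change; acyclicity of $\mathcal{H}_k$ ensures these requirements are mutually consistent. This gives $|\mathcal{J}_1| \le \chi_1\chi_3$ and $|\mathcal{J}_2| \le \chi_2$, hence $p^\star \le \lceil \log_2(\chi_1\chi_3)\rceil + \ell_2 \le \ell_1+\ell_2+\ell_3$, i.e. $\epsilon \le 0$.

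For the lower bound, note that any valid two-sender coloring is a fortiori a proper single-sender coloring of $\Gamma_t(\mathcal{D}^k)$ using $|\mathcal{J}_1|\,|\mathcal{J}_2|$ colors, so $|\mathcal{J}_1|\,|\mathcal{J}_2| \ge \chi(\Gamma_t(\mathcal{D}^k))$ and therefore $p^\star \ge \lceil \log_2 \chi(\Gamma_t(\mathcal{D}^k)) \rceil$. It thus suffices to prove $\lceil \log_2 \chi(\Gamma_t(\mathcal{D}^k)) \rceil \ge \ell_1+\ell_2+\ell_3-2$. Using the block isomorphism of Lemma~\ref{lemblockiso} and the fully-participated, acyclic interaction pattern, I would first identify $\Gamma_t(\mathcal{D}^k)$ as an iterated lexicographic and/or disjunctive product of the three sub-confusion graphs (cf. Definitions~\ref{deflexi1} and~\ref{defdisj1}), exactly matching the composite graphs appearing in Table~\ref{table1}, and then lower-bound the chromatic number of that product.

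The hard step is this final chromatic lower bound. The chromatic number of both graph products is only submultiplicative: even the multiplicativity of clique numbers $\omega(\mathcal{G}_1 \circ \mathcal{G}_2) = \omega(\mathcal{G}_1)\,\omega(\mathcal{G}_2)$ does not transfer to $\chi$ for confusion graphs, whose chromatic number can far exceed their clique number, and $\chi(\mathcal{G}_1 * \mathcal{G}_2)$ may lie well below $\chi(\mathcal{G}_1)\,\chi(\mathcal{G}_2)$. Consequently $\lceil \log_2 \chi(\Gamma_t(\mathcal{D}_1^{k,\mathcal{P}}) * \Gamma_t(\mathcal{D}_3^{k,\mathcal{P}})) \rceil$ need not reach $\ell_1+\ell_3-2$, and the desired bound $p^\star \ge \ell_1+\ell_2+\ell_3-2$ can break. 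Whether the special combinatorics of index-coding confusion graphs forbids a gap larger than two bits --- vindicating the conjecture --- or whether some sub-problem realizes such a gap is precisely the crux; the achievable rates of Table~\ref{table1}, assembled from these very products, suggest the gap can be large enough to push $p^\star$ below $\ell_1+\ell_2+\ell_3-2$, which is why I expect this step to be where the argument stalls and a counterexample to emerge.
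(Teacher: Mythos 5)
You have not proved this statement, but neither has anyone else: it is Conjecture~1 of \cite{CTLO}, which the paper only restates and explicitly does not prove --- indeed the paper remarks that its own results ``indicate a possibility for the Conjecture~1 \ldots to be false.'' So the relevant comparison is not between your argument and a proof in the paper (there is none), but between your diagnosis of where a proof must succeed or fail and the paper's.

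On that score your analysis is accurate and essentially coincides with the paper's position. Your upper-bound half is sound: the generic two-sender coloring $J_1 = (f_1,f_3)$, $J_2 = f_2$ assembled from optimal single-sender colorings is proper (in fact for any type of interaction --- confusability at a receiver of $\mathcal{D}_i^{k,\mathcal{P}}$ already forces the restrictions of the two tuples to be confusable in $\Gamma_t(\mathcal{D}_i^{k,\mathcal{P}})$, so neither full participation nor acyclicity is needed), and it yields $p^\star \le \ell_1+\ell_2+\ell_3$, i.e.\ $\epsilon \le 0$. This parallels what the paper actually establishes: Theorems \ref{thmH1620}--\ref{thmH2325} color the confusion graph block-wise so that $\mathcal{S}_1$ spends $\chi(\Gamma_{t}(\mathcal{D}_{1*3}^{16,\mathcal{P}}))$ (or the corresponding lexicographic-product chromatic number) colors and $\mathcal{S}_2$ spends $\chi(\Gamma_t(\mathcal{D}_{2}^{16,\mathcal{P}}))$, and Corollary \ref{corthmH1620} converts this, via the submultiplicativity Lemmas \ref{disjprod} and \ref{lemlexi}, into $p_{t}(\mathcal{D}^{k},\mathcal{P}) \leq \beta_{t}(\mathcal{D}_{1}^{k,\mathcal{P}})+\beta_{t}(\mathcal{D}_{2}^{k,\mathcal{P}})+\beta_{t}(\mathcal{D}_{3}^{k,\mathcal{P}})+\epsilon/t$ with $\epsilon \in \{-1,0\}$. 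Your lower-bound half, $p^\star \ge \ell_1+\ell_2+\ell_3-2$, is exactly the open crux, and the failure mode you isolate is the one the paper identifies: the remark following Corollary \ref{corthmH1620} states that the conjecture would be disproved by exhibiting digraphs $\mathcal{D}_{1}$, $\mathcal{D}_{3}$ with $\lceil \log_2\chi(\Gamma_{t}(\mathcal{D}_{1*3})) \rceil$ strictly smaller than $\lceil \log_2\chi(\Gamma_{t}(\mathcal{D}_{1})) \rceil + \lceil \log_2\chi(\Gamma_{t}(\mathcal{D}_{3})) \rceil -2$ --- precisely your scenario of a product graph whose chromatic number drops more than two bits below the product of chromatic numbers. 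Like the paper, you stop short of producing such an example, so the conjecture remains open in both directions; the only inaccuracy in your write-up is the suggestion that Table \ref{table1} already ``suggests the gap can be large enough,'' whereas the paper is careful to claim only the possibility, providing no instance where the inequality is strict.
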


The conjecture was stated considering that a minimum of  $\chi(\Gamma_t(\mathcal{D}_{1}^{k,\mathcal{P}}))\chi(\Gamma_t(\mathcal{D}_{2}^{k,\mathcal{P}}))\chi(\Gamma_t(\mathcal{D}_{3}^{k,\mathcal{P}}))$ ordered pairs of colors are required to color the confusion graph $\Gamma_t(\mathcal{D}^{k})$, $k \in \{1,2,\cdots,12\}$, according to the two-sender graph coloring. In this section, we show that there is a possibility to color the confusion graph  $\Gamma_t(\mathcal{D}^{k})$, $k \in \{16,18,20,21,23,25\}$, with comparitively less number of ordered pairs of colors. However, we do not provide an instance of the two-sender problem where our achievable broadcast rate is strictly less than that stated in the conjecture. The results are of importance as no non-trivial achievable broadcast rates with finite length messages are given for these cases in the literature.

The following achievable broadcast rate with $t$-bit messages for any finite $t$, for any two-sender problem belonging to Case II-E with fully-participated interactions was stated in Theorem 9 in \cite{CTLO} as an upper bound on $\beta_{t}(\mathcal{D}^k,\mathcal{P})$ with $k \in \{58,59,\cdots,64\}$. 

\begin{equation}
\begin{split} \beta_{t}(\mathcal{D}^k,\mathcal{P}) \leq & ~ ~ max(\beta_{t}(\mathcal{D}_{1}^{k,\mathcal{P}}),\beta_{t}(\mathcal{D}_{3}^{k,\mathcal{P}})) +\\ & max(\beta_{t}(\mathcal{D}_{2}^{k,\mathcal{P}}),\beta_{t}(\mathcal{D}_{3}^{k,\mathcal{P}})) .
\end{split}
\label{CTLOUBND}
\end{equation}

This result uses a code-construction based on any optimal codes (with $t$-bit messages) for the single-sender sub-problems. In this section, we provide a tighter upper bound for $\beta_{t}(\mathcal{D}^k,\mathcal{P})$ with $k \in \{58,59,\cdots,64\}$  by using another code-construction based on the same optimal codes for the single-sender sub-problems.

We first make the following observation which halves the number of sub-cases to be proved in Case I.

\begin{obs}
	Observe that the interaction digraphs $\mathcal{H}_k$, $k \in \{20,21,25\}$, are obtained from $\mathcal{H}_{k'}$, $k' \in \{16,18,23\}$, respectively, by interchanging the labels of vertices $1$ and $2$. If the corresponding TUICPs have the same set of sub-digraphs, i.e., $\mathcal{D}_1^{k,\mathcal{P}}$, $\mathcal{D}_2^{k,\mathcal{P}}$, and $\mathcal{D}_3^{k,\mathcal{P}}$ are same as $\mathcal{D}_1^{k',\mathcal{P}}$, $\mathcal{D}_2^{k',\mathcal{P}}$, and $\mathcal{D}_3^{k',\mathcal{P}}$ respectively, and  all the interactions are fully-participated interactions, then $\mathcal{D}^{k}$ can be obtained from $\mathcal{D}^{k'}$ by interchanging the labels of sub-digraphs $\mathcal{D}_1^{k',\mathcal{P}}$ and  $\mathcal{D}_2^{k',\mathcal{P}}$. Hence, an achievable broadcast rate for any TUICP with $\mathcal{H}_k$, $k \in \{20,21,25\}$, is obtained using that of a TUICP with $\mathcal{H}_{k'}$, $k' \in \{16,18,23\}$, respectively, by interchanging the labels $1$ and $2$ in the expression for the broadcast rate. 
	\label{obs1} 
\end{obs}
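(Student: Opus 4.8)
The plan is to exploit the inherent symmetry of the TUICP under interchanging the two senders. Let $\sigma$ be the involution on $\{1,2,3\}$ that swaps $1$ and $2$ and fixes $3$. Interchanging the roles of $\mathcal{S}_1$ and $\mathcal{S}_2$ amounts to replacing the message tuple $\mathcal{P}=(\mathcal{P}_1,\mathcal{P}_2,\mathcal{P}_3)$ by $\mathcal{P}'=(\mathcal{P}_2,\mathcal{P}_1,\mathcal{P}_3)$; this leaves $\mathcal{M}_1\cup\mathcal{M}_2=\mathcal{M}$ unchanged and merely renames which sender is called $\mathcal{S}_1$, simultaneously swapping $m_1$ and $m_2$. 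First I would make this renaming precise and show that it induces $\sigma$ on the vertices of the interaction digraph: an interaction $\mathcal{D}_i\rightarrow\mathcal{D}_j$ in $\mathcal{I}(\mathcal{D}^{k'},\mathcal{P})$ becomes $\mathcal{D}_{\sigma(i)}\rightarrow\mathcal{D}_{\sigma(j)}$ after the relabelling.

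For the structural half of the claim, I would verify directly from the enumeration in Figure \ref{interenum} that, for each of the three pairs $(16,20)$, $(18,21)$, $(23,25)$, the edge set $\mathcal{E}(\mathcal{H}_k)$ is exactly the image of $\mathcal{E}(\mathcal{H}_{k'})$ under $\sigma$; this is a finite check. Next I would argue that, under the fully-participated hypothesis, the side-information digraph is completely determined by the triple of sub-digraphs $(\mathcal{D}_1,\mathcal{D}_2,\mathcal{D}_3)$ together with the set of present interactions: the edges inside each $\mathcal{D}_i$ are fixed, and each present interaction $\mathcal{D}_i\rightarrow\mathcal{D}_j$ contributes precisely the complete set of edges from every vertex of $\mathcal{V}(\mathcal{D}_i)$ to every vertex of $\mathcal{V}(\mathcal{D}_j)$. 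Hence, once the three sub-digraphs coincide and all interactions are fully-participated, relabelling $\mathcal{D}_1^{k',\mathcal{P}}\leftrightarrow\mathcal{D}_2^{k',\mathcal{P}}$ (equivalently, applying $\sigma$) carries $\mathcal{D}^{k'}$ onto $\mathcal{D}^{k}$, which proves the first assertion.

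For the rate half, the key point is that the confusion graph $\Gamma_t(\mathcal{D})$ depends only on the side-information digraph $\mathcal{D}$ and not on $\mathcal{P}$, so $\Gamma_t(\mathcal{D}^k)$ and $\Gamma_t(\mathcal{D}^{k'})$ are the same graph up to the vertex relabelling induced by $\sigma$ on the sub-labels $({\bf{b}}_{\mathcal{P}_1}^{i},{\bf{b}}_{\mathcal{P}_2}^{j},{\bf{b}}_{\mathcal{P}_3}^{k})$. I would then show that a proper two-sender graph coloring of $\mathcal{I}(\mathcal{D}^{k'},\mathcal{P})$ with color sets $(\mathcal{J}_1,\mathcal{J}_2)$ yields a proper two-sender graph coloring of $\mathcal{I}(\mathcal{D}^k,\mathcal{P}')$ with the color sets interchanged, $(\mathcal{J}_2,\mathcal{J}_1)$: jointly swapping $\mathcal{P}_1\leftrightarrow\mathcal{P}_2$ and $\mathcal{J}_1\leftrightarrow\mathcal{J}_2$ sends the constraint of Lemma \ref{lemThapaColor1} to that of Lemma \ref{lemThapaColor2} and vice versa, while leaving Lemmas \ref{lemThapaColor3} and \ref{lemThapaColor4} invariant, so validity is preserved. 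Finally, invoking Lemma \ref{lemtwocolorrate} and observing that the objective $(\lceil\log_2|\mathcal{J}_1|\rceil+\lceil\log_2|\mathcal{J}_2|\rceil)/t$ is symmetric in the two color sets, I would conclude $\beta_t(\mathcal{D}^k,\mathcal{P}')=\beta_t(\mathcal{D}^{k'},\mathcal{P})$; since the coloring correspondence is bijective, any achievable rate transfers by interchanging the labels $1$ and $2$ in its expression.

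The main obstacle I expect is the bookkeeping of the coloring swap: because $J_1$ has domain $\mathbb{F}_2^{tm_1}\times\mathbb{F}_2^{tm_3}$ while $J_2$ has domain $\mathbb{F}_2^{tm_2}\times\mathbb{F}_2^{tm_3}$, I must track how $\sigma$ permutes the sub-labels and verify that a confusion at a receiver in $\mathcal{D}_1$ maps exactly to a confusion at the corresponding receiver in $\mathcal{D}_2$ of the swapped problem. Everything else is either a finite verification against Figure \ref{interenum} or a direct appeal to the symmetry of the definitions.
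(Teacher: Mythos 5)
Your proposal is correct and follows essentially the same route as the paper: Observation \ref{obs1} is justified there by exactly this relabelling argument --- the finite check that $\mathcal{H}_k$ is $\mathcal{H}_{k'}$ with vertices $1$ and $2$ interchanged, the fact that under fully-participated interactions the side-information digraph is determined by the three sub-digraphs together with the set of present interactions, and the resulting sender-swap symmetry of any achievable rate expression. The only difference is one of rigor, not of approach: you make the rate-transfer step explicit by showing that a proper two-sender coloring remains proper under jointly swapping $(\mathcal{P}_1,\mathcal{J}_1)\leftrightarrow(\mathcal{P}_2,\mathcal{J}_2)$ and that the objective in Lemma \ref{lemtwocolorrate} is symmetric in $\mathcal{J}_1,\mathcal{J}_2$, a point the paper leaves implicit.
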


In the Theorems \ref{thmH1620}, \ref{thmH1821}, and \ref{thmH2325}, an achievable broadcast rate with finite length messages is obtained for any TUICP with fully-participated interactions between its sub-digraphs, based on a two-sender graph coloring of the confusion graph  $\Gamma_t(\mathcal{D}^{k})$ for $k \in \{16,18,23\}$. The results for any TUICP with side-information digraph $\mathcal{D}^{k}$, $k \in \{20,21,25\}$, are stated  without proof for completeness, based on Observation \ref{obs1}. 

\subsection{ An achievable broadcast rate for any TUICP with the side-information digraph being $\mathcal{D}^{16}$ or $\mathcal{D}^{20}$.} 

The following theorem provides an achievable broadcast rate with $t$-bit messages in terms of the corresponding optimal broadcast rates of the single-sender problems described by the side-information digraphs  $\mathcal{D}_{1}^{20,\mathcal{P}}$, $\mathcal{D}_{2}^{16,\mathcal{P}}$, $\mathcal{D}_{1*3}^{16,\mathcal{P}}$, and $\mathcal{D}_{2*3}^{20,\mathcal{P}}$. The notation  $\mathcal{D}_{u*v}^{k,\mathcal{P}}$ is explained in Definition \ref{defdisj1}.

\begin{thm}
	For any TUICP with the side-information digraph $\mathcal{D}^{k}$, $k \in \{16,20\}$, having fully-participated interactions between its sub-digraphs $\mathcal{D}_{i}^{k,\mathcal{P}}$, $i \in \{1,2,3\}$, for any $\mathcal{P}$, and $t$-bit messages for any finite $t$, the  following broadcast rates are achievable. 
	\begin{equation}
	(i) ~  p_{t}(\mathcal{D}^{16},\mathcal{P})=\beta_{t}(\mathcal{D}_{2}^{16,\mathcal{P}})+\beta_{t}(\mathcal{D}_{1*3}^{16,\mathcal{P}}). 
	\label{eqthmH16}
	\end{equation}
	\begin{equation}
	(ii) ~ p_{t}(\mathcal{D}^{20},\mathcal{P})=\beta_{t}(\mathcal{D}_{1}^{20,\mathcal{P}})+\beta_{t}(\mathcal{D}_{2*3}^{20,\mathcal{P}}). 
	\label{eqthmH20}
	\end{equation}
	\label{thmH1620}
\end{thm}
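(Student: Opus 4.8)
The plan is to prove the achievable rate \eqref{eqthmH16} by exhibiting one explicit valid two-sender graph coloring of $\Gamma_t(\mathcal{D}^{16})$ whose color-set sizes realize the claimed length, and to obtain \eqref{eqthmH20} for free from \eqref{eqthmH16} via the label swap of Observation~\ref{obs1}. Concretely, I would fix an optimal proper coloring $c_{13}$ of the disjunctive product $\Gamma_t(\mathcal{D}_1^{16,\mathcal{P}}) * \Gamma_t(\mathcal{D}_3^{16,\mathcal{P}})$ (which by Definition~\ref{defdisj1} is $\Gamma_t(\mathcal{D}_{1*3}^{16,\mathcal{P}})$), using $\chi(\Gamma_t(\mathcal{D}_{1*3}^{16,\mathcal{P}}))$ colors, and an optimal proper coloring $c_{2}$ of $\Gamma_t(\mathcal{D}_2^{16,\mathcal{P}})$, using $\chi(\Gamma_t(\mathcal{D}_2^{16,\mathcal{P}}))$ colors. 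I would then set the two senders' coloring functions to
\begin{equation*}
J_1(\mathbf{b}_{\mathcal{P}_1}^{i},\mathbf{b}_{\mathcal{P}_3}^{k}) = c_{13}\big((\mathbf{b}_{\mathcal{P}_1}^{i},\mathbf{b}_{\mathcal{P}_3}^{k})\big), \qquad J_2(\mathbf{b}_{\mathcal{P}_2}^{j},\mathbf{b}_{\mathcal{P}_3}^{k}) = c_{2}(\mathbf{b}_{\mathcal{P}_2}^{j}),
\end{equation*}
so that $J_1$ reads the pair $(\mathbf{b}_{\mathcal{P}_1}^{i},\mathbf{b}_{\mathcal{P}_3}^{k})$ as a single vertex of the product graph (legitimate, since $\mathcal{S}_1$ holds $\mathcal{P}_1\cup\mathcal{P}_3$) while $J_2$ is chosen to \emph{ignore} $\mathbf{b}_{\mathcal{P}_3}^{k}$. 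This gives $|\mathcal{J}_1| = \chi(\Gamma_t(\mathcal{D}_{1*3}^{16,\mathcal{P}}))$ and $|\mathcal{J}_2| = \chi(\Gamma_t(\mathcal{D}_2^{16,\mathcal{P}}))$, so once properness is established the induced index code has rate $p_t = (\lceil \log_2|\mathcal{J}_1|\rceil + \lceil\log_2|\mathcal{J}_2|\rceil)/t$ (the quantity minimized in Lemma~\ref{lemtwocolorrate}), which equals $\beta_t(\mathcal{D}_{1*3}^{16,\mathcal{P}}) + \beta_t(\mathcal{D}_2^{16,\mathcal{P}})$, precisely \eqref{eqthmH16}.

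The substance of the argument is showing that $J_0=(J_1,J_2)$ is a proper two-sender coloring, and this is exactly where the fully-participated structure of $\mathcal{H}_{16}$ — in which $\mathcal{D}_1^{16,\mathcal{P}}$ and $\mathcal{D}_3^{16,\mathcal{P}}$ are mutually non-interacting while $\mathcal{D}_2^{16,\mathcal{P}}$ interacts fully (and acyclically) with both — enters. I would classify every confusion edge of $\Gamma_t(\mathcal{D}^{16})$ by the sub-digraph at whose receiver the confusion arises. For a confusion at a receiver in $\mathcal{D}_1^{16,\mathcal{P}}$ or $\mathcal{D}_3^{16,\mathcal{P}}$, the absence of any $\mathcal{D}_1\leftrightarrow\mathcal{D}_3$ interaction forces the induced adjacency on $(\mathbf{b}_{\mathcal{P}_1}^{i},\mathbf{b}_{\mathcal{P}_3}^{k})$ to be exactly disjunctive-product adjacency (confusion at a $\mathcal{D}_1$-receiver makes the first coordinates adjacent irrespective of $\mathbf{b}_{\mathcal{P}_3}$, and dually for a $\mathcal{D}_3$-receiver), so $J_1$ already separates the two endpoints and the color pair differs. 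For a confusion at a receiver in $\mathcal{D}_2^{16,\mathcal{P}}$, full participation pins down which of $\mathbf{b}_{\mathcal{P}_1}^{i},\mathbf{b}_{\mathcal{P}_3}^{k}$ must coincide, while in every such case $\mathbf{b}_{\mathcal{P}_2}^{j}$ and $\mathbf{b}_{\mathcal{P}_2}^{j'}$ are adjacent in $\Gamma_t(\mathcal{D}_2^{16,\mathcal{P}})$, so $c_2$ separates them and the $J_2$-component differs. I would organize this case analysis using the block decomposition together with Lemma~\ref{lemblockiso}, verifying properness on a single representative $I$-, $J$-, or $K$-block and transporting it to all isomorphic blocks, and I would cross-check the three cases against the necessary conditions of Lemmas~\ref{lemThapaColor1}--\ref{lemThapaColor4}.

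I expect the main obstacle to be the verification for confusions at $\mathcal{D}_2^{16,\mathcal{P}}$-receivers that simultaneously change $\mathbf{b}_{\mathcal{P}_3}^{k}$: because $J_2$ is deliberately independent of $\mathbf{b}_{\mathcal{P}_3}^{k}$, one must rule out any monochromatic such edge, i.e. show that whenever two vertices are confusable at a $\mathcal{D}_2$-receiver they already differ in $\mathbf{b}_{\mathcal{P}_2}$ in a manner detected by $c_2$, and that the accompanying change in $\mathbf{b}_{\mathcal{P}_3}^{k}$ never manufactures a confusion whose resolution would require distinguishing $\mathbf{b}_{\mathcal{P}_3}^{k}$ values that $J_2$ collapses. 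This is the precise point at which the fully-participated hypothesis — guaranteeing agreement of all of $\mathcal{P}_1$ and/or all of $\mathcal{P}_3$ at the relevant receivers — is indispensable, and it is what makes the disjunctive-product coloring admissible in place of the naive $\chi(\Gamma_t(\mathcal{D}_1^{16,\mathcal{P}}))\,\chi(\Gamma_t(\mathcal{D}_2^{16,\mathcal{P}}))\,\chi(\Gamma_t(\mathcal{D}_3^{16,\mathcal{P}}))$ count underlying Conjecture~1. Finally, \eqref{eqthmH20} is read off verbatim from \eqref{eqthmH16} under the interchange of the labels $1$ and $2$ justified by Observation~\ref{obs1}.
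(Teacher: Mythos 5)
Your proposal is correct and takes essentially the same route as the paper's Appendix A proof: there, too, the confusion edges are classified by the sub-digraph of the receiver at which the confusion arises, each $J$-block is shown to carry exactly the disjunctive-product adjacency $\Gamma_t(\mathcal{D}_1^{16,\mathcal{P}})*\Gamma_t(\mathcal{D}_3^{16,\mathcal{P}})$ and is colored identically by $\mathcal{S}_1$, the inter-$J$-block edges are resolved by an optimal coloring of $\Gamma_t(\mathcal{D}_2^{16,\mathcal{P}})$ assigned per block by $\mathcal{S}_2$ (i.e., ignoring $\mathbf{b}_{\mathcal{P}_3}$), and the case $k=20$ follows from Observation \ref{obs1}. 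Your formulation via the explicit coloring functions $(J_1,J_2)$ is just a repackaging of that block-based two-sender coloring.
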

\begin{proof}
	See Appendix A for the proof of $(\ref{eqthmH16})$. The proof of $(\ref{eqthmH20})$ follows from the proof of $(\ref{eqthmH16})$ in conjunction with Observation \ref{obs1}.
\end{proof}

\begin{rem}
	Note that the proof of $(i)$ in Theorem \ref{thmH1620}, avails the following  symmetries of the confusion graph $\Gamma_{t}(\mathcal{D}^{16})$. All the $J$-blocks of $\Gamma_{t}(\mathcal{D}^{16})$ are isomorphic to each other. If there is an edge between any $j$th $J$-block and any $j'$th $J$-block, where $j,j' \in [2^{tm_2}]$, then there are edges between every vertex of the $j$th $J$-block and every vertex of the $j'$th $J$-block.
	\label{rem1} 
\end{rem}
The following lemmas, the first one stated as Lemma 10 in \cite{CTLO}, and the second one stated as Theorem 1 in \cite{NLUV} are required to prove our next result.

\begin{lem}[Lemma 10, \cite{CTLO}]
	For any real numbers $a$ and $b$, $\lceil a+b \rceil = \lceil a \rceil + \lceil b \rceil + \epsilon$, for some $\epsilon \in \{-1,0\}$.
	\label{addceil}
\end{lem}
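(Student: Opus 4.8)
The plan is to establish two elementary inequalities that together confine $\lceil a+b\rceil$ to exactly two possible integer values. The only ingredients I would use are the defining bounds of the ceiling function: for every real $x$, the integer $\lceil x\rceil$ satisfies $\lceil x\rceil - 1 < x \le \lceil x\rceil$. Everything else is routine arithmetic together with the observation that a ceiling is, by definition, the smallest integer not less than its argument.

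First I would prove the \emph{upper bound} $\lceil a+b\rceil \le \lceil a\rceil + \lceil b\rceil$. From $a \le \lceil a\rceil$ and $b \le \lceil b\rceil$, adding gives $a+b \le \lceil a\rceil + \lceil b\rceil$. The right-hand side is an integer, and since $\lceil a+b\rceil$ is the smallest integer at least $a+b$, the inequality $\lceil a+b\rceil \le \lceil a\rceil + \lceil b\rceil$ follows immediately.

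Next I would prove the \emph{lower bound} $\lceil a+b\rceil \ge \lceil a\rceil + \lceil b\rceil - 1$. From the strict inequalities $\lceil a\rceil - 1 < a$ and $\lceil b\rceil - 1 < b$, adding yields $\lceil a\rceil + \lceil b\rceil - 2 < a+b \le \lceil a+b\rceil$. Thus $\lceil a+b\rceil$ is an integer strictly greater than the integer $\lceil a\rceil + \lceil b\rceil - 2$, and hence it is at least $\lceil a\rceil + \lceil b\rceil - 1$.

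Combining the two bounds gives $\lceil a\rceil + \lceil b\rceil - 1 \le \lceil a+b\rceil \le \lceil a\rceil + \lceil b\rceil$, so defining $\epsilon = \lceil a+b\rceil - \lceil a\rceil - \lceil b\rceil$ forces $\epsilon \in \{-1,0\}$, as claimed. There is essentially no obstacle: the statement is a standard subadditivity-type fact about ceilings. The one step deserving a moment of care is the lower bound, where I pass from a strict inequality between real numbers to an integer conclusion; this relies precisely on the fact that an integer strictly exceeding $n-2$ must be at least $n-1$. An equivalent and perhaps slicker route would be to write $a = \lceil a\rceil - \alpha$ and $b = \lceil b\rceil - \beta$ with $\alpha,\beta \in [0,1)$, so that $\lceil a+b\rceil = \lceil a\rceil + \lceil b\rceil - \lfloor \alpha+\beta\rfloor$ and $\lfloor \alpha+\beta\rfloor \in \{0,1\}$ since $0 \le \alpha+\beta < 2$; I would keep the two-inequality version as the primary argument for transparency.
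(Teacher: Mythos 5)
Your proof is correct. Note that the paper itself gives no proof of this statement: it is quoted as Lemma 10 from \cite{CTLO} and used as a known fact, so there is no in-paper argument to compare against. Your two-bound argument (upper bound $\lceil a+b\rceil \le \lceil a\rceil + \lceil b\rceil$ from monotonicity and integrality, lower bound $\lceil a+b\rceil \ge \lceil a\rceil + \lceil b\rceil - 1$ from the strict inequalities $\lceil x\rceil - 1 < x$) is the standard, complete verification of this ceiling subadditivity fact, and your alternative route via the fractional parts $\alpha,\beta \in [0,1)$ with $\lfloor \alpha+\beta\rfloor \in \{0,1\}$ is equally valid.
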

\begin{lem}[Theorem 1, \cite{NLUV}] For any two undirected graphs $\mathcal{G}_1$ and $\mathcal{G}_2$,  
	$\chi(\mathcal{G}_1 * \mathcal{G}_2) \leq \chi(\mathcal{G}_1)\chi(\mathcal{G}_2)$.	
	\label{disjprod}
\end{lem}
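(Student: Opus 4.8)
The plan is to prove the bound by exhibiting an explicit proper coloring of $\mathcal{G}_1 * \mathcal{G}_2$ constructed as a ``product'' of optimal colorings of the two factors. First I would fix a proper coloring $c_1 : \mathcal{V}(\mathcal{G}_1) \to [\chi(\mathcal{G}_1)]$ of $\mathcal{G}_1$ and a proper coloring $c_2 : \mathcal{V}(\mathcal{G}_2) \to [\chi(\mathcal{G}_2)]$ of $\mathcal{G}_2$, each using the minimum number of colors. I then define the candidate coloring $c$ on the vertex set $\mathcal{V}(\mathcal{G}_1) \times \mathcal{V}(\mathcal{G}_2)$ of $\mathcal{G}_1 * \mathcal{G}_2$ by assigning to each vertex the ordered pair $c(u_1,u_2) = (c_1(u_1), c_2(u_2))$. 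The palette for $c$ is $[\chi(\mathcal{G}_1)] \times [\chi(\mathcal{G}_2)]$, which has exactly $\chi(\mathcal{G}_1)\chi(\mathcal{G}_2)$ elements, so once $c$ is shown to be proper the claimed inequality follows immediately.

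The core step is verifying that $c$ is proper, and here I would argue directly from Definition \ref{defdisj}. Suppose $((u_1,u_2),(v_1,v_2)) \in \mathcal{E}(\mathcal{G}_1 * \mathcal{G}_2)$. By the definition of the disjunctive product, this means $(u_1,v_1) \in \mathcal{E}(\mathcal{G}_1)$ or $(u_2,v_2) \in \mathcal{E}(\mathcal{G}_2)$. In the first case, properness of $c_1$ gives $c_1(u_1) \neq c_1(v_1)$, so the two color pairs differ in their first coordinate; in the second case, properness of $c_2$ gives $c_2(u_2) \neq c_2(v_2)$, so they differ in their second coordinate. In either case the ordered pairs $c(u_1,u_2)$ and $c(v_1,v_2)$ are distinct, which is precisely the requirement for a proper coloring of an undirected graph. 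Hence $c$ is a proper coloring of $\mathcal{G}_1 * \mathcal{G}_2$ using at most $\chi(\mathcal{G}_1)\chi(\mathcal{G}_2)$ colors, giving $\chi(\mathcal{G}_1 * \mathcal{G}_2) \leq \chi(\mathcal{G}_1)\chi(\mathcal{G}_2)$.

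I do not anticipate any genuine obstacle here: the decisive feature is the \emph{disjunctive} (``OR'') adjacency rule of Definition \ref{defdisj}, which guarantees that every edge of the product is witnessed by an edge in \emph{at least one} factor. This is exactly what lets the product coloring separate any pair of adjacent product-vertices through the corresponding coordinate. The only subtlety worth noting is that this direction of the argument yields an upper bound only; the matching lower bound (equality holds for the disjunctive product) is not needed for our purposes, since every application in the sequel uses $\chi(\mathcal{G}_1 * \mathcal{G}_2) \leq \chi(\mathcal{G}_1)\chi(\mathcal{G}_2)$ merely to bound the number of color pairs required in the two-sender graph coloring.
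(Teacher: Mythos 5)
Your proof is correct, but there is nothing in the paper to compare it against: the paper never proves this lemma, it imports it verbatim as Theorem 1 of \cite{NLUV} and uses it as a black box. Your argument is the standard product-coloring proof and it is sound with respect to the paper's definitions: by Definition \ref{defdisj}, every edge of $\mathcal{G}_1 * \mathcal{G}_2$ is witnessed by an edge in at least one factor, so the pair coloring $c(u_1,u_2)=(c_1(u_1),c_2(u_2))$ separates the endpoints of every edge in the coordinate carrying that witness. One small point you left implicit is the paper's requirement that a proper coloring be an \emph{onto} function; this holds automatically here, since the vertex set of the product is the full Cartesian product and $c_1,c_2$ are onto, so $c$ attains every pair in $[\chi(\mathcal{G}_1)]\times[\chi(\mathcal{G}_2)]$. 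The net effect of your write-up is to make the paper self-contained at this point, which is a modest gain over citation. One correction to a side remark, however: your parenthetical claim that equality holds for the disjunctive product is false in general. The central point of \cite{NLUV} is that $\chi(\mathcal{G}^{*n})^{1/n}$ converges to the fractional chromatic number $\chi_f(\mathcal{G})$, which can be strictly smaller than $\chi(\mathcal{G})$ (e.g., $\chi_f(C_5)=5/2<3$), so for OR-powers the chromatic number falls strictly below the product of the factors' chromatic numbers. This does not affect your proof or any use of the lemma in the paper, since only the upper bound is claimed and applied.
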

The following Corollary \ref{corthmH1620} shows that there is a possibility to achieve a broadcast rate lesser than that stated by Conjecture 1 in \cite{CTLO}.
\begin{cor}
	For any TUICP with the side-information digraph $\mathcal{D}^{k}$, $k \in \{16,20\}$,  having fully-participated interactions between its sub-digraphs $\mathcal{D}_{i}^{k,\mathcal{P}}$, $i \in \{1,2,3\}$, for any $\mathcal{P}$, and $t$-bit messages for any finite $t$, we have,  
	\begin{equation}   
	p_{t}(\mathcal{D}^{k},\mathcal{P}) \leq \beta_{t}(\mathcal{D}_{1}^{k,\mathcal{P}})+\beta_{t}(\mathcal{D}_{2}^{k,\mathcal{P}})+\beta_{t}(\mathcal{D}_{3}^{k,\mathcal{P}})+\epsilon/t,
	\label{eqcorH1620}
	\end{equation}
	for some $\epsilon \in \{-1,0\}$, where $p_{t}(\mathcal{D}^{k},\mathcal{P})$ is the broadcast rate given in Theorem \ref{thmH1620}.
	\label{corthmH1620}
\end{cor}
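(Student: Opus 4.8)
The plan is to bound the disjunctive-product term $\beta_t(\mathcal{D}_{1*3}^{16,\mathcal{P}})$ appearing in Theorem \ref{thmH1620} by the sum $\beta_t(\mathcal{D}_1^{16,\mathcal{P}})+\beta_t(\mathcal{D}_3^{16,\mathcal{P}})$ up to the stated additive slack, and then simply add $\beta_t(\mathcal{D}_2^{16,\mathcal{P}})$ to both sides. First I would recall that for a single-sender problem the optimal codelength for $t$-bit messages equals $\lceil \log_2 \chi(\Gamma_t(\cdot)) \rceil$ bits, so that by Definition \ref{defdisj1} we have $t\,\beta_t(\mathcal{D}_{1*3}^{16,\mathcal{P}}) = \lceil \log_2 \chi(\Gamma_t(\mathcal{D}_1^{16,\mathcal{P}}) * \Gamma_t(\mathcal{D}_3^{16,\mathcal{P}})) \rceil$, and likewise $t\,\beta_t(\mathcal{D}_i^{16,\mathcal{P}}) = \lceil \log_2 \chi(\Gamma_t(\mathcal{D}_i^{16,\mathcal{P}})) \rceil$ for $i \in \{1,3\}$.

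Next I would chain two inequalities. By Lemma \ref{disjprod}, $\chi(\Gamma_t(\mathcal{D}_1^{16,\mathcal{P}}) * \Gamma_t(\mathcal{D}_3^{16,\mathcal{P}})) \leq \chi(\Gamma_t(\mathcal{D}_1^{16,\mathcal{P}}))\,\chi(\Gamma_t(\mathcal{D}_3^{16,\mathcal{P}}))$; since $\log_2$ and $\lceil \cdot \rceil$ are monotone, taking logarithms and then ceilings preserves the inequality and gives $t\,\beta_t(\mathcal{D}_{1*3}^{16,\mathcal{P}}) \leq \lceil \log_2 \chi(\Gamma_t(\mathcal{D}_1^{16,\mathcal{P}})) + \log_2 \chi(\Gamma_t(\mathcal{D}_3^{16,\mathcal{P}})) \rceil$. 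Applying Lemma \ref{addceil} with $a = \log_2 \chi(\Gamma_t(\mathcal{D}_1^{16,\mathcal{P}}))$ and $b = \log_2 \chi(\Gamma_t(\mathcal{D}_3^{16,\mathcal{P}}))$ splits the right-hand ceiling into $\lceil a \rceil + \lceil b \rceil + \epsilon$ with $\epsilon \in \{-1,0\}$, i.e. into $t\,\beta_t(\mathcal{D}_1^{16,\mathcal{P}}) + t\,\beta_t(\mathcal{D}_3^{16,\mathcal{P}}) + \epsilon$. Dividing through by $t$ yields $\beta_t(\mathcal{D}_{1*3}^{16,\mathcal{P}}) \leq \beta_t(\mathcal{D}_1^{16,\mathcal{P}}) + \beta_t(\mathcal{D}_3^{16,\mathcal{P}}) + \epsilon/t$.

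Finally, substituting this into the expression of Theorem \ref{thmH1620}, namely $p_t(\mathcal{D}^{16},\mathcal{P}) = \beta_t(\mathcal{D}_2^{16,\mathcal{P}}) + \beta_t(\mathcal{D}_{1*3}^{16,\mathcal{P}})$, immediately gives $(\ref{eqcorH1620})$ for $k=16$; the case $k=20$ follows from the relabeling of the sub-digraphs described in Observation \ref{obs1}, since the achievable rate of Theorem \ref{thmH1620}$(ii)$ is obtained from $(i)$ by interchanging the roles of $\mathcal{D}_1$ and $\mathcal{D}_2$.

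I do not anticipate a genuine obstacle here: the result is an essentially mechanical composition of Lemmas \ref{disjprod} and \ref{addceil}. The only point demanding care is the bookkeeping of the ceiling operations—specifically, confirming that the combined slack lands in $\{-1,0\}$ rather than a wider interval, which is exactly what Lemma \ref{addceil} guarantees, and ensuring the monotonicity step (from the product bound to the ceiling of the log-sum) is carried out \emph{before}, not after, the additive splitting, because the splitting identity applies to the log-sum itself and not to the smaller quantity $\log_2 \chi(\Gamma_t(\mathcal{D}_1^{16,\mathcal{P}}) * \Gamma_t(\mathcal{D}_3^{16,\mathcal{P}}))$.
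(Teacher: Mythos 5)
Your proposal is correct and follows essentially the same route as the paper's own proof: both identify $t\,\beta_t(\mathcal{D}_{1*3}^{16,\mathcal{P}})$ with $\lceil \log_2 \chi(\Gamma_t(\mathcal{D}_1^{16,\mathcal{P}}) * \Gamma_t(\mathcal{D}_3^{16,\mathcal{P}})) \rceil$ via the single-sender specialization of Lemma \ref{lemtwocolorrate}, bound the chromatic number by Lemma \ref{disjprod}, split the resulting ceiling with Lemma \ref{addceil}, divide by $t$, substitute into Theorem \ref{thmH1620}, and dispatch $k=20$ by Observation \ref{obs1}. Your explicit remark about performing the monotonicity step before the additive splitting is precisely the order in which the paper carries out the argument, merely stated more carefully.
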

\begin{proof}
	We first prove (\ref{eqcorH1620}) for $k = 16$. The proof of (\ref{eqcorH1620}) for $k=20$ follows from the proof for $k=16$ in conjunction with Observation \ref{obs1}.
	
	Consider the side-information digraph $\mathcal{D}_{1*3}^{16,\mathcal{P}}$ whose confusion graph is given by  $\Gamma_{t}(\mathcal{D}_{1}^{16,\mathcal{P}}) * \Gamma_{t}(\mathcal{D}_{3}^{16,\mathcal{P}})$. Using Lemma \ref{disjprod}  we have
	\begin{equation}
	\chi(\Gamma_{t}(\mathcal{D}_{1}^{16,\mathcal{P}}) * \Gamma_{t}(\mathcal{D}_{3}^{16,\mathcal{P}})) \leq \chi(\Gamma_{t}(\mathcal{D}_{1}^{16,\mathcal{P}})) \chi(\Gamma_{t}(\mathcal{D}_{3}^{16,\mathcal{P}})).
	\label{ineq1}
	\end{equation}
	Taking logarithm on both the sides of (\ref{ineq1}) and using Lemma \ref{addceil}, we have 
	\begin{equation}
	t\beta_{t}(\mathcal{D}_{1*3}^{16,\mathcal{P}}) \leq \lceil \log_2\chi(\Gamma_{t}(\mathcal{D}_{1}^{16,\mathcal{P}})) \rceil + \lceil \log_2\chi(\Gamma_{t}(\mathcal{D}_{3}^{16,\mathcal{P}})) \rceil + \epsilon,
	\label{ineq2} 
	\end{equation}
	for some $\epsilon \in \{-1,0\}$.
	We have used Lemma \ref{lemtwocolorrate} with the color set $\mathcal{J}_2=\Phi$ (which corresponds to a single-sender problem) to obtain   $t\beta_{t}(\mathcal{D}_{1*3}^{16,\mathcal{P}})=\lceil \log_2\chi(\Gamma_{t}(\mathcal{D}_{1*3}^{16,\mathcal{P}})) \rceil$ in (\ref{ineq2}). 
	Dividing both the sides of (\ref{ineq2}) by $t$ and using (\ref{eqthmH16}), we have the result of (\ref{eqcorH1620}). 
\end{proof}

\begin{rem}
	Note that there is a possibility of achieving a broadcast rate lesser than that stated by Conjecture 1 in \cite{CTLO}, when $p_{t}(\mathcal{D}^{k},\mathcal{P}) <  \beta_{t}(\mathcal{D}_{1}^{k,\mathcal{P}})+\beta_{t}(\mathcal{D}_{2}^{k,\mathcal{P}})+\beta_{t}(\mathcal{D}_{3}^{k,\mathcal{P}})+\epsilon/t$, where $\epsilon \in \{-1,0\}$. However, we do not provide any example, where the inequality holds strictly. The proof of the corollary suggests that the conjecture can be proved in negative if one can find two side-information digraphs $\mathcal{D}_{1}$ and $\mathcal{D}_{3}$ such that $\lceil \log_2\chi(\Gamma_{t}(\mathcal{D}_{1*3})) \rceil$	is strictly lesser than $\lceil \log_2\chi(\Gamma_{t}(\mathcal{D}_{1})) \rceil + \lceil \log_2\chi(\Gamma_{t}(\mathcal{D}_{3})) \rceil -2$.
\end{rem}

\subsection{An achievable broadcast rate for any TUICP with the side-information digraph being $\mathcal{D}^{18}$ or $\mathcal{D}^{21}$.}

The following lemma stated as Corollary 3.4.2 in \cite{SU} is required to derive our next result. Recall that the definition of lexicographic graph product denoted by $``\circ"$ is defined in Definition \ref{deflexi}.

\begin{lem}[Corollary 3.4.2, \cite{SU}]
	For any two undirected graphs $\mathcal{G}_1$ and $\mathcal{G}_2$,  
	$\chi(\mathcal{G}_1 \circ \mathcal{G}_2) \leq \chi(\mathcal{G}_1)\chi(\mathcal{G}_2)$.
	\label{lemlexi}
\end{lem}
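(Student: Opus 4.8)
The plan is to exhibit an explicit proper coloring of the lexicographic product $\mathcal{G}_1 \circ \mathcal{G}_2$ that uses at most $\chi(\mathcal{G}_1)\chi(\mathcal{G}_2)$ colors. Since the chromatic number is by definition the minimum number of colors over all proper colorings, producing one valid coloring with this many colors immediately yields the claimed upper bound. This mirrors the product-coloring idea already exploited in this paper for the two-sender graph coloring, so the construction should feel natural in context.

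First I would fix optimal proper colorings $c_1 : \mathcal{V}(\mathcal{G}_1) \to \mathcal{A}_1$ and $c_2 : \mathcal{V}(\mathcal{G}_2) \to \mathcal{A}_2$ of the two factors, where $|\mathcal{A}_1| = \chi(\mathcal{G}_1)$ and $|\mathcal{A}_2| = \chi(\mathcal{G}_2)$. I would then define a product coloring $c : \mathcal{V}(\mathcal{G}_1 \circ \mathcal{G}_2) \to \mathcal{A}_1 \times \mathcal{A}_2$ on the vertex set $\mathcal{V}(\mathcal{G}_1) \times \mathcal{V}(\mathcal{G}_2)$ by $c((u_1, u_2)) = (c_1(u_1), c_2(u_2))$. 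This palette has exactly $\chi(\mathcal{G}_1)\chi(\mathcal{G}_2)$ colors, so the whole task reduces to checking that $c$ is proper.

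The core of the argument is a case split mirroring the two disjuncts in Definition \ref{deflexi}. Consider any edge $((u_1, u_2),(v_1, v_2)) \in \mathcal{E}(\mathcal{G}_1 \circ \mathcal{G}_2)$. In the first case $(u_1, v_1) \in \mathcal{E}(\mathcal{G}_1)$, so $c_1(u_1) \neq c_1(v_1)$ because $c_1$ is proper, making the two colors differ in their first coordinate. In the remaining case $u_1 = v_1$ and $(u_2, v_2) \in \mathcal{E}(\mathcal{G}_2)$, so $c_2(u_2) \neq c_2(v_2)$ because $c_2$ is proper, making the colors differ in their second coordinate. In both cases $c((u_1, u_2)) \neq c((v_1, v_2))$, so adjacent vertices always receive distinct colors and $c$ is a proper coloring.

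There is no serious obstacle here; the only point demanding care is that the asymmetric edge condition of the lexicographic product splits cleanly into exactly these two subcases, and that each subcase invokes the properness of the corresponding factor coloring in the \emph{appropriate} coordinate — properness of $c_1$ handling the first case and properness of $c_2$ the second. Having established that $c$ is a proper coloring using $\chi(\mathcal{G}_1)\chi(\mathcal{G}_2)$ colors, the definition of the chromatic number gives $\chi(\mathcal{G}_1 \circ \mathcal{G}_2) \leq \chi(\mathcal{G}_1)\chi(\mathcal{G}_2)$, as required.
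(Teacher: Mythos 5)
Your proof is correct, but note that the paper never proves this statement at all: it is imported verbatim as Corollary 3.4.2 of \cite{SU}, so there is no internal argument to match. What you supply is the standard self-contained product-coloring proof, and it is sound: the case split follows exactly the two disjuncts of Definition \ref{deflexi}, and each case invokes properness of the correct factor coloring ($c_1$ when $(u_1,v_1) \in \mathcal{E}(\mathcal{G}_1)$, $c_2$ when $u_1 = v_1$ and $(u_2,v_2) \in \mathcal{E}(\mathcal{G}_2)$). Two minor observations. First, the paper defines a proper coloring as an \emph{onto} function, so to be fully compatible with its conventions you should shrink the codomain $\mathcal{A}_1 \times \mathcal{A}_2$ to the image of $c$; this only reduces the number of colors, so the bound $\chi(\mathcal{G}_1 \circ \mathcal{G}_2) \leq \chi(\mathcal{G}_1)\chi(\mathcal{G}_2)$ survives. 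Second, your argument actually buys more than the cited corollary: the identical coloring $c$ is proper on the disjunctive product $\mathcal{G}_1 * \mathcal{G}_2$ as well, since $(u_1,v_1) \in \mathcal{E}(\mathcal{G}_1)$ forces the first color coordinates to differ and $(u_2,v_2) \in \mathcal{E}(\mathcal{G}_2)$ forces the second coordinates to differ; as every edge of $\mathcal{G}_1 \circ \mathcal{G}_2$ is also an edge of $\mathcal{G}_1 * \mathcal{G}_2$, your single construction simultaneously establishes Lemma \ref{disjprod} (quoted from \cite{NLUV}) and yields Lemma \ref{lemlexi} as the special case of a subgraph, making both of the paper's externally cited coloring bounds elementary consequences of one argument.
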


The following theorem provides an achievable broadcast rate with $t$-bit messages in terms of the corresponding optimal broadcast rates of the single-sender problems described by the side-information digraphs  $\mathcal{D}_{1}^{21,\mathcal{P}}$, $\mathcal{D}_{2}^{18,\mathcal{P}}$, $\mathcal{D}_{1\circ3}^{18,\mathcal{P}}$, and $\mathcal{D}_{2\circ3}^{21,\mathcal{P}}$. Recall that the notation  $\mathcal{D}_{u \circ v}^{k,\mathcal{P}}$ is explained in Definition \ref{deflexi1}. The following theorem also avails the symmetries of the confusion graphs $\Gamma_{t}(\mathcal{D}^{k})$, $k \in \{18,21\}$,  to obtain the stated achievable broadcast rates (as given in Remark \ref{rem1}). 

\begin{thm}
	For any TUICP with the side-information digraph $\mathcal{D}^{k}$, $k \in \{18,21\}$, having fully-participated interactions between its sub-digraphs $\mathcal{D}_{i}^{k,\mathcal{P}}$, $i \in \{1,2,3\}$, for any $\mathcal{P}$, and $t$-bit messages for any finite $t$, the following broadcast rates are achievable, 
	\begin{equation}
	(i) ~ p_{t}(\mathcal{D}^{18},\mathcal{P})=\beta_{t}(\mathcal{D}_{2}^{18,\mathcal{P}})+\beta_{t}(\mathcal{D}_{1 \circ 3}^{18,\mathcal{P}}), 
	\label{eqthmH18}
	\end{equation}
	\begin{equation}
	(ii) ~ p_{t}(\mathcal{D}^{21},\mathcal{P})=\beta_{t}(\mathcal{D}_{1}^{21,\mathcal{P}})+\beta_{t}(\mathcal{D}_{2 \circ 3}^{21,\mathcal{P}}), 
	\label{eqthmH21}
	\end{equation}
	and for the same achievable broadcast rate $p_{t}(\mathcal{D}^{k},\mathcal{P})$, we have
	\begin{equation}
	(iii) ~
	p_{t}(\mathcal{D}^{k},\mathcal{P}) \leq \beta_{t}(\mathcal{D}_{1}^{k,\mathcal{P}})+\beta_{t}(\mathcal{D}_{2}^{k,\mathcal{P}})+\beta_{t}(\mathcal{D}_{3}^{k,\mathcal{P}})+\epsilon/t, 
	\label{eqcorH1821}
	\end{equation}	
	for some $\epsilon \in \{-1,0\}$.
	\label{thmH1821}
\end{thm}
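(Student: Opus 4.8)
The plan is to handle the three parts in the order (i), then (ii) as a consequence of (i) via the relabeling Observation~\ref{obs1}, and finally (iii) as an arithmetic corollary mirroring Corollary~\ref{corthmH1620}. The genuinely new content is the achievability coloring in (i); I would relegate that construction to an appendix (as was done for Theorem~\ref{thmH1620}), and dispatch (iii) inline.

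For (i) I would exhibit an explicit valid two-sender coloring of $\Gamma_t(\mathcal{D}^{18})$ using $\chi(\Gamma_t(\mathcal{D}_{1\circ 3}^{18,\mathcal{P}}))$ colors at $\mathcal{S}_1$ and $\chi(\Gamma_t(\mathcal{D}_2^{18,\mathcal{P}}))$ colors at $\mathcal{S}_2$; by Lemma~\ref{lemtwocolorrate} this yields exactly the rate in (\ref{eqthmH18}). Concretely, I would set $J_1(\mathbf{b}_{\mathcal{P}_1}^i,\mathbf{b}_{\mathcal{P}_3}^k)$ equal to the color of the vertex $(\mathbf{b}_{\mathcal{P}_1}^i,\mathbf{b}_{\mathcal{P}_3}^k)$ under a fixed optimal proper coloring of $\Gamma_t(\mathcal{D}_1^{18,\mathcal{P}})\circ\Gamma_t(\mathcal{D}_3^{18,\mathcal{P}})$, and let $J_2(\mathbf{b}_{\mathcal{P}_2}^j,\mathbf{b}_{\mathcal{P}_3}^k)$ depend only on $\mathbf{b}_{\mathcal{P}_2}^j$, copying an optimal coloring of $\Gamma_t(\mathcal{D}_2^{18,\mathcal{P}})$. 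The work is then to verify, against Lemmas~\ref{lemThapaColor1}--\ref{lemThapaColor4}, that every confusable pair receives a distinct ordered color pair. I would organize this by the receiver at which confusion arises, exploiting the fully-participated structure of $\mathcal{H}_{18}$: confusion at a $\mathcal{D}_2$-receiver changes $\mathbf{b}_{\mathcal{P}_2}$ and is separated by $J_2$; confusion at a $\mathcal{D}_1$-receiver is separated by $J_1$ through the first clause of the lexicographic product; and confusion at a $\mathcal{D}_3$-receiver forces $\mathbf{b}_{\mathcal{P}_1}$ to be equal (since $\mathcal{D}_3\to\mathcal{D}_1$ is fully-participated), so it is separated by $J_1$ through the second clause of the lexicographic product. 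The symmetry recorded in Remark~\ref{rem1} (all $J$-blocks isomorphic, with inter-$J$-block adjacency being all-or-nothing) is precisely what guarantees that allowing $J_2$ to ignore $\mathbf{b}_{\mathcal{P}_3}$ does not break validity.

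The main obstacle will be the bookkeeping in that case analysis: a single confusable pair may be adjacent because of simultaneous confusion at several receivers, so I must check that in every such overlap the coloring $J_1$ (possibly together with $J_2$) still produces the exact color-inequalities demanded by the relevant one of Lemmas~\ref{lemThapaColor1}--\ref{lemThapaColor4}; in particular I must confirm that no confusable pair requires $\mathcal{S}_2$ to resolve a pure $\mathbf{b}_{\mathcal{P}_3}$ difference that $J_1$ has already absorbed. Once (i) is in place, part (ii) follows immediately: because $\mathcal{H}_{21}$ is obtained from $\mathcal{H}_{18}$ by swapping the labels $1$ and $2$ of the sub-digraphs, Observation~\ref{obs1} transports the coloring and hence establishes (\ref{eqthmH21}).

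Finally, for (iii) I would argue exactly as in Corollary~\ref{corthmH1620}, with the lexicographic product replacing the disjunctive one. Starting from Lemma~\ref{lemlexi}, $\chi(\Gamma_t(\mathcal{D}_1^{18,\mathcal{P}})\circ\Gamma_t(\mathcal{D}_3^{18,\mathcal{P}}))\le\chi(\Gamma_t(\mathcal{D}_1^{18,\mathcal{P}}))\,\chi(\Gamma_t(\mathcal{D}_3^{18,\mathcal{P}}))$, I take base-$2$ logarithms, apply Lemma~\ref{addceil} to split the ceiling of the sum, and use Lemma~\ref{lemtwocolorrate} with $\mathcal{J}_2=\Phi$ to identify $t\beta_t(\mathcal{D}_{1\circ 3}^{18,\mathcal{P}})=\lceil\log_2\chi(\Gamma_t(\mathcal{D}_{1\circ 3}^{18,\mathcal{P}}))\rceil$. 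Dividing by $t$ gives $\beta_t(\mathcal{D}_{1\circ 3}^{18,\mathcal{P}})\le\beta_t(\mathcal{D}_1^{18,\mathcal{P}})+\beta_t(\mathcal{D}_3^{18,\mathcal{P}})+\epsilon/t$ with $\epsilon\in\{-1,0\}$; substituting into (\ref{eqthmH18}) yields (\ref{eqcorH1821}) for $k=18$, and Observation~\ref{obs1} extends it to $k=21$.
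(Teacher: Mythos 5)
Your proposal is correct and follows essentially the same route as the paper's proof: your coloring functions $J_1$ (an optimal proper coloring of $\Gamma_t(\mathcal{D}_1^{18,\mathcal{P}})\circ\Gamma_t(\mathcal{D}_3^{18,\mathcal{P}})$, applied identically in every $J$-block) and $J_2$ (an optimal coloring of $\Gamma_t(\mathcal{D}_2^{18,\mathcal{P}})$ ignoring $\mathbf{b}_{\mathcal{P}_3}$) are exactly the paper's construction, your receiver-by-receiver case analysis reproduces the paper's edge enumeration (including the key fact that fully-participated $\mathcal{D}_3\to\mathcal{D}_1$ forces equal $\mathbf{b}_{\mathcal{P}_1}$ so that the second clause of the lexicographic product absorbs pure $\mathbf{b}_{\mathcal{P}_3}$ confusions), and your handling of (ii) via Observation~\ref{obs1} and of (iii) via Lemma~\ref{lemlexi} in place of Lemma~\ref{disjprod} matches the paper verbatim.
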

\begin{proof}
	See Appendix B for the proof of (\ref{eqthmH18}). The proof of (\ref{eqthmH21}) follows from the proof of (\ref{eqthmH18}) in conjunction with Observation 1. The proof of (\ref{eqcorH1821}) follows from the proofs of (\ref{eqthmH18}) and (\ref{eqthmH21}), on the same lines as that of Corollary \ref{corthmH1620}, using Lemma \ref{lemlexi} instead of Lemma \ref{disjprod}.
\end{proof}

\subsection{An achievable broadcast rate for any TUICP with the side-information digraph being $\mathcal{D}^{23}$ or $\mathcal{D}^{25}$.}

The following theorem provides achievable broadcast rates for two sub-cases of Case I, availing the symmetries of the confusion graph as seen in Theorems \ref{thmH1620} and \ref{thmH1821}. 

\begin{thm}
	For any TUICP with the  side-information digraph $\mathcal{D}^{k}$, $k \in \{23,25\}$, having fully-participated interactions between its sub-digraphs $\mathcal{D}_{i}^{k,\mathcal{P}}$, $i \in \{1,2,3\}$, for any $\mathcal{P}$, and $t$-bit messages for any finite $t$, the following broadcast rates are achievable, 
	\begin{equation}
	(i) ~ p_{t}(\mathcal{D}^{23},\mathcal{P})=\beta_{t}(\mathcal{D}_{2}^{23,\mathcal{P}})+\beta_{t}(\mathcal{D}_{3 \circ 1}^{23,\mathcal{P}}),
	\label{eqthmH23}
	\end{equation}
	\begin{equation}
	(ii) ~ p_{t}(\mathcal{D}^{25},\mathcal{P})=\beta_{t}(\mathcal{D}_{1}^{25,\mathcal{P}})+\beta_{t}(\mathcal{D}_{3 \circ 2}^{25,\mathcal{P}}),
	\label{eqthmH25}
	\end{equation}
	and for the same achievable broadcast rate $p_{t}(\mathcal{D}^{k},\mathcal{P})$, we have
	\begin{equation}
	(iii) ~
	p_{t}(\mathcal{D}^{k},\mathcal{P}) \leq \beta_{t}(\mathcal{D}_{1}^{k,\mathcal{P}})+\beta_{t}(\mathcal{D}_{2}^{k,\mathcal{P}})+\beta_{t}(\mathcal{D}_{3}^{k,\mathcal{P}})+\epsilon/t, 
	\label{eqcorH2325}
	\end{equation}	
	for some $\epsilon \in \{-1,0\}$.
	\label{thmH2325}
\end{thm}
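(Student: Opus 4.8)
The plan is to prove $(\ref{eqthmH23})$ directly by exhibiting a valid two-sender coloring of $\Gamma_t(\mathcal{D}^{23})$ whose palettes have sizes exactly $\chi(\Gamma_t(\mathcal{D}_{3\circ 1}^{23,\mathcal{P}}))$ for $\mathcal{S}_1$ and $\chi(\Gamma_t(\mathcal{D}_2^{23,\mathcal{P}}))$ for $\mathcal{S}_2$, then to invoke Lemma \ref{lemtwocolorrate}; to obtain $(\ref{eqthmH25})$ for $\mathcal{D}^{25}$ from $(\ref{eqthmH23})$ by the label interchange of Observation \ref{obs1}; and to derive $(\ref{eqcorH2325})$ from $(\ref{eqthmH23})$--$(\ref{eqthmH25})$ exactly as in Corollary \ref{corthmH1620}, but with the lexicographic bound of Lemma \ref{lemlexi} in place of the disjunctive one. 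For $\mathcal{H}_{23}$ the sub-digraph $\mathcal{D}_2^{23,\mathcal{P}}$ does not interact with the other two and the single interaction is the fully-participated $\mathcal{D}_1^{23,\mathcal{P}}\rightarrow\mathcal{D}_3^{23,\mathcal{P}}$, so the whole argument mirrors the proof of Theorem \ref{thmH1821} in Appendix B with the roles of $\mathcal{D}_1$ and $\mathcal{D}_3$ exchanged inside the lexicographic product.

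First I would pin down the confusion structure. Since $\mathcal{D}_2$ does not interact, confusion at a $\mathcal{D}_2$-receiver constrains only the $\mathcal{P}_2$-labels and leaves $\mathcal{P}_1,\mathcal{P}_3$ free; since $\mathcal{D}_1\rightarrow\mathcal{D}_3$ is fully participated, every $\mathcal{D}_1$-receiver knows all of $\mathcal{P}_3$, so confusion at a $\mathcal{D}_1$-receiver forces the $\mathcal{P}_3$-labels to agree; and since no edge runs $\mathcal{D}_3\rightarrow\mathcal{D}_1$, confusion at a $\mathcal{D}_3$-receiver leaves $\mathcal{P}_1$ free. Restricted to the $(\mathcal{P}_1,\mathcal{P}_3)$-labels these rules are precisely the adjacency rule of the lexicographic product with $\Gamma_t(\mathcal{D}_3^{23,\mathcal{P}})$ dominant and $\Gamma_t(\mathcal{D}_1^{23,\mathcal{P}})$ subordinate, i.e.\ $\Gamma_t(\mathcal{D}_{3\circ 1}^{23,\mathcal{P}})$ of Definition \ref{deflexi1}. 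I would therefore let $J_1(\mathbf{b}_{\mathcal{P}_1}^{i},\mathbf{b}_{\mathcal{P}_3}^{k})$ be any proper $\chi(\Gamma_t(\mathcal{D}_{3\circ 1}^{23,\mathcal{P}}))$-coloring of this product (legitimate for $\mathcal{S}_1$, which owns $\mathcal{P}_1$ and $\mathcal{P}_3$), and let $J_2(\mathbf{b}_{\mathcal{P}_2}^{j},\mathbf{b}_{\mathcal{P}_3}^{k})$ be a proper $\chi(\Gamma_t(\mathcal{D}_2^{23,\mathcal{P}}))$-coloring of $\Gamma_t(\mathcal{D}_2^{23,\mathcal{P}})$ that ignores its $\mathcal{P}_3$-argument. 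The block symmetries of Lemma \ref{lemblockiso} together with the complete-inter-block property noted in Remark \ref{rem1} guarantee this assignment is consistent across all blocks.

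Next I would verify validity by splitting each confusable pair according to the responsible receiver and checking it against Lemmas \ref{lemThapaColor1}--\ref{lemThapaColor4}. For a $\mathcal{D}_1$-confusable pair the $\mathcal{P}_3$-labels coincide and the $\mathcal{P}_1$-labels are $\Gamma_t(\mathcal{D}_1)$-adjacent, so the pair lies in the subordinate layer of $\Gamma_t(\mathcal{D}_{3\circ 1})$ and $J_1$ already separates it; for a $\mathcal{D}_3$-confusable pair the $\mathcal{P}_3$-labels are $\Gamma_t(\mathcal{D}_3)$-adjacent, so the pair lies in the dominant layer and $J_1$ separates it irrespective of the $\mathcal{P}_1$-labels; for a $\mathcal{D}_2$-confusable pair the $\mathcal{P}_2$-labels are $\Gamma_t(\mathcal{D}_2)$-adjacent and $J_2$ separates it. In every case the ordered color pair changes in at least one coordinate, so the coloring is proper, and Lemma \ref{lemtwocolorrate} yields $p_t(\mathcal{D}^{23},\mathcal{P})=\beta_t(\mathcal{D}_2^{23,\mathcal{P}})+\beta_t(\mathcal{D}_{3\circ 1}^{23,\mathcal{P}})$. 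Finally, combining $\chi(\Gamma_t(\mathcal{D}_3^{23,\mathcal{P}})\circ\Gamma_t(\mathcal{D}_1^{23,\mathcal{P}}))\le\chi(\Gamma_t(\mathcal{D}_3^{23,\mathcal{P}}))\chi(\Gamma_t(\mathcal{D}_1^{23,\mathcal{P}}))$ from Lemma \ref{lemlexi} with Lemma \ref{addceil} gives $\beta_t(\mathcal{D}_{3\circ 1}^{23,\mathcal{P}})\le\beta_t(\mathcal{D}_1^{23,\mathcal{P}})+\beta_t(\mathcal{D}_3^{23,\mathcal{P}})+\epsilon/t$ with $\epsilon\in\{-1,0\}$; adding $\beta_t(\mathcal{D}_2^{23,\mathcal{P}})$ proves $(\ref{eqcorH2325})$.

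The step I expect to be the main obstacle is justifying that $\mathcal{S}_2$ may drop its $\mathcal{P}_3$-dependence entirely. One must confirm that no confusable pair forces a color change that only $\mathcal{S}_2$ could supply; in particular every pair confusable purely through $\mathcal{D}_3$, \emph{including} one whose $\mathcal{P}_1$-labels differ, must already be separated by $J_1$. This is exactly what the dominant position of $\Gamma_t(\mathcal{D}_3^{23,\mathcal{P}})$ in $\mathcal{D}_{3\circ 1}^{23,\mathcal{P}}$ delivers, since an edge appears the moment the $\mathcal{P}_3$-labels are $\mathcal{D}_3$-adjacent, and it is precisely the feature that distinguishes this case from $\mathcal{H}_{18}$, where $\Gamma_t(\mathcal{D}_3)$ sits in the subordinate layer and such mismatched-$\mathcal{P}_1$ pairs never arise. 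Getting this compatibility with Lemma \ref{lemThapaColor4} right, uniformly over all $\mathcal{P}_1$-labels, is the crux of the verification.
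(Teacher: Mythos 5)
Your proposal is correct and follows essentially the same route as the paper's Appendix C: view $\Gamma_t(\mathcal{D}^{23})$ through its $J$-blocks, identify the $(\mathcal{P}_1,\mathcal{P}_3)$-structure with the lexicographic product $\Gamma_t(\mathcal{D}_3^{23,\mathcal{P}})\circ\Gamma_t(\mathcal{D}_1^{23,\mathcal{P}})$ with $\Gamma_t(\mathcal{D}_3^{23,\mathcal{P}})$ dominant, let $\mathcal{S}_1$ color all blocks identically with $\chi(\Gamma_t(\mathcal{D}_{3\circ 1}^{23,\mathcal{P}}))$ colors and $\mathcal{S}_2$ resolve the remaining confusions with $\chi(\Gamma_t(\mathcal{D}_2^{23,\mathcal{P}}))$ colors, then invoke Lemma \ref{lemtwocolorrate}, Observation \ref{obs1} for $\mathcal{D}^{25}$, and Lemma \ref{lemlexi} with Lemma \ref{addceil} for part (iii).

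One factual slip is worth correcting: $\mathcal{H}_{23}$ is not the digraph with the single edge $1\rightarrow 3$. Per the paper's own edge listing in Appendix C, it has the three fully-participated interactions $\mathcal{D}_1\rightarrow\mathcal{D}_2$, $\mathcal{D}_1\rightarrow\mathcal{D}_3$, and $\mathcal{D}_3\rightarrow\mathcal{D}_2$, so receivers in $\mathcal{D}_1^{23,\mathcal{P}}$ and $\mathcal{D}_3^{23,\mathcal{P}}$ in fact know all of $\mathcal{P}_2$. The slip happens to be harmless: the only structural facts your verification actually uses --- no outgoing interaction from $\mathcal{D}_2^{23,\mathcal{P}}$, the fully-participated $\mathcal{D}_1^{23,\mathcal{P}}\rightarrow\mathcal{D}_3^{23,\mathcal{P}}$, and the absence of $\mathcal{D}_3^{23,\mathcal{P}}\rightarrow\mathcal{D}_1^{23,\mathcal{P}}$ --- are all correct, and pretending that the $\mathcal{D}_1$- and $\mathcal{D}_3$-receivers lack the $\mathcal{P}_2$ side-information only enlarges the set of confusable pairs, so the coloring you verify to be proper remains proper for the true, sparser confusion graph. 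Your closing point about why $J_2$ may ignore its $\mathcal{P}_3$-argument (dominance of $\Gamma_t(\mathcal{D}_3^{23,\mathcal{P}})$ in the product separating every $\mathcal{D}_3$-confusable pair regardless of the $\mathcal{P}_1$-labels) is exactly the feature the paper exploits, and is what distinguishes this case from $\mathcal{D}^{18}$, where the product order is reversed.
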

\begin{proof}
	See Appendix C for the proof of (\ref{eqthmH23}). The proof of (\ref{eqthmH25}) follows from the proof of (\ref{eqthmH23}) in conjunction with Observation 1. The proof of (\ref{eqcorH2325}) follows from the proofs of (\ref{eqthmH23}) and (\ref{eqthmH25}), on the same lines as that of Corollary \ref{corthmH1620}, using Lemma \ref{lemlexi} instead of Lemma \ref{disjprod}.
\end{proof}

\subsection{An achievable broadcast rate for any TUICP belonging to Case II-E with fully-participated interactions.}

The following theorem provides an achievable broadcast rate for any TUICP belonging to Case II-E, by providing a code construction which uses optimal codes of the sub-problems described by the three sub-digraphs of the side-information digraph. This provides a tighter upper bound compared to the one given in \cite{CTLO} and stated in (\ref{CTLOUBND}). 

\begin{thm}
	For any TUICP with the side-information digraph  $\mathcal{D}^{k}$, $k \in \{58,59,\cdots,64\}$, having fully-participated interactions between its sub-digraphs $\mathcal{D}_{i}^{k,\mathcal{P}}$, $i \in \{1,2,3\}$, for any $\mathcal{P}$, and $t$-bit messages for any finite $t$, the following broadcast rate is achievable.
	\begin{eqnarray}
	\begin{split}
	p_{t}(\mathcal{D}^{k},\mathcal{P}) & =  max\{\beta_{t}(\mathcal{D}_{1}^{k,\mathcal{P}})+\beta_{t}(\mathcal{D}_{2}^{k,\mathcal{P}}), \beta_{t}(\mathcal{D}_{1}^{k,\mathcal{P}}) \\ & +\beta_{t}(\mathcal{D}_{3}^{k,\mathcal{P}}),\beta_{t}(\mathcal{D}_{2}^{k,\mathcal{P}})+\beta_{t}(\mathcal{D}_{3}^{k,\mathcal{P}})\}.
	\end{split}
	\end{eqnarray}
	\label{thmH58_64a}
\end{thm}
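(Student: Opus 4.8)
The plan is to exhibit, for every ordering of the three sub-problem rates, an explicit two-sender code whose two sub-codewords are built only from optimal single-sender codes of $\mathcal{D}_1^{k,\mathcal{P}}$, $\mathcal{D}_2^{k,\mathcal{P}}$, $\mathcal{D}_3^{k,\mathcal{P}}$; since the claim is an achievability (upper-bound) statement, it suffices to produce one valid code of the stated length. Fix optimal $t$-bit index codes $\mathcal{C}_1,\mathcal{C}_2,\mathcal{C}_3$ for the three sub-digraphs, of lengths $\ell_i=t\,\beta_t(\mathcal{D}_i^{k,\mathcal{P}})$. The key structural fact I would use is that $\mathcal{P}_3$ is available to both senders, so both $\mathcal{S}_1$ and $\mathcal{S}_2$ can compute $\mathcal{C}_3$, whereas $\mathcal{C}_1$ can be produced only by $\mathcal{S}_1$ and $\mathcal{C}_2$ only by $\mathcal{S}_2$. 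Because $\max\{\ell_1+\ell_2,\ell_1+\ell_3,\ell_2+\ell_3\}=\ell_1+\ell_2+\ell_3-\min_i \ell_i$, the target length equals the total length of the two largest sub-codes, so the whole scheme amounts to transmitting the two largest sub-codes in full and ``folding'' the smallest one into a shared slot at no extra cost.

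Concretely, I would split into cases according to which of $\beta_t(\mathcal{D}_1^{k,\mathcal{P}}),\beta_t(\mathcal{D}_2^{k,\mathcal{P}}),\beta_t(\mathcal{D}_3^{k,\mathcal{P}})$ is smallest, using Observation \ref{obs1} to halve the work by the symmetry of $\mathcal{S}_1,\mathcal{S}_2$. When the minimum is $\beta_t(\mathcal{D}_2^{k,\mathcal{P}})$, let $\mathcal{S}_1$ send $\mathcal{C}_1$ and $\mathcal{S}_2$ send $\mathcal{C}_2\oplus\mathcal{C}_3$ (the shorter word zero-padded), of total length $\ell_1+\max(\ell_2,\ell_3)=\ell_1+\ell_3$; otherwise (the minimum is $\beta_t(\mathcal{D}_1^{k,\mathcal{P}})$ or $\beta_t(\mathcal{D}_3^{k,\mathcal{P}})$), let $\mathcal{S}_1$ send $\mathcal{C}_1\oplus\mathcal{C}_3$ and $\mathcal{S}_2$ send $\mathcal{C}_2$, of total length $\max(\ell_1,\ell_3)+\ell_2$. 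A short check with the $\max$ identity shows that in each case the total equals $t\,\max\{\beta_t(\mathcal{D}_1^{k,\mathcal{P}})+\beta_t(\mathcal{D}_2^{k,\mathcal{P}}),\beta_t(\mathcal{D}_1^{k,\mathcal{P}})+\beta_t(\mathcal{D}_3^{k,\mathcal{P}}),\beta_t(\mathcal{D}_2^{k,\mathcal{P}})+\beta_t(\mathcal{D}_3^{k,\mathcal{P}})\}$, so the length bookkeeping is immediate and is not where the difficulty lies.

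The substance is the decoding argument, which I would carry out group by group using the fully-participated interactions of $\mathcal{H}_k$. Each receiver lies in exactly one of $\mathcal{D}_1,\mathcal{D}_2,\mathcal{D}_3$ and, by the fully-participated hypothesis, holds as side-information every message of each sub-digraph into which its own sub-digraph has an interaction; hence it can itself recompute the corresponding sub-codes and cancel them from the received words. The crucial mechanism is a two-stage cancellation: a receiver first strips a sub-code it already knows from one sender's word to recover the shared code $\mathcal{C}_3$, and then uses that recovered $\mathcal{C}_3$ to strip the other sender's word and obtain the sub-code it actually demands; each step is justified formally by Lemmas \ref{lemThapaColor1}--\ref{lemThapaColor4} applied to the confusion graph. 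The main obstacle, and the part requiring care, is to verify that for every one of the seven interaction digraphs $\mathcal{H}_k$, $k\in\{58,\dots,64\}$, and for the sub-code actually folded, the side-information demanded by each cancellation step is indeed present; this is exactly where the specific edge pattern of Case~II-E (as opposed to Cases~II-A through II-D) is used, and it is the step I would expect to be the most delicate. Once decodability is established, the resulting codelength gives the claimed achievable rate, which also serves as the tighter upper bound replacing (\ref{CTLOUBND}).
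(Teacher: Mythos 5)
Your proposal gets the rate formula, the case split, and the length arithmetic right, but the code you construct is not decodable on most of the seven Case~II-E interaction digraphs, so this is a genuine gap rather than a delicate verification left open. Concretely, Case~II-E contains a digraph whose only edges are the $2$-cycle between vertices $1$ and $2$ together with $\mathcal{D}_3 \rightarrow \mathcal{D}_1$ (the paper's decoding step explicitly branches on whether $\mathcal{D}_3^{k,\mathcal{P}} \rightarrow \mathcal{D}_1^{k,\mathcal{P}}$ or $\mathcal{D}_3^{k,\mathcal{P}} \rightarrow \mathcal{D}_2^{k,\mathcal{P}}$ is present, and no edge between $2$ and $3$ need exist). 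For this digraph, receivers in $\mathcal{D}_2^{k,\mathcal{P}}$ have external side-information only from $\mathcal{P}_1$, and receivers in $\mathcal{D}_3^{k,\mathcal{P}}$ only from $\mathcal{P}_1$. Under your first case ($\beta_t(\mathcal{D}_2^{k,\mathcal{P}})$ smallest), $\mathcal{S}_1$ sends $\mathcal{C}_1$ and $\mathcal{S}_2$ sends $\mathcal{C}_2\oplus\mathcal{C}_3$: a receiver in $\mathcal{D}_2^{k,\mathcal{P}}$ cannot compute $\mathcal{C}_3$ (it knows nothing of $\mathcal{P}_3$) and cannot recover $\mathcal{C}_3$ from $\mathcal{S}_1$'s word, which is just $\mathcal{C}_1$; your two-stage cancellation has no first stage, because $\mathcal{C}_3$ appears only in the very word this receiver is trying to decode. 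Receivers in $\mathcal{D}_3^{k,\mathcal{P}}$ fail symmetrically, since they cannot strip $\mathcal{C}_2$. Folding $\mathcal{C}_3$ into $\mathcal{S}_1$'s word instead merely moves the failure to $\mathcal{D}_1^{k,\mathcal{P}}$ receivers. Also, Lemmas \ref{lemThapaColor1}--\ref{lemThapaColor4} cannot rescue the decoding argument: they constrain two-sender colorings of the confusion graph and say nothing about an XOR-based code; the paper's proof of Theorem \ref{thmH58_64a} never touches the confusion graph.

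The idea you are missing is the paper's \emph{symmetric} folding of the shared code. With $\beta_t(\mathcal{D}_2^{k,\mathcal{P}})$ smallest, both senders embed the \emph{same} prefix of $\mathcal{C}_3$: $\mathcal{S}_1$ sends $\mathcal{C}_1\oplus\mathcal{C}_3[1:t\beta_t(\mathcal{D}_2^{k,\mathcal{P}})]$, $\mathcal{S}_2$ sends $\mathcal{C}_2\oplus\mathcal{C}_3[1:t\beta_t(\mathcal{D}_2^{k,\mathcal{P}})]$, and the residual $\mathcal{C}_3[1+t\beta_t(\mathcal{D}_2^{k,\mathcal{P}}):t\beta_t(\mathcal{D}_3^{k,\mathcal{P}})]$ is sent in the clear by either sender, for a total length of $t(\beta_t(\mathcal{D}_1^{k,\mathcal{P}})+\beta_t(\mathcal{D}_3^{k,\mathcal{P}}))$, exactly your target. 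Now receivers in $\mathcal{D}_1^{k,\mathcal{P}}$ and $\mathcal{D}_2^{k,\mathcal{P}}$ never need any knowledge of $\mathcal{P}_3$: XORing the two received words cancels the shared prefix, leaving $\mathcal{C}_1\oplus\mathcal{C}_2$, from which each strips the sub-code it can compute through the fully-participated $1\leftrightarrow 2$ interaction. Receivers in $\mathcal{D}_3^{k,\mathcal{P}}$ strip whichever of $\mathcal{C}_1$, $\mathcal{C}_2$ their outgoing interaction lets them compute, recover the prefix of $\mathcal{C}_3$ from that sender's word, and append the clear residual. The case $\beta_t(\mathcal{D}_3^{k,\mathcal{P}})$ smallest is handled by citing Theorem 9 of \cite{CTLO} (rate $\beta_t(\mathcal{D}_1^{k,\mathcal{P}})+\beta_t(\mathcal{D}_2^{k,\mathcal{P}})$), and $\beta_t(\mathcal{D}_1^{k,\mathcal{P}})$ smallest follows by the symmetry you invoke via Observation \ref{obs1}. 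It is this self-cancelling placement of $\mathcal{C}_3$ in \emph{both} codewords, not a per-digraph check of a one-sided scheme, that makes the stated rate achievable on all seven digraphs.
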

\begin{proof}
	
	\par We provide a code-construction for $t$-bit messages for any finite $t$ and show that the constructed code satisfies all the demands of the receivers. For the case with  $\beta_{t}(\mathcal{D}_3^{k,\mathcal{P}}) \leq min\{\beta_{t}(\mathcal{D}_1^{k,\mathcal{P}}),\beta_{t}(\mathcal{D}_2^{k,\mathcal{P}})\}$, the broadcast rate $p_{t}(\mathcal{D}^{k},\mathcal{P}) =  \beta_{t}(\mathcal{D}_{1}^{k,\mathcal{P}})+\beta_{t}(\mathcal{D}_{2}^{k,\mathcal{P}})$, has been shown to be achievable in Theorem 9 of \cite{CTLO}. 
	
	Without loss of generality, we assume that $\beta_{t}(\mathcal{D}_2^{k,\mathcal{P}}) \leq min\{\beta_{t}(\mathcal{D}_1^{k,\mathcal{P}}),\beta_{t}(\mathcal{D}_3^{k,\mathcal{P}})\}$. The case with $\beta_{t}(\mathcal{D}_1^{k,\mathcal{P}}) \leq min\{\beta_{t}(\mathcal{D}_2^{k,\mathcal{P}}),\beta_{t}(\mathcal{D}_3^{k,\mathcal{P}})\}$ can be proved similarly. Let $\mathcal{C}_i$ be a code with the optimal broadcast rate with $t$-bit messages for any finite $t$ given by  $\beta_t(\mathcal{D}_i^{k,\mathcal{P}})$ for the single-sender unicast ICP described by $\mathcal{D}_i^{k,\mathcal{P}}$, $i \in \{1,2,3\}$. Our code for the original TUICP $\mathcal{I}(\mathcal{D}^k,\mathcal{P})$ is given as follows:
	\[  \mathcal{C}_1 \oplus \mathcal{C}_3[1:t\beta_{t}(\mathcal{D}_2^{k,\mathcal{P}})]  ~ ~ ~\mbox{sent by $\mathcal{S}_1$}, \]
	\[  \mathcal{C}_2 \oplus \mathcal{C}_3[1:t\beta_{t}(\mathcal{D}_2^{k,\mathcal{P}})]  ~ ~ ~ \mbox{sent by $\mathcal{S}_2$}, \]
	\[\mathcal{C}_3[1+t\beta_{t}(\mathcal{D}_2^{k,\mathcal{P}}):t\beta_{t}(\mathcal{D}_3^{k,\mathcal{P}})] ~ ~ ~ \mbox{sent by any one of $\mathcal{S}_1$ or $\mathcal{S}_2$}. \]    
	The overall length of the two-sender code is given by
	\[ t(\beta_{t}(\mathcal{D}_1^{k,\mathcal{P}})+\beta_{t}(\mathcal{D}_2^{k,\mathcal{P}})+(\beta_{t}(\mathcal{D}_3^{k,\mathcal{P}})-\beta_{t}(\mathcal{D}_2^{k,\mathcal{P}})))
	\]
	\[
	=t(\beta_{t}(\mathcal{D}_1^{k,\mathcal{P}})+\beta_{t}(\mathcal{D}_3^{k,\mathcal{P}})),
	\] with the broadcast rate $\beta_{t}(\mathcal{D}_1^{k,\mathcal{P}})+\beta_{t}(\mathcal{D}_3^{k,\mathcal{P}})$. 
	\par We provide the decoding procedure for receivers in the side-information digraphs $\mathcal{D}^k$ with $k \in \{58,59,\cdots,62\}$. The decoding procedure for those in the side-information digraphs $\mathcal{D}^k$ with $k \in \{63,64\}$ is similar. Receivers belonging to $\mathcal{D}_1^{k,\mathcal{P}}$ and $\mathcal{D}_2^{k,\mathcal{P}}$ recover their demanded messages using
	\[(\mathcal{C}_2 \oplus \mathcal{C}_3[1:t\beta_{t}(\mathcal{D}_2^{k,\mathcal{P}})])  \oplus (\mathcal{C}_1 \oplus \mathcal{C}_3[1:t\beta_{t}(\mathcal{D}_2^{k,\mathcal{P}})]) = \mathcal{C}_1 \oplus \mathcal{C}_2
	\]
	and their side-information $\mathcal{P}_2$ and $\mathcal{P}_1$ respectively. Receivers belonging to $\mathcal{D}_3^{k,\mathcal{P}}$ recover their demanded messages using $\mathcal{C}_3[t\beta_{t}(\mathcal{D}_2^{k,\mathcal{P}})+1:t\beta_{t}(\mathcal{D}_3^{k,\mathcal{P}})]$ and either  $\mathcal{C}_2 \oplus \mathcal{C}_3[1:t\beta_{t}(\mathcal{D}_2^{k,\mathcal{P}})]$ or $\mathcal{C}_1 \oplus \mathcal{C}_3[1:t\beta_{t}(\mathcal{D}_2^{k,\mathcal{P}})]$, and their side-information, depending on the presence of the interaction  $\mathcal{D}_3^{k,\mathcal{P}} \rightarrow \mathcal{D}_{2}^{k,\mathcal{P}}$ or $\mathcal{D}_3^{k,\mathcal{P}} \rightarrow \mathcal{D}_{1}^{k,\mathcal{P}}$ respectively. 
\end{proof}

\begin{rem}
	Note that the upper bound on  $\beta_{t}(\mathcal{D}^{k},\mathcal{P})$, $k \in \{58,59,\cdots,64\}$, stated in (\ref{CTLOUBND}) can also be written as follows.
	\begin{eqnarray}
	\begin{split}
	\beta_{t}(\mathcal{D}^{k},\mathcal{P}) & \leq  max\{\beta_{t}(\mathcal{D}_{1}^{k,\mathcal{P}})+\beta_{t}(\mathcal{D}_{2}^{k,\mathcal{P}}), \beta_{t}(\mathcal{D}_{1}^{k,\mathcal{P}})+ \\ &  \beta_{t}(\mathcal{D}_{3}^{k,\mathcal{P}}), \beta_{t}(\mathcal{D}_{2}^{k,\mathcal{P}})+\beta_{t}(\mathcal{D}_{3}^{k,\mathcal{P}}),2\beta_{t}(\mathcal{D}_{3}^{k,\mathcal{P}})\}. 
	\end{split}
	\end{eqnarray}
	Comparing this upper bound with the achievable broadcast rate given in Theorem \ref{thmH58_64a}, we see that the achievable broadcast rate given in Theorem \ref{thmH58_64a} is a tighter upper bound.
\end{rem}

\section{Optimal Broadcast Rates for Cases II-C, II-D, and II-E}

In this section, we provide the optimal broadcast rate for any TUICP with fully-participated interactions between the sub-digraphs of the side-information digraph $\mathcal{D}^k$, where $k \in \{34,35,\cdots,64\}$. Optimal broadcast rate for any TUICP with $\mathcal{D}^k$ such that $k \in \{1,2,\cdots,33\}$ were given in \cite{CTLO}. For $k \in \{34,35,\cdots,64\}$, results given in \cite{CTLO} depend on the relation between the optimal broadcast rates of the individual single-sender sub-problems described by the three sub-digraphs of the side-information digraph. The results given in this section along with those given in \cite{CTLO} provide a complete  characterisation of the optimal broadcast rate of any TUICP with fully-participated interactions.

We require the following lemma which is a part of Theorem 3 in \cite{tahmasbi2015critical} to derive our results. 
\begin{lem}[Theorem 3,  \cite{tahmasbi2015critical}]
	Consider any single-sender unicast index coding problem described by a side-information digraph. Removing edges not lying on any directed cycle does not change the optimal broadcast rate.
	\label{lemscc}
\end{lem}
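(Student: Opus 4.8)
The plan is to establish the two inequalities $\beta(\mathcal{D}\setminus e)\ge\beta(\mathcal{D})$ and $\beta(\mathcal{D}\setminus e)\le\beta(\mathcal{D})$ separately, where $e=(u,v)$ is the edge lying on no directed cycle and $\mathcal{D}\setminus e$ is $\mathcal{D}$ with $e$ deleted. The first inequality is immediate: deleting $(u,v)$ only removes $\mathbf{x}_v$ from the side-information of receiver $u$, so every receiver in $\mathcal{D}$ has at least as much side-information as the corresponding receiver in $\mathcal{D}\setminus e$. Hence any valid index code for $\mathcal{D}\setminus e$ remains valid for $\mathcal{D}$, giving $\beta_t(\mathcal{D})\le\beta_t(\mathcal{D}\setminus e)$ for every $t$ and thus $\beta(\mathcal{D})\le\beta(\mathcal{D}\setminus e)$. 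Equivalently, $\Gamma_t(\mathcal{D})$ is a subgraph of $\Gamma_t(\mathcal{D}\setminus e)$, so $\chi(\Gamma_t(\mathcal{D}))\le\chi(\Gamma_t(\mathcal{D}\setminus e))$.

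For the reverse inequality I would first exploit the hypothesis structurally. Since $(u,v)$ lies on no directed cycle, there is no directed path from $v$ to $u$. Let $V_2$ be the set of vertices reachable from $v$ (so $v\in V_2$ and $u\notin V_2$), and let $V_1=\mathcal{V}(\mathcal{D})\setminus V_2$, so $u\in V_1$. By construction $V_2$ is closed under out-edges, hence there is no edge from $V_2$ to $V_1$; every edge crossing the cut runs from $V_1$ to $V_2$, and $e$ is one such edge. The induced sub-digraphs $\mathcal{D}[V_1]$ and $\mathcal{D}[V_2]$ are unaffected by deleting the cross-edge $e$, and the concatenation of an optimal code for $\mathcal{D}[V_1]$ with an optimal code for $\mathcal{D}[V_2]$ is a valid code for $\mathcal{D}\setminus e$ that uses no cross side-information (each $V_1$-receiver decodes within $V_1$, each $V_2$-receiver within $V_2$, as its side-information is entirely internal to $V_2$). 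This gives $\beta(\mathcal{D}\setminus e)\le\beta(\mathcal{D}[V_1])+\beta(\mathcal{D}[V_2])$. Combined with the first inequality we obtain the sandwich $\beta(\mathcal{D}[V_1])+\beta(\mathcal{D}[V_2])\ge\beta(\mathcal{D}\setminus e)\ge\beta(\mathcal{D})$, so it suffices to prove the matching converse $\beta(\mathcal{D})\ge\beta(\mathcal{D}[V_1])+\beta(\mathcal{D}[V_2])$; all quantities then collapse and $\beta(\mathcal{D})=\beta(\mathcal{D}\setminus e)$ follows.

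The crux, and the step I expect to be the main obstacle, is this converse additivity $\beta(\mathcal{D})\ge\beta(\mathcal{D}[V_1])+\beta(\mathcal{D}[V_2])$ across the one-way cut. The guiding intuition is a genie argument: fixing all $V_1$-messages turns any code for $\mathcal{D}$ into a valid code for $\mathcal{D}[V_2]$ (the $V_2$-receivers hold no $V_1$-side-information, so their decoding is untouched), while fixing all $V_2$-messages turns it into a valid code for $\mathcal{D}[V_1]$ (the $V_2$-side-information of the $V_1$-receivers is then supplied by the genie). Since the two message-blocks are independent, the codeword must simultaneously carry a rate $\beta(\mathcal{D}[V_2])$ of $V_2$-information and a rate $\beta(\mathcal{D}[V_1])$ of $V_1$-information. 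The difficulty is that these two demands do \emph{not} add at a single blocklength: a one-shot counting bound shows only that every fixed-$V_1$ slice of the codeword takes at least $\chi(\Gamma_t(\mathcal{D}[V_2]))$ values and every fixed-$V_2$ slice at least $\chi(\Gamma_t(\mathcal{D}[V_1]))$ values, and by Lemma \ref{disjprod} the chromatic number is only sub-multiplicative under the disjunctive product, so the product bound can genuinely fail for finite $t$.

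I would therefore carry out this step \emph{asymptotically}, using $\beta=\lim_{t\to\infty}\tfrac1t\log_2\chi(\Gamma_t(\mathcal{D}))$ together with the fact that confusion graphs tensorise under blocklength, i.e. $\Gamma_t(\mathcal{D})=\Gamma_1(\mathcal{D})^{*t}$. The goal is to show that the one-way forward side-information across the cut cannot purchase a rate strictly below the sum in the limit, so that the forward cross-edges are non-binding and the broadcast rate is additive over the one-way partition exactly as it is over a disjoint union. Promoting the genie/counting intuition to this limiting additivity, rather than settling for its strictly weaker one-shot relaxation, is the principal technical hurdle; once it is in place, the sandwich of the previous paragraph closes and the lemma follows.
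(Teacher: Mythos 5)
First, a point of reference: the paper does not prove this lemma at all --- it is imported verbatim as part of Theorem 3 of \cite{tahmasbi2015critical}, so there is no internal proof to compare against, and your attempt is effectively a re-derivation of that external theorem. Your scaffolding is correct as far as it goes: the easy direction $\beta(\mathcal{D})\le\beta(\mathcal{D}\setminus e)$ by monotonicity of side information, the reachability partition $V_1,V_2$ with every cross edge oriented from $V_1$ to $V_2$, the concatenation bound $\beta(\mathcal{D}\setminus e)\le\beta(\mathcal{D}[V_1])+\beta(\mathcal{D}[V_2])$, and the observation that everything collapses once $\beta(\mathcal{D})\ge\beta(\mathcal{D}[V_1])+\beta(\mathcal{D}[V_2])$ is available. (For the plural statement of the lemma you would also want to note that deleting an edge creates no new cycles, so single-edge removal can be iterated; this is immediate but unstated.)

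The genuine gap is that this converse additivity across the one-way cut --- which you yourself flag as the crux --- is never proved. It is not a loose end; it is the entire content of the cited theorem, and your proposal stops exactly where the real work begins. Your own discussion explains why the one-shot genie/counting argument cannot close it (the chromatic number is only sub-multiplicative under the disjunctive product, Lemma \ref{disjprod}, so the per-slice counts need not multiply at any finite $t$), and what remains is a statement of intent to ``carry out this step asymptotically'' with no argument supplied. Closing it requires genuinely new machinery. One concrete route: show that $\Gamma_t(\mathcal{D})$ contains the lexicographic product $\Gamma_t(\mathcal{D}[V_2])\circ\Gamma_t(\mathcal{D}[V_1])$ as a spanning subgraph (confusion at a $V_2$-receiver involves no $V_1$-coordinates, while for tuples with equal $V_2$-parts the cross side information agrees automatically, so confusion within $\mathcal{D}[V_1]$ lifts); then pass to the fractional chromatic number, which is \emph{multiplicative} under the lexicographic product, $\chi_f(G\circ H)=\chi_f(G)\,\chi_f(H)$; and finally invoke the known fact that $\beta$ equals the common limit $\lim_{t\to\infty}\frac1t\log_2\chi_f(\Gamma_t(\cdot))=\lim_{t\to\infty}\frac1t\log_2\chi(\Gamma_t(\cdot))$. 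None of these ingredients (nor any entropic substitute) appears in your proposal, so as written it establishes only the trivial inequality $\beta(\mathcal{D})\le\beta(\mathcal{D}\setminus e)$ together with a correct but unconsummated reduction, not the lemma itself.
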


We require the following lemmas to prove our results.
\begin{lem}
	For any $\mathcal{D}$, $\mathcal{P}$ and finite $t$, if a side-information digraph $\mathcal{D}'$ is obtained by adding more directed edges to $\mathcal{D}$, we have  $\beta_{t}(\mathcal{D},\mathcal{P}) \geq \beta_{t}(\mathcal{D}',\mathcal{P})$ and $\beta(\mathcal{D},\mathcal{P}) \geq \beta(\mathcal{D}',\mathcal{P})$.
	\label{addedges}
\end{lem}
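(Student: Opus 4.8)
The plan is to exploit the monotonicity of decodability: forming $\mathcal{D}'$ by adding directed edges to $\mathcal{D}$ only enlarges each receiver's side-information set, and extra side-information can never hurt. First I would observe that the message partition $\mathcal{P}=(\mathcal{P}_1,\mathcal{P}_2,\mathcal{P}_3)$ is determined solely by the sender message sets $\mathcal{M}_1,\mathcal{M}_2$, and hence is identical for $(\mathcal{D},\mathcal{P})$ and $(\mathcal{D}',\mathcal{P})$. Consequently the admissible encoding functions $\mathbb{E}_1,\mathbb{E}_2$ are exactly the same in both problems, since an encoder depends only on which messages its sender holds, not on the side-information digraph.

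Next I would record the key containment on side-information. Adding a directed edge $(v_i,v_j)$ to $\mathcal{D}$ places $\bx_j$ into $\mathcal{K}_i$ and changes nothing else, so for every receiver $i$ we have $\mathcal{K}_i \subseteq \mathcal{K}_i'$, where $\mathcal{K}_i'$ is the side-information set in $\mathcal{D}'$. Now take any valid two-sender index code $(\{\mathbb{E}_s\},\{\mathbb{D}_i\})$ for $(\mathcal{D},\mathcal{P})$ of length $p_1+p_2$. I would keep the same encoders $\mathbb{E}_1,\mathbb{E}_2$ (admissible for $(\mathcal{D}',\mathcal{P})$ by the previous paragraph), and let receiver $i$ in $(\mathcal{D}',\mathcal{P})$ decode by applying the original $\mathbb{D}_i$ to $(\mathcal{C}_1,\mathcal{C}_2,\mathcal{K}_i)$, simply ignoring the extra side-information in $\mathcal{K}_i'\setminus\mathcal{K}_i$. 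Because the original code recovers $\bx_i$ correctly from $(\mathcal{C}_1,\mathcal{C}_2,\mathcal{K}_i)$, the same decoding succeeds for $(\mathcal{D}',\mathcal{P})$. Thus every valid code for $(\mathcal{D},\mathcal{P})$ is a valid code of equal length for $(\mathcal{D}',\mathcal{P})$. Minimizing over encoders in the definition of $\beta_t$ gives $\beta_t(\mathcal{D}',\mathcal{P}) \le \beta_t(\mathcal{D},\mathcal{P})$ for every finite $t$, and passing to the infimum over $t$ (recall $\beta=\inf_t\beta_t$) yields $\beta(\mathcal{D}',\mathcal{P}) \le \beta(\mathcal{D},\mathcal{P})$.

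I expect no serious obstacle; the only points needing care are the bookkeeping that the encoders are genuinely unchanged (which hinges on $\mathcal{P}$ being a function of $\mathcal{M}_1,\mathcal{M}_2$ alone) and that ``ignoring'' the surplus side-information is legitimate within the decoding-function formalism. As an alternative phrased entirely in the paper's confusion-graph language, one can instead show $\mathcal{E}(\Gamma_t(\mathcal{D}')) \subseteq \mathcal{E}(\Gamma_t(\mathcal{D}))$: a pair confusable at receiver $i$ under $\mathcal{D}'$ must agree on the larger set $\mathcal{K}_i'$ and is therefore also confusable at $i$ under $\mathcal{D}$, so $\Gamma_t(\mathcal{D}')$ is an edge-subgraph of $\Gamma_t(\mathcal{D})$ on the same vertex set. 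Since $\mathcal{P}$ is unchanged, any valid two-sender coloring of $\Gamma_t(\mathcal{D})$ remains valid for $\Gamma_t(\mathcal{D}')$ (fewer adjacency constraints), and Lemma \ref{lemtwocolorrate} delivers the same inequality by minimizing over a larger family of colorings.
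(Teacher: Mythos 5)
Your proposal is correct and takes essentially the same route as the paper: reuse any valid code for $(\mathcal{D},\mathcal{P})$ on $(\mathcal{D}',\mathcal{P})$, since the receivers in $\mathcal{D}'$ only gain side-information (which their decoders may ignore), then minimize over codes and pass to the limit in $t$ for the $\beta$ statement. Your additional confusion-graph phrasing ($\Gamma_t(\mathcal{D}')$ being an edge-subgraph of $\Gamma_t(\mathcal{D})$ on the same vertex set, so colorings carry over) is a fine equivalent formulation, but the core argument matches the paper's.
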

\begin{proof}
	Consider an optimal code for the two-sender problem $\mathcal{I}(\mathcal{D},\mathcal{P})$ with $t$-bit messages with broadcast rate $\beta_{t}(\mathcal{D},\mathcal{P})$. This code can be used to solve the two-sender problem $\mathcal{I}(\mathcal{D}',\mathcal{P})$ with $t$-bit messages, as the receivers have additional side-information including the side-information present in the original problem $\mathcal{I}(\mathcal{D},\mathcal{P})$. Hence, $\beta_{t}(\mathcal{D},\mathcal{P}) \geq \beta_{t}(\mathcal{D}',\mathcal{P})$. Taking the limit as $t \rightarrow \infty$, in the definition of the optimal broadcast rate, we have $\beta(\mathcal{D},\mathcal{P}) \geq \beta(\mathcal{D}',\mathcal{P})$.
\end{proof}

\begin{lem}
	For any $\mathcal{D}$, $\mathcal{P}$ and finite $t$, we have  $\beta_{t}(\mathcal{D},\mathcal{P}) \geq \beta_{t}(\mathcal{D})$, and $\beta(\mathcal{D},\mathcal{P}) \geq \beta(\mathcal{D})$.
	\label{singlesenderlowbnd}
\end{lem}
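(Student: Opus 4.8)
The plan is to show that a single sender who possesses \emph{all} the messages can emulate any two-sender code, so that the single-sender optimum can never exceed the two-sender optimum; this mirrors the argument used for Lemma \ref{addedges} just above. First I would fix a finite $t$ and take an optimal two-sender index code for $\mathcal{I}(\mathcal{D},\mathcal{P})$, described by $(\{\mathbb{E}_1,\mathbb{E}_2\},\{\mathbb{D}_i\})$, with sub-codewords $\mathcal{C}_1=\mathbb{E}_1(\mathcal{M}_1)$ and $\mathcal{C}_2=\mathbb{E}_2(\mathcal{M}_2)$ of lengths $p_1$ and $p_2$, achieving the broadcast rate $\beta_{t}(\mathcal{D},\mathcal{P})=(p_1+p_2)/t$.

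The key step is the code construction for the associated single-sender problem on the same side-information digraph $\mathcal{D}$. In that problem the lone sender holds the full message set $\mathcal{M}=\mathcal{M}_1\cup\mathcal{M}_2$, hence it knows both $\mathcal{M}_1$ and $\mathcal{M}_2$ and can evaluate \emph{both} encoding maps. I would have it transmit the concatenation of $\mathcal{C}_1$ and $\mathcal{C}_2$, a single codeword of length $p_1+p_2$. Because the two-sender transmissions are orthogonal in time, the bits received by each receiver under this single-sender scheme are bit-for-bit identical to those received in the two-sender scheme, and the side-information $\mathcal{K}_i$ is unchanged since the digraph is the same $\mathcal{D}$. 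Consequently each receiver can reuse its original decoder $\mathbb{D}_i$ on the pair $(\mathcal{C}_1,\mathcal{C}_2)$ together with $\mathcal{K}_i$ and recover its demand, so the constructed single-sender code is valid.

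This yields a valid single-sender code of length $p_1+p_2$, whence $\beta_{t}(\mathcal{D})\le (p_1+p_2)/t=\beta_{t}(\mathcal{D},\mathcal{P})$ by minimality of $\beta_t(\mathcal{D})$ over all single-sender encoders; rearranging gives the first claimed inequality $\beta_{t}(\mathcal{D},\mathcal{P})\ge\beta_{t}(\mathcal{D})$. Finally, taking the limit $t\to\infty$ in the definition $\beta=\inf_t\beta_t=\lim_{t\to\infty}\beta_t$ transfers the inequality to the asymptotic rates, giving $\beta(\mathcal{D},\mathcal{P})\ge\beta(\mathcal{D})$.

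There is no real obstacle here; the only point needing care is the observation that the single sender has access to $\mathcal{M}_1\cup\mathcal{M}_2=\mathcal{M}$ and can therefore reproduce the outputs of \emph{both} two-sender encoders, and that orthogonality in time makes the concatenation $(\mathcal{C}_1,\mathcal{C}_2)$ a legitimate single-sender codeword whose length equals the total two-sender length. The argument is purely an emulation/code-reuse argument and makes no use of linearity, so it covers both linear and non-linear schemes as required by the definitions.
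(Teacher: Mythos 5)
Your proof is correct and follows essentially the same route as the paper: the paper's proof likewise observes that the optimal two-sender code can simply be transmitted by a single sender (who holds all of $\mathcal{M}$) for the same side-information digraph $\mathcal{D}$, yielding $\beta_{t}(\mathcal{D}) \leq \beta_{t}(\mathcal{D},\mathcal{P})$, and then takes the limit $t \to \infty$ for the asymptotic inequality. Your write-up merely makes the emulation/concatenation step more explicit than the paper's one-line argument.
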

\begin{proof}
	Consider a two-sender index code with broadcast rate given by $\beta_{t}(\mathcal{D},\mathcal{P})$. The same index code transmitted by a single-sender for the single-sender unicast ICP described by the  side-information digraph $\mathcal{D}$ satsifies the demands of all the receivers. Thus, we have the first lower bound. Taking the limit as $t \rightarrow \infty$, in the definition of the optimal broadcast rate, we have  $\beta(\mathcal{D},\mathcal{P}) \geq \beta(\mathcal{D})$. 
\end{proof}

\subsection{CASES II-C and II-D}
In this subsection, we provide the optimal broadcast rate for any TUICP with side-information digraph $\mathcal{D}^k$, where $k \in \{34,35,\cdots,57\}$.

\begin{thm}[CASE II-C]
	For any TUICP with the side-information digraph $\mathcal{D}^{k}$, $k \in \{34,35,\cdots,45\}$, having fully-participated interactions between its sub-digraphs $\mathcal{D}_{i}^{k,\mathcal{P}}$, $i \in \{1,2,3\}$, and for any $\mathcal{P}$, we have 
	\begin{equation} \beta(\mathcal{D}^{k},\mathcal{P})=max\{\beta(\mathcal{D}_{1}^{k,\mathcal{P}}),\beta(\mathcal{D}_{3}^{k,\mathcal{P}})\}+\beta(\mathcal{D}_{2}^{k,\mathcal{P}}).
	\end{equation}
	\label{thmH34_45}
\end{thm}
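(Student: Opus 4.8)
The plan is to supply only the converse
$\beta(\mathcal{D}^{k},\mathcal{P}) \ge \max\{\beta(\mathcal{D}_{1}^{k,\mathcal{P}}),\beta(\mathcal{D}_{3}^{k,\mathcal{P}})\}+\beta(\mathcal{D}_{2}^{k,\mathcal{P}})$, since the matching achievability for Case II-C was already provided by Thapa et al. \cite{CTLO}. Writing $\beta_i=\beta(\mathcal{D}_i^{k,\mathcal{P}})$ and using the identity $\max\{\beta_1,\beta_3\}+\beta_2=\max\{\beta_1+\beta_2,\;\beta_2+\beta_3\}$, it suffices to establish the two lower bounds $\beta(\mathcal{D}^{k},\mathcal{P}) \ge \beta_1+\beta_2$ and $\beta(\mathcal{D}^{k},\mathcal{P}) \ge \beta_2+\beta_3$ separately; the stated rate is then their maximum, and it coincides with the achievable upper bound.

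The main device I would use is a \emph{message-deletion reduction}. Given an optimal two-sender code for $\mathcal{I}(\mathcal{D}^{k},\mathcal{P})$ with $t$-bit messages, I freeze all messages of one chosen sub-digraph to the all-zero realization. The restricted encoders and decoders then form a valid code, of no larger length, for the induced two-sender sub-problem on the remaining two sub-digraphs, because the frozen messages enter each surviving receiver only as a known constant in its side-information. Hence the optimal rate of the induced sub-problem lower-bounds $\beta_t(\mathcal{D}^{k},\mathcal{P})$, and, letting $t\to\infty$, it lower-bounds $\beta(\mathcal{D}^{k},\mathcal{P})$ as well. Since $\mathcal{D}_{1}^{k,\mathcal{P}},\mathcal{D}_{2}^{k,\mathcal{P}},\mathcal{D}_{3}^{k,\mathcal{P}}$ are vertex-induced on the disjoint message sets $\mathcal{P}_1,\mathcal{P}_2,\mathcal{P}_3$, deleting one of them leaves the internal structure, and therefore the individual optimal rates, of the other two intact.

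I would apply this twice. Deleting $\mathcal{D}_{3}^{k,\mathcal{P}}$ (i.e.\ $\mathcal{P}_3$) leaves a two-sender problem on $\mathcal{D}_{1}^{k,\mathcal{P}},\mathcal{D}_{2}^{k,\mathcal{P}}$ with $\mathcal{S}_1$ owning $\mathcal{P}_1$ and $\mathcal{S}_2$ owning $\mathcal{P}_2$; deleting $\mathcal{D}_{1}^{k,\mathcal{P}}$ leaves a two-sender problem on $\mathcal{D}_{2}^{k,\mathcal{P}},\mathcal{D}_{3}^{k,\mathcal{P}}$. The crucial structural input, to be read off Figure \ref{interenum}, is that in every Case II-C interaction digraph $\mathcal{H}_k$, $k\in\{34,\dots,45\}$, vertex $2$ lies on no directed cycle, while the cyclic behaviour is carried entirely by the fully-participated bidirectional interaction between vertices $1$ and $3$. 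Consequently each deletion destroys all cycles: the residual interaction on $\{1,2\}$, respectively $\{2,3\}$, is acyclic, so both reduced problems fall in Case I. Invoking the already-established Case I characterisation (the row marked ``!'' in Table \ref{table2}, valid for fully-participated interactions) gives optimal rates $\beta_1+\beta_2$ and $\beta_2+\beta_3$ for the two reduced problems. Combining, $\beta(\mathcal{D}^{k},\mathcal{P}) \ge \max\{\beta_1+\beta_2,\beta_2+\beta_3\}=\beta_2+\max\{\beta_1,\beta_3\}$, which matches the upper bound and proves the theorem.

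I expect the main obstacle to be the uniform structural verification in the third step: confirming, for each of the twelve digraphs $\mathcal{H}_k$ with $k\in\{34,\dots,45\}$, that vertex $2$ is excluded from every directed cycle, so that deleting $\mathcal{D}_1^{k,\mathcal{P}}$ or $\mathcal{D}_3^{k,\mathcal{P}}$ genuinely yields a Case I (acyclic) sub-problem rather than, say, a Case II-B residual whose rate would only give the weaker $\max\{\beta_2,\beta_3\}$. A secondary point needing care is the validity of the freezing reduction, i.e.\ that substituting a constant for the deleted messages preserves decodability at every surviving receiver. As an alternative converse staying closer to the confusion-graph machinery, one could instead lower-bound the optimal two-sender colouring via Lemma \ref{lemtwocolorrate}: restricting $J_2$ to a fixed $\mathbf{b}_{\mathcal{P}_3}^{k}$ and using Lemma \ref{lemThapaColor2} forces $|\mathcal{J}_2|\ge\chi(\Gamma_t(\mathcal{D}_2^{k,\mathcal{P}}))$, whereas the harder half would require showing that $\mathcal{S}_1$'s palette must reach $\max\{\chi(\Gamma_t(\mathcal{D}_1^{k,\mathcal{P}})),\chi(\Gamma_t(\mathcal{D}_3^{k,\mathcal{P}}))\}$. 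Here the fully-participated bidirectional $1\leftrightarrow 3$ interaction makes the confusion graph of $\mathcal{D}_1^{k,\mathcal{P}}\cup\mathcal{D}_3^{k,\mathcal{P}}$ a Cartesian product of $\Gamma_t(\mathcal{D}_1^{k,\mathcal{P}})$ and $\Gamma_t(\mathcal{D}_3^{k,\mathcal{P}})$, whose chromatic number equals that maximum; converting this into the required bound on $|\mathcal{J}_1|$ is where the real difficulty of the colouring route would concentrate.
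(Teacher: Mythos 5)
Your overall strategy---establishing the two unconditional lower bounds $\beta(\mathcal{D}^{k},\mathcal{P})\ge\beta_1+\beta_2$ and $\beta(\mathcal{D}^{k},\mathcal{P})\ge\beta_2+\beta_3$ (writing $\beta_i$ for $\beta(\mathcal{D}_i^{k,\mathcal{P}})$, as you do) via a freezing reduction, and citing \cite{CTLO} for achievability---is sound in outline, and the freezing argument itself is valid; the paper uses the same restriction idea, without spelling it out, when it removes the vertices of $\mathcal{D}_1^{k,\mathcal{P}}$. Your bound $\beta\ge\beta_2+\beta_3$ is in fact exactly the paper's converse: the residual problem on $\mathcal{V}(\mathcal{D}_2^{k,\mathcal{P}})\cup\mathcal{V}(\mathcal{D}_3^{k,\mathcal{P}})$ is effectively single-sender (both $\mathcal{P}_2$ and $\mathcal{P}_3$ lie with $\mathcal{S}_2$), all cross edges there are unidirectional in Case II-C, and Lemma \ref{lemscc} splits the rate into $\beta_2+\beta_3$.

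The gap sits exactly where you feared, and it resolves against you: it is \emph{not} true that in every $\mathcal{H}_k$, $k\in\{34,\dots,45\}$, vertex $2$ lies on no directed cycle, nor that the residual interaction on $\{1,2\}$ is always acyclic. Case II-C also contains digraphs in which the bidirectional interaction $1\leftrightarrow 2$ is present alongside $1\leftrightarrow 3$ (and digraphs in which vertex $2$ lies on a directed $3$-cycle; for the latter your premise fails but your residuals are still acyclic, so those are harmless). To see that a digraph with both $1\leftrightarrow 3$ and $1\leftrightarrow 2$ fully participated, and only unidirectional or no edges between $2$ and $3$, must be one of the twelve Case II-C digraphs, compute its rate: $\beta_2+\max\{\beta_1,\beta_3\}$ is achievable ($\mathcal{S}_1$ sends $\mathcal{C}_1\oplus\mathcal{C}_3$, $\mathcal{S}_2$ sends $\mathcal{C}_2$), and the converse matches, so it carries precisely the II-C formula of Theorem \ref{thmH34_45} and the formula of no other class; it therefore cannot be placed in any other row of Table \ref{table2}. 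For such $k$, deleting $\mathcal{D}_3^{k,\mathcal{P}}$ leaves a two-sender problem whose interaction $1\leftrightarrow 2$ is cyclic, so invoking the Case I row of Table \ref{table2} is invalid, and your lower bound $\beta\ge\beta_1+\beta_2$ is unproven in exactly the case ($\beta_1>\beta_3$) where it is needed. The repair is to invoke the Case II-A row instead (also from \cite{CTLO}, also marked ``!''): when the senders' message sets are disjoint, a bidirectional $1\leftrightarrow 2$ interaction cannot be exploited, because for every receiver in $\mathcal{D}_1^{k,\mathcal{P}}$ the entire transmission of $\mathcal{S}_2$ is a function of its side-information $\mathcal{P}_2$, forcing $\mathcal{S}_1$'s transmission alone to be an index code for $\mathcal{D}_1^{k,\mathcal{P}}$, and symmetrically; hence the residual rate is still $\beta_1+\beta_2$. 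Note that the paper sidesteps this issue entirely: it cites \cite{CTLO} for full optimality (not merely achievability) when $\beta_1\ge\beta_3$ and proves only the case $\beta_1<\beta_3$, which requires only the $\{2,3\}$ deletion---the one whose residual is guaranteed acyclic in Case II-C.
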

\begin{proof}
The result is proved in \cite{CTLO}, for the case when $\beta(\mathcal{D}_{1}^{k,\mathcal{P}}) \geq \beta(\mathcal{D}_{3}^{k,\mathcal{P}})$. Hence, we prove the result for the case with $\beta(\mathcal{D}_{1}^{k,\mathcal{P}}) < \beta(\mathcal{D}_{3}^{k,\mathcal{P}})$ by first providing a lower bound and then providing a matching upper bound.
 	
Removing the vertices belonging to $\mathcal{D}_1^{k,\mathcal{P}}$ from $\mathcal{D}^k$, we obtain a digraph $\mathcal{D}_{23}^{k,\mathcal{P}}$ which defines a TUICP. This can be considered as a single-sender unicast ICP as both $\mathcal{P}_2$ and $\mathcal{P}_{3}$ are with $S_2$. Hence, we have
\begin{equation}
	\beta(\mathcal{D}^k,\mathcal{P}) \geq \beta(\mathcal{D}_{23}^{k,\mathcal{P}}).
	\label{eqlem15}
\end{equation}
As there are only unidirectional edges from $\mathcal{V}(\mathcal{D}_2^{k,\mathcal{P}})$ to $\mathcal{V}(\mathcal{D}_{3}^{k,\mathcal{P}})$ or vice-versa (depending on the particular value of $k$), using Lemma \ref{lemscc}, we have 
\begin{equation}
	\beta(\mathcal{D}_{23}^{k,\mathcal{P}}) = \beta(\mathcal{D}_2^{k,\mathcal{P}}) +  \beta(\mathcal{D}_{3}^{k,\mathcal{P}}).
	\label{eqlem151}
\end{equation}
From (\ref{eqlem15}) and (\ref{eqlem151}), we have $\beta(\mathcal{D}^k,\mathcal{P}) \geq \beta(\mathcal{D}_2^{k,\mathcal{P}}) +  \beta(\mathcal{D}_{3}^{k,\mathcal{P}})$.

From the result of Theorem $8$ in \cite{CTLO}, we have,
\begin{equation}
\beta_t(\mathcal{D}^{k},\mathcal{P}) \leq \beta_t(\mathcal{D}_2^{k,\mathcal{P}}) +  \beta_t(\mathcal{D}_{3}^{k,\mathcal{P}}).
\label{eqlem1511}
\end{equation}
Dividing both the sides by $t$, and taking the  limit as $t \rightarrow \infty$ in (\ref{eqlem1511}), we have $  \beta(\mathcal{D}^{k},\mathcal{P}) \leq \beta(\mathcal{D}_2^{k,\mathcal{P}}) +  \beta(\mathcal{D}_{3}^{k,\mathcal{P}})$, which is a matching upper bound.  
\end{proof}

We make the following observation to obtain the optimal broadcast rate for Case II-D.

\begin{obs}
	Observe that the interaction digraphs $\mathcal{H}_k$, $k \in \{34,35,\cdots,45\}$ are obtained from $\mathcal{H}_k'$, $k' \in \{46,47,\cdots,57\}$, by interchanging the labels of vertices $1$ and $2$ respectively. Hence, the optimal broadcast rate for any TUICP with $\mathcal{H}_k$, $k \in \{34,35,\cdots,45\}$, is obtained using that of a TUICP with $\mathcal{H}_k'$, $k' \in \{46,47,\cdots,57\}$, respectively, by interchanging the labels $1$ and $2$ in the expression for the optimal broadcast rate. Note that the corresponding TUICPs must have the same set of sub-digraphs, i.e.,  $\mathcal{D}_1^{k,\mathcal{P}}$, $\mathcal{D}_2^{k,\mathcal{P}}$, and $\mathcal{D}_3^{k,\mathcal{P}}$ must be same as $\mathcal{D}_1^{k',\mathcal{P}}$, $\mathcal{D}_2^{k',\mathcal{P}}$, and $\mathcal{D}_3^{k',\mathcal{P}}$ respectively.
	\label{obs2} 
\end{obs}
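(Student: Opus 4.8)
The plan is to prove the claim from the intrinsic symmetry of the two-sender setup under interchanging the two senders, combined with a direct enumeration check on the interaction digraphs. The observation really asserts two things: a combinatorial relabeling statement about the $\mathcal{H}_k$, and the consequent transfer of the optimal-rate expression; I would establish both, with the symmetry as the conceptual core.

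First I would settle the combinatorial part. By Definition \ref{definterdigraph} the interaction digraph $\mathcal{H}$ has vertex set $\{1,2,3\}$ with $(i,j)\in\mathcal{E}(\mathcal{H})$ iff $\mathcal{D}_i\rightarrow\mathcal{D}_j$, so interchanging the labels of the sub-digraphs $\mathcal{D}_1$ and $\mathcal{D}_2$ (fixing $\mathcal{D}_3$) is exactly the action of the permutation $\sigma=(1\;2)$ on $\mathcal{V}(\mathcal{H})$, sending each edge $(i,j)$ to $(\sigma(i),\sigma(j))$. I would then verify, by inspecting the enumeration in Figure \ref{interenum}, that applying $\sigma$ to each of the twelve Case II-D digraphs $\mathcal{H}_{k'}$, $k'\in\{46,\dots,57\}$, yields precisely the twelve Case II-C digraphs $\mathcal{H}_{k}$, $k\in\{34,\dots,45\}$, in the stated one-to-one correspondence. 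This is routine bookkeeping and presents no real difficulty.

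Next I would prove the key invariance: relabelling which sender is $\mathcal{S}_1$ leaves the optimal broadcast rate unchanged. Fix an underlying side-information digraph $\mathcal{D}$ and a partition $\mathcal{P}=(\mathcal{P}_1,\mathcal{P}_2,\mathcal{P}_3)$ producing the Case II-D interaction digraph $\mathcal{H}_{k'}$; since a given $\mathcal{D}$ can correspond to different interaction digraphs through the choice of $\mathcal{P}$, the same $\mathcal{D}$ with $\mathcal{P}'=(\mathcal{P}_2,\mathcal{P}_1,\mathcal{P}_3)$ produces the Case II-C digraph $\mathcal{H}_{k}=\sigma(\mathcal{H}_{k'})$, and by construction $\mathcal{D}_1^{k,\mathcal{P}'}=\mathcal{D}_2^{k',\mathcal{P}}$, $\mathcal{D}_2^{k,\mathcal{P}'}=\mathcal{D}_1^{k',\mathcal{P}}$, $\mathcal{D}_3^{k,\mathcal{P}'}=\mathcal{D}_3^{k',\mathcal{P}}$, with fully-participated interactions preserved under relabelling. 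The demands and all side-information sets $\mathcal{K}_i$ are untouched, so the two problems are the same index-coding instance with only the sender-partition relabelled. Any valid encoding pair $(\mathbb{E}_1,\mathbb{E}_2)$ for $\mathcal{I}(\mathcal{D}^{k'},\mathcal{P})$ yields the valid pair $(\mathbb{E}_2,\mathbb{E}_1)$ for $\mathcal{I}(\mathcal{D}^{k},\mathcal{P}')$: the two sub-codewords merely swap roles, and since each decoder $\mathbb{D}_i$ acts on the concatenation $(\mathcal{C}_1,\mathcal{C}_2)$ together with $\mathcal{K}_i$, decoding is unaffected. Because $p_t=(p_1+p_2)/t$ is symmetric in $p_1,p_2$, this is a length-preserving bijection between valid codes, giving $\beta_t(\mathcal{D}^{k},\mathcal{P}')=\beta_t(\mathcal{D}^{k'},\mathcal{P})$ for every finite $t$, and hence $\beta(\mathcal{D}^{k},\mathcal{P}')=\beta(\mathcal{D}^{k'},\mathcal{P})$ in the limit.

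Finally I would combine the two parts: substituting the sub-digraph identifications into the established Case II-C formula of Theorem \ref{thmH34_45}, applied to $\mathcal{I}(\mathcal{D}^{k},\mathcal{P}')$, and invoking the invariance yields the optimal broadcast rate of the Case II-D problem $\mathcal{I}(\mathcal{D}^{k'},\mathcal{P})$ with the labels $1$ and $2$ interchanged throughout the expression, exactly as asserted. I expect the only genuine obstacle to be the invariance step: one must argue carefully that the sender swap truly preserves code validity, which rests on the facts that every decoder sees \emph{both} sub-codewords and that the demand/side-information structure is carried entirely by $\mathcal{D}$ and is independent of how $\mathcal{M}$ is split between the senders. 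The enumeration check is purely mechanical.
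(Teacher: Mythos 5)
Your proposal is correct and follows essentially the same route as the paper: the paper states this as a self-evident observation resting on the sender-swap symmetry (relabel $\mathcal{S}_1 \leftrightarrow \mathcal{S}_2$, equivalently $\mathcal{P}_1 \leftrightarrow \mathcal{P}_2$, which maps each Case II-D interaction digraph to its Case II-C counterpart while leaving the physical problem unchanged), exactly the symmetry you formalize. Your explicit code-bijection argument $(\mathbb{E}_1,\mathbb{E}_2)\mapsto(\mathbb{E}_2,\mathbb{E}_1)$ showing $\beta_t$ (hence $\beta$) is invariant under the relabeling is simply the rigorous filling-in of what the paper leaves implicit, so there is no substantive difference in approach.
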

Hence, we state the following theorem which follows from Theorem \ref{thmH34_45} in conjuncion with Observation \ref{obs2}.

\begin{thm}[CASE II-D]
	For any TUICP with the side-information digraph $\mathcal{D}^{k}$, $k \in \{46,47,\cdots,57\}$, having fully-participated interactions between its sub-digraphs $\mathcal{D}_{i}^{k,\mathcal{P}}$, $i \in \{1,2,3\}$, and for any $\mathcal{P}$, we have
	\begin{equation} \beta(\mathcal{D}^{k},\mathcal{P})=max\{\beta(\mathcal{D}_{2}^{k,\mathcal{P}}),\beta(\mathcal{D}_{3}^{k,\mathcal{P}})\}+\beta(\mathcal{D}_{1}^{k,\mathcal{P}}).
	\end{equation}
	\label{thmH46_57}
\end{thm}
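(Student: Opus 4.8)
The plan is to reduce Theorem~\ref{thmH46_57} to the already-established Theorem~\ref{thmH34_45} by exploiting the symmetry of the TUICP under exchange of the two senders, as signalled by Observation~\ref{obs2}. The central fact I would establish first is that the optimal broadcast rate is invariant under swapping the roles of $\mathcal{S}_1$ and $\mathcal{S}_2$. Concretely, given a TUICP $\mathcal{I}(\mathcal{D}^{k},\mathcal{P})$ with $k \in \{46,\ldots,57\}$ and $\mathcal{P}=(\mathcal{P}_1,\mathcal{P}_2,\mathcal{P}_3)$, I would associate to it the TUICP $\mathcal{I}(\mathcal{D}^{k'},\mathcal{P}')$ with $\mathcal{P}'=(\mathcal{P}_2,\mathcal{P}_1,\mathcal{P}_3)$, where $k' \in \{34,\ldots,45\}$ is the index whose interaction digraph $\mathcal{H}_{k'}$ is obtained from $\mathcal{H}_{k}$ by interchanging the labels $1$ and $2$. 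Since the underlying side-information digraph is unchanged (only the sender partition is relabelled), the three induced sub-digraphs transform as $\mathcal{D}_1^{k',\mathcal{P}'}=\mathcal{D}_2^{k,\mathcal{P}}$, $\mathcal{D}_2^{k',\mathcal{P}'}=\mathcal{D}_1^{k,\mathcal{P}}$, and $\mathcal{D}_3^{k',\mathcal{P}'}=\mathcal{D}_3^{k,\mathcal{P}}$, and the fully-participated property is preserved because relabelling the sub-digraphs does not alter the set of edges running between them.

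Next I would argue the invariance $\beta(\mathcal{D}^{k},\mathcal{P})=\beta(\mathcal{D}^{k'},\mathcal{P}')$. This follows from the problem formulation: the two senders transmit over orthogonal noiseless channels, and the broadcast rate $p_t=(p_1+p_2)/t$ counts only the \emph{total} number of transmitted bits, which is symmetric in the two sub-codeword lengths. Hence any two-sender code $(\mathbb{E}_1,\mathbb{E}_2,\{\mathbb{D}_i\})$ for $\mathcal{I}(\mathcal{D}^{k},\mathcal{P})$ yields a code of identical length for $\mathcal{I}(\mathcal{D}^{k'},\mathcal{P}')$ by having $\mathcal{S}_1$ and $\mathcal{S}_2$ exchange their encoding functions (consistent since $\mathcal{M}_1$ and $\mathcal{M}_2$ are exchanged), while every receiver decodes from the same pair of received sub-codewords. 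This exchange is an involution and is a length-preserving bijection between valid codes for the two instances, so $\beta_t(\mathcal{D}^{k},\mathcal{P})=\beta_t(\mathcal{D}^{k'},\mathcal{P}')$ for every finite $t$; passing to the limit $t\to\infty$ gives the claimed equality of optimal broadcast rates.

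Finally, I would apply Theorem~\ref{thmH34_45} to $\mathcal{I}(\mathcal{D}^{k'},\mathcal{P}')$, obtaining
\[
\beta(\mathcal{D}^{k'},\mathcal{P}')=\max\{\beta(\mathcal{D}_1^{k',\mathcal{P}'}),\beta(\mathcal{D}_3^{k',\mathcal{P}'})\}+\beta(\mathcal{D}_2^{k',\mathcal{P}'}),
\]
and then substitute the sub-digraph transformation identities together with the invariance from the previous step to conclude
\[
\beta(\mathcal{D}^{k},\mathcal{P})=\max\{\beta(\mathcal{D}_2^{k,\mathcal{P}}),\beta(\mathcal{D}_3^{k,\mathcal{P}})\}+\beta(\mathcal{D}_1^{k,\mathcal{P}}),
\]
as required. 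The main obstacle is not any single hard estimate but rather making the sender-exchange reduction fully rigorous: one must verify that the proposed map is a genuine bijection on codes that preserves total length, that it respects the decodability constraints receiver-by-receiver, and that the swap $1\leftrightarrow 2$ carries the fully-participated interaction digraphs indexed by $\{46,\ldots,57\}$ exactly onto those indexed by $\{34,\ldots,45\}$. Once this bookkeeping is settled, the statement is immediate from Theorem~\ref{thmH34_45}.
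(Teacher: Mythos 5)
Your proof is correct and takes essentially the same approach as the paper: the paper's own proof of this theorem is simply to invoke Theorem~\ref{thmH34_45} in conjunction with Observation~\ref{obs2}, which is exactly the sender-exchange (label $1 \leftrightarrow 2$) symmetry you use. Your write-up just makes explicit the length-preserving code bijection and sub-digraph relabelling that the paper leaves implicit in Observation~\ref{obs2}.
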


\subsection{CASE II-E}
In this subsection, we will present our results for Case II-E. The proof uses the results of Case II-C and Case II-D derived in the previous subsection.

\begin{thm}[CASE II-E]
	For any TUICP with the side-information digraph  $\mathcal{D}^{k}$, $k \in \{58,59,\cdots,64\}$, having fully-participated interactions between its sub-digraphs $\mathcal{D}_{i}^{k,\mathcal{P}}$, $i \in \{1,2,3\}$, and for any $\mathcal{P}$, we have 
	\begin{eqnarray}
	\begin{split}
	& \beta(\mathcal{D}^{k},\mathcal{P})=  max\{\beta(\mathcal{D}_{1}^{k,\mathcal{P}})+\beta(\mathcal{D}_{2}^{k,\mathcal{P}}),\\ & \beta(\mathcal{D}_{1}^{k,\mathcal{P}})+\beta(\mathcal{D}_{3}^{k,\mathcal{P}}),\beta(\mathcal{D}_{2}^{k,\mathcal{P}})+\beta(\mathcal{D}_{3}^{k,\mathcal{P}})\}.
	\end{split}
	\end{eqnarray}
	\label{thmH58_64}
\end{thm}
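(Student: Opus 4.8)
The plan is to prove the identity by sandwiching $\beta(\mathcal{D}^{k},\mathcal{P})$ between a matching upper and lower bound, in the same two-sided manner used for Cases II-C and II-D. For the upper bound I would simply invoke the achievability result already established in this paper, Theorem \ref{thmH58_64a}, which for every finite $t$ exhibits a two-sender code whose rate $p_{t}(\mathcal{D}^{k},\mathcal{P})$ equals the right-hand side of the claimed identity with each $\beta(\cdot)$ replaced by $\beta_{t}(\cdot)$. Thus $\beta_{t}(\mathcal{D}^{k},\mathcal{P})$ is at most that maximum for all $t$; letting $t\to\infty$ and using $\beta(\cdot)=\lim_{t}\beta_{t}(\cdot)$ together with the continuity of $\max$ yields the required upper bound on $\beta(\mathcal{D}^{k},\mathcal{P})$.

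For the converse it is enough to prove the three inequalities $\beta(\mathcal{D}^{k},\mathcal{P})\ge \beta(\mathcal{D}_{i}^{k,\mathcal{P}})+\beta(\mathcal{D}_{j}^{k,\mathcal{P}})$ for each pair $\{i,j\}\subset\{1,2,3\}$, because the maximum of these three sums is exactly the claimed value. Each such inequality I would obtain by deleting from $\mathcal{D}^{k}$ the vertices of the remaining sub-digraph $\mathcal{D}_{\ell}^{k,\mathcal{P}}$, $\ell\notin\{i,j\}$: exactly as in the proof of Theorem \ref{thmH34_45}, freezing the deleted messages turns any two-sender code for $\mathcal{I}(\mathcal{D}^{k},\mathcal{P})$ into a code for the induced sub-problem on $\mathcal{D}_{i}^{k,\mathcal{P}}\cup\mathcal{D}_{j}^{k,\mathcal{P}}$, so $\beta(\mathcal{D}^{k},\mathcal{P})$ is lower bounded by the optimal rate of that sub-problem.

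It then remains to evaluate each sub-problem. For $\{i,j\}=\{2,3\}$ the induced problem has $\mathcal{P}_{2}$ at $\mathcal{S}_{2}$ and $\mathcal{P}_{3}$ at both senders, hence is a single-sender problem for $\mathcal{S}_{2}$; since the $2$-$3$ interaction in every Case II-E digraph is unidirectional (or absent), its inter-edges lie on no directed cycle and Lemma \ref{lemscc} collapses the rate to $\beta(\mathcal{D}_{2}^{k,\mathcal{P}})+\beta(\mathcal{D}_{3}^{k,\mathcal{P}})$, which is precisely the computation in (\ref{eqlem15})--(\ref{eqlem151}). The pair $\{1,3\}$ is symmetric (a single-sender problem for $\mathcal{S}_{1}$, i.e.\ the Case II-D version of the same step), giving $\beta(\mathcal{D}_{1}^{k,\mathcal{P}})+\beta(\mathcal{D}_{3}^{k,\mathcal{P}})$. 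For $\{1,2\}$ the induced problem keeps $\mathcal{P}_{1}$ only at $\mathcal{S}_{1}$ and $\mathcal{P}_{2}$ only at $\mathcal{S}_{2}$; because the two senders share no message and transmit orthogonally, the codeword $\mathcal{C}_{2}$ carries no information about the $\mathcal{P}_{1}$-demands and vice versa, so $\mathcal{C}_{1}$ must itself be a valid single-sender code for $\mathcal{D}_{1}^{k,\mathcal{P}}$ and $\mathcal{C}_{2}$ for $\mathcal{D}_{2}^{k,\mathcal{P}}$, forcing the rate to be at least $\beta(\mathcal{D}_{1}^{k,\mathcal{P}})+\beta(\mathcal{D}_{2}^{k,\mathcal{P}})$. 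Taking the maximum of the three bounds matches the upper bound and proves the theorem.

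The step I expect to be the main obstacle is the structural verification behind the $\{1,3\}$ and $\{2,3\}$ bounds: one must check, by inspecting each of the seven digraphs $\mathcal{H}_{58},\dots,\mathcal{H}_{64}$ in Figure \ref{interenum}, that the $1$-$3$ and $2$-$3$ interactions are never bidirectional, since a bidirectional (cyclic) pair would let Lemma \ref{lemscc} yield only $\max\{\beta(\mathcal{D}_{i}^{k,\mathcal{P}}),\beta(\mathcal{D}_{j}^{k,\mathcal{P}})\}$ instead of the sum. A secondary subtlety is the $\{1,2\}$ bound, where the single-sender relaxation of Lemma \ref{singlesenderlowbnd} is too weak and one must instead exploit the genuinely two-sender, disjoint-message structure to force the sum $\beta(\mathcal{D}_{1}^{k,\mathcal{P}})+\beta(\mathcal{D}_{2}^{k,\mathcal{P}})$.
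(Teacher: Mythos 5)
Your proposal is correct, but its converse follows a genuinely different route from the paper's. The paper obtains the lower bound by \emph{adding} edges: by Lemma \ref{addedges}, augmenting $\mathcal{D}^{k}$ to a Case II-C digraph ($k'\in\{44,45\}$) and, separately, to a Case II-D digraph ($k''\in\{56,57\}$) with the same three sub-digraphs can only decrease the broadcast rate, so Theorems \ref{thmH34_45} and \ref{thmH46_57} immediately give $\beta(\mathcal{D}^{k},\mathcal{P})\ge \beta(\mathcal{D}_{2}^{k,\mathcal{P}})+\max\{\beta(\mathcal{D}_{1}^{k,\mathcal{P}}),\beta(\mathcal{D}_{3}^{k,\mathcal{P}})\}$ and $\beta(\mathcal{D}^{k},\mathcal{P})\ge \beta(\mathcal{D}_{1}^{k,\mathcal{P}})+\max\{\beta(\mathcal{D}_{2}^{k,\mathcal{P}}),\beta(\mathcal{D}_{3}^{k,\mathcal{P}})\}$, whose combination is exactly the claimed maximum; the achievability direction (Theorem \ref{thmH58_64a} followed by $t\to\infty$) is identical in both proofs. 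You instead \emph{delete} one sub-digraph at a time and evaluate the three induced sub-problems. This is sound, and your structural worry resolves affirmatively: no Case II-E digraph has a bidirectional $1$--$3$ or $2$--$3$ interaction. This can even be read off from the paper's own argument, since a digraph containing a bidirectional $2$--$3$ (resp.\ $1$--$3$) pair could not be completed by edge additions to a Case II-C (resp.\ II-D) digraph, in which that interaction is unidirectional or absent (this unidirectionality is precisely what permits the use of Lemma \ref{lemscc} in (\ref{eqlem151})). Hence your $\{1,3\}$ and $\{2,3\}$ bounds go through exactly as in the proof of Theorem \ref{thmH34_45}. The genuine extra burden in your route is the $\{1,2\}$ bound, which the paper never needs because it is subsumed by either reduction ($\max\{\beta(\mathcal{D}_{1}^{k,\mathcal{P}}),\beta(\mathcal{D}_{3}^{k,\mathcal{P}})\}+\beta(\mathcal{D}_{2}^{k,\mathcal{P}})\ge\beta(\mathcal{D}_{1}^{k,\mathcal{P}})+\beta(\mathcal{D}_{2}^{k,\mathcal{P}})$): here Lemmas \ref{lemscc} and \ref{singlesenderlowbnd} cannot help, and your informal claim that $\mathcal{C}_2$ ``carries no information'' about the $\mathcal{P}_1$-demands needs to be made rigorous, because a receiver in $\mathcal{D}_{1}^{k,\mathcal{P}}$ does receive $\mathcal{C}_2$ and, under a fully-participated $\mathcal{D}_{1}^{k,\mathcal{P}}\rightarrow\mathcal{D}_{2}^{k,\mathcal{P}}$ interaction, even knows $\mathcal{P}_2$. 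The fix is either the standard conditional-independence argument (the demand, $\mathcal{C}_1$, and the $\mathcal{P}_1$-part of the side-information are jointly independent of $(\mathcal{C}_2,\mathcal{P}_2)$, so decodability with $\mathcal{C}_2$ implies decodability without it, making $\mathcal{C}_1$ a valid code for $\mathcal{D}_{1}^{k,\mathcal{P}}$ and $\mathcal{C}_2$ one for $\mathcal{D}_{2}^{k,\mathcal{P}}$), or simply invoking the Case II-A result of \cite{CTLO}, which holds for arbitrary interactions (cf.\ Table \ref{table2}) and applies verbatim to the induced sub-problem with empty $\mathcal{P}_3$. With that step firmed up, your proof is complete; it is more self-contained and exhibits each pairwise sum as an individual converse bound, whereas the paper's reduction is shorter because it recycles the already-established Cases II-C and II-D.
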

\begin{proof}
	We first provide a lower bound using the results of Cases II-C and II-D. Then, we provide a matching upper bound using the result of Theorem \ref{thmH58_64a}. 
	
	Given any side-information digraph $\mathcal{D}^k$ with $k \in \{58,59,\cdots,64\}$, with fully-participated interactions  between its sub-digraphs $\mathcal{D}_{1}^{k,\mathcal{P}}$, $\mathcal{D}_{2}^{k,\mathcal{P}}$, and $\mathcal{D}_{3}^{k,\mathcal{P}}$, we can get $(i)$ one of the side-information digraphs $\mathcal{D}^{k'}, k' \in \{44,45\}$ and $(ii)$ one of the side-information digraphs $\mathcal{D}^{k''}, k'' \in \{56,57\}$ with the same sub-digraphs $\mathcal{D}_{1}^{k,\mathcal{P}}$, $\mathcal{D}_{2}^{k,\mathcal{P}}$, and $\mathcal{D}_{3}^{k,\mathcal{P}}$ having fully-participated interactions, by adding appropriate edges between the sub-digraphs of $\mathcal{D}^k$. From Lemma \ref{addedges}, we have,
	\begin{equation}
	\beta(\mathcal{D}^k,\mathcal{P}) \geq  \beta(\mathcal{D}^{k'},\mathcal{P}),
	\label{case2e_lbnd1}
	\end{equation}
	\begin{equation}
	\beta(\mathcal{D}^k,\mathcal{P}) \geq  \beta(\mathcal{D}^{k''},\mathcal{P})
	\label{case2e_lbnd2}.
	\end{equation}
	\par Combining the results of Theorem \ref{thmH34_45} and Theorem \ref{thmH46_57} using (\ref{case2e_lbnd1}) and (\ref{case2e_lbnd2}), we get,
	\begin{equation}
	\begin{split}
	& \beta(\mathcal{D}^k,\mathcal{P}) \geq max\{\beta(\mathcal{D}_{1}^{k,\mathcal{P}})+\beta(\mathcal{D}_3^{k,\mathcal{P}}), \\ & \beta(\mathcal{D}_2^{k,\mathcal{P}})+\beta(\mathcal{D}_3^{k,\mathcal{P}}),\beta(\mathcal{D}_1^{k,\mathcal{P}})+\beta(\mathcal{D}_2^{k,\mathcal{P}})\}.
	\end{split}
	\end{equation}
	
	\par Using the result of Theorem \ref{thmH58_64a}, we have 
	\begin{eqnarray}
	\begin{split}
	\beta_{t}(\mathcal{D}^{k},\mathcal{P}) & \leq  max\{\beta_{t}(\mathcal{D}_{1}^{k,\mathcal{P}})+\beta_{t}(\mathcal{D}_{2}^{k,\mathcal{P}}), \beta_{t}(\mathcal{D}_{1}^{k,\mathcal{P}})+ \\ &  \beta_{t}(\mathcal{D}_{3}^{k,\mathcal{P}}), \beta_{t}(\mathcal{D}_{2}^{k,\mathcal{P}})+\beta_{t}(\mathcal{D}_{3}^{k,\mathcal{P}})\}. 
	\end{split}
	\end{eqnarray}	
	Taking the limit as $t\to\infty$ in the definition of $\beta(\mathcal{D}^k,\mathcal{P})$, we obtain the matching upper bound as follows, and hence the theorem is proved.
	\begin{eqnarray}
	\begin{split}
	\beta(\mathcal{D}^{k},\mathcal{P}) & \leq  max\{\beta(\mathcal{D}_{1}^{k,\mathcal{P}})+\beta(\mathcal{D}_{2}^{k,\mathcal{P}}), \beta(\mathcal{D}_{1}^{k,\mathcal{P}})+ \\ &  \beta(\mathcal{D}_{3}^{k,\mathcal{P}}), \beta(\mathcal{D}_{2}^{k,\mathcal{P}})+\beta(\mathcal{D}_{3}^{k,\mathcal{P}})\}. 
	\end{split}
	\end{eqnarray}
\end{proof}

\begin{rem}
	We note that the results of all the theorems in this section are given in terms of those of the sub-problems which are single-sender unicast ICPs. However, the optimal broadcast rates of single-sender unicast ICPs are known only for some special cases \cite{SUOH}, \cite{sasi2018structure},\cite{MBSR}. Hence, the complexity of solving the two-sender problem is reduced to that of solving single-sender problems. 
\end{rem}

\begin{rem}
	For Case II-E, \cite{CTLO} provided  upper bound for the optimal broadcast rate with $t$-bit messages when $\beta_{t}(\mathcal{D}_3^{k,\mathcal{P}}) > min\{\beta_{t}(\mathcal{D}_1^{k,\mathcal{P}}),\beta_{t}(\mathcal{D}_2^{k,\mathcal{P}})\}$ and optimal broadcast rate when $\beta(\mathcal{D}_3^{k,\mathcal{P}}) > min\{\beta(\mathcal{D}_1^{k,\mathcal{P}}),\beta(\mathcal{D}_2^{k,\mathcal{P}})\}$. However, we have shown that the given upper bounds in \cite{CTLO} are loose, and Theorem \ref{thmH58_64} provides the optimal broadcast rates for Case II-E. 
\end{rem}

\section{Conclusion and Future Work}   
\par This paper establishes the optimal broadcast rates for all the cases of the TUICP with fully-participated interactions, for which only upper bounds were known. The results are given in terms of those of the  three single-sender sub-problems. Achievable broadcast rate with $t$-bit messages for any finite $t$ is given for some cases of the TUICP with fully-participated interactions, using two-sender graph coloring of the confusion graph. No results were known for these cases.
\par We conjecture that the achievable broadcast rates with $t$-bit messages for any finite $t$, for the six sub-cases of Case I given in this paper are optimal.  
\par Finding non-trivial achievable broadcast rate with $t$-bit messages for any finite $t$, for the remaining sub-cases of Case I is an interesting problem. Optimal broadcast rates with partially-participated interactions is also open. Further, extension of the results to general number of senders is open. 

\section*{Acknowledgment}
This work was supported partly by the Science and Engineering Research Board (SERB) of Department of Science and Technology (DST), Government of India, through J.C. Bose National Fellowship to B. S. Rajan and VAJRA Fellowship to V. Aggarwal.

\appendices 

\section{Proof of Theorem 1}

\begin{proof} 
	\par To prove this theorem, we construct the confusion graph $\Gamma_{t}(\mathcal{D}^{16})$ and identify the edges due to confusions at the vertices (receivers) belonging to each of the sub-digraphs $\mathcal{D}_1^{16,\mathcal{P}}$, $\mathcal{D}_2^{16,\mathcal{P}}$, and $\mathcal{D}_3^{16,\mathcal{P}}$. Then, we exploit the symmetries of the confusion graph to color it according to the two-sender graph coloring. The number of ordered pairs of colors required to color the confusion graph is used to calculate an achievable broadcast rate with $t$-bit messages.
	\par To avail the symmetries of the confusion graph, we view  $\Gamma_{t}(\mathcal{D}^{16})$ as the union of all the $J$-blocks connected by inter-$J$-block edges. Next, we list all the edges of $\Gamma_{t}(\mathcal{D}^{16})$.
	\par \underline{Edges due to confusions at the vertices  in $\mathcal{V}(\mathcal{D}_1^{16,\mathcal{P}})$}:
	If ${\bf{b}}_{\mathcal{P}_{1}}^{i}$ and ${\bf{b}}_{\mathcal{P}_{1}}^{i'}$ are confusable at some vertex in  $\mathcal{V}(\mathcal{D}_1^{16,\mathcal{P}})$, $i,i' \in [2^{tm_1}]$, then the corresponding edges in $\Gamma_{t}(\mathcal{D}^{16})$ due to the confusion at the same vertex in $\mathcal{V}(\mathcal{D}^{16})$ are of the form ($({\bf{b}}_{\mathcal{P}_{1}}^{i},{\bf{b}}_{\mathcal{P}_{2}}^{j},{\bf{b}}_{\mathcal{P}_{3}}^{k}),({\bf{b}}_{\mathcal{P}_{1}}^{i'},{\bf{b}}_{\mathcal{P}_{2}}^{j},{\bf{b}}_{\mathcal{P}_{3}}^{k'})$), where  $j \in [2^{tm_2}]$, and $k,k' \in [2^{tm_3}]$, as the vertex has all the messages represented by $\mathcal{V}(\mathcal{D}_2^{16,\mathcal{P}})$ as its side-information in  $\mathcal{V}(\mathcal{D}^{16})$, and has no side-information belonging to $\mathcal{V}(\mathcal{D}_3^{16,\mathcal{P}})$ in $\mathcal{V}(\mathcal{D}^{16})$. Hence, confusion at any vertex in $\mathcal{V}(\mathcal{D}_1^{16,\mathcal{P}})$ does not contribute to inter-$J$-block edges.
	\par \underline{Edges due to confusions at the vertices in $\mathcal{V}(\mathcal{D}_2^{16,\mathcal{P}})$}:
	If ${\bf{b}}_{\mathcal{P}_{2}}^{j}$ and ${\bf{b}}_{\mathcal{P}_{2}}^{j'}$ are confusable at some vertex in  $\mathcal{V}(\mathcal{D}_2^{16,\mathcal{P}})$, $j,j' \in [2^{tm_2}]$, then the corresponding edges in $\Gamma_{t}(\mathcal{D}^{16})$ due to the confusion at the same vertex in $\mathcal{V}(\mathcal{D}^{16})$ are of the form ($({\bf{b}}_{\mathcal{P}_{1}}^{i},{\bf{b}}_{\mathcal{P}_{2}}^{j},{\bf{b}}_{\mathcal{P}_{3}}^{k}),({\bf{b}}_{\mathcal{P}_{1}}^{i'},{\bf{b}}_{\mathcal{P}_{2}}^{j'},{\bf{b}}_{\mathcal{P}_{3}}^{k'})$), where  $i,i' \in [2^{tm_1}]$, and $k,k' \in [2^{tm_3}]$, as the vertex has no side-information in  $\mathcal{V}(\mathcal{D}^{16})$ belonging to $\mathcal{V}(\mathcal{D}_{1}^{16,\mathcal{P}})$ and $\mathcal{V}(\mathcal{D}_{3}^{16,\mathcal{P}})$. Hence, confusion at any vertex in $\mathcal{V}(\mathcal{D}_2^{16,\mathcal{P}})$ results in inter-$J$-block edges.
	\par \underline{Edges due to confusions at the vertices in $\mathcal{V}(\mathcal{D}_3^{16,\mathcal{P}})$}: Confusion at any vertex in $\mathcal{V}(\mathcal{D}_3^{16,\mathcal{P}})$ does not result in  inter-$J$-block edges, as each vertex has all the messages represented by $\mathcal{V}(\mathcal{D}_2^{16,\mathcal{P}})$ as its  side-information. The edges are of the form ($({\bf{b}}_{\mathcal{P}_{1}}^{i},{\bf{b}}_{\mathcal{P}_{2}}^{j},{\bf{b}}_{\mathcal{P}_{3}}^{k}),({\bf{b}}_{\mathcal{P}_{1}}^{i'},{\bf{b}}_{\mathcal{P}_{2}}^{j},{\bf{b}}_{\mathcal{P}_{3}}^{k'})$), where  ${\bf{b}}_{\mathcal{P}_{3}}^{k}$ and ${\bf{b}}_{\mathcal{P}_{3}}^{k'}$ are confusable at some receiver in  $\mathcal{V}(\mathcal{D}_3^{16,\mathcal{P}})$, as there is no side-information belonging to $\mathcal{V}(\mathcal{D}_1^{16,\mathcal{P}})$ in $\mathcal{V}(\mathcal{D}^{16})$.
	\par \underline{Coloring the confusion graph $\Gamma_{t}(\mathcal{D}^{16})$}: From Lemma \ref{lemblockiso}, we know that all the $J$-blocks are isomorphic to each other. From the enlisting of all the edges of the confusion graph, we know that the  inter-$J$-block edges between any two  $J$-blocks are only due to the confusions at the  receivers belonging to  $\mathcal{V}(\mathcal{D}_2^{16,\mathcal{P}})$. Confusion at any receiver in $\mathcal{V}(\mathcal{D}_2^{16,\mathcal{P}})$ does not result in any edge belonging to any $J$-block. Hence, in order to color the confusion graph according to the two-sender graph coloring, we find an optimal classical graph coloring of any $J$-block and associate the resulting colors with sender $\mathcal{S}_1$. This can be done, as the edges within any $J$-block are only due to the confusions at the vertices belonging to  $\mathcal{V}(\mathcal{D}_1^{16,\mathcal{P}} \cup \mathcal{D}_3^{16,\mathcal{P}})$, and $\mathcal{S}_1$ alone has all the messages in $\mathcal{P}_1 \cup \mathcal{P}_3$. As all the $J$-blocks are isomorphic to each other and all the inter-$J$-block edges between any two  $J$-blocks are only due to  the confusions at the receivers belonging to  $\mathcal{V}(\mathcal{D}_2^{16,\mathcal{P}})$, the same set of colors can be used by $\mathcal{S}_1$ to color every $J$-block identically. This resolves all the  confusions at all the receivers in $\mathcal{V}(\mathcal{D}_1^{16,\mathcal{P}} \cup \mathcal{D}_3^{16,\mathcal{P}})$. 
	
	From the listing of all the edges in $\Gamma_{t}(\mathcal{D}^{16})$, we observe that there is an edge given by  $(({\bf{b}}_{\mathcal{P}_{1}}^{i},{\bf{b}}_{\mathcal{P}_{2}}^{j},{\bf{b}}_{\mathcal{P}_{3}}^{k}),({\bf{b}}_{\mathcal{P}_{1}}^{i'},{\bf{b}}_{\mathcal{P}_{2}}^{j},{\bf{b}}_{\mathcal{P}_{3}}^{k'}))$, belonging to any $j$th $J$-block iff either the edge $({\bf{b}}_{\mathcal{P}_{1}}^{i},{\bf{b}}_{\mathcal{P}_{1}}^{i'}) \in \Gamma_t(\mathcal{V}(\mathcal{D}_1^{16,\mathcal{P}}))$ or the edge $({\bf{b}}_{\mathcal{P}_{3}}^{k},{\bf{b}}_{\mathcal{P}_{3}}^{k'}) \in \Gamma_t(\mathcal{V}(\mathcal{D}_3^{16,\mathcal{P}}))$. From the definition of the disjunctive graph product, we observe that each $J$-block is isomorphic to $\Gamma_{t}(\mathcal{D}_{1}^{16,\mathcal{P}})*\Gamma_{t}(\mathcal{D}_{3}^{16,\mathcal{P}})$. Note that the graph $\Gamma_{t}(\mathcal{D}_{1}^{16,\mathcal{P}})*\Gamma_{t}(\mathcal{D}_{3}^{16,\mathcal{P}})$ and any $J$-block have $2^{tm_1} \times 2^{tm_3}$ vertices.  Hence, $\mathcal{S}_1$ requires a minimum of $\chi(\Gamma_{t}(\mathcal{D}_{1*3}^{16,\mathcal{P}}))$ colors to color any $J$-block.
	
	The confusions associated with inter-$J$-block edges between any two $J$-blocks can be resolved by $\mathcal{S}_2$ alone, as all such confusions are associated with  vertices in $\mathcal{V}(\mathcal{D}_2^{16,\mathcal{P}})$ and only $\mathcal{S}_2$ has all the messages in $\mathcal{P}_2$. Observe that there are inter-$J$-block edges between any $j$th and any $j'$th $J$-blocks iff $({\bf{b}}_{\mathcal{P}_{2}}^{j}, {\bf{b}}_{\mathcal{P}_{2}}^{j'})$ is an edge in $\Gamma_{t}(\mathcal{D}_{2}^{16,\mathcal{P}})$. We know that a minimum of $\chi(\Gamma_t(\mathcal{D}_2^{16,\mathcal{P}}))$ colors are required to color $\Gamma_{t}(\mathcal{D}_{2}^{16,\mathcal{P}})$. By assigning the color given to ${\bf{b}}_{\mathcal{P}_{2}}^{j}$ in $\Gamma_{t}(\mathcal{D}_{2}^{16,\mathcal{P}})$ to the $j$th $J$-block (to all the vertices in the $j$th $J$-block) for all $j \in [2^{tm_2}]$, we observe that all the confusions associated with all the inter-$J$-block edges are resolved. Hence, a minimum of $\chi(\Gamma_t(\mathcal{D}_2^{16,\mathcal{P}}))$ colors are sufficient for $\mathcal{S}_2$ to color the confusion graph.	
	
	Hence, this is a valid two-sender graph coloring of $\Gamma_{t}(\mathcal{D}^{16})$ requiring a total of $\chi(\Gamma_{t}(\mathcal{D}_{1*3}^{16,\mathcal{P}})) \times \chi(\Gamma_{t}(\mathcal{D}_{2}^{16,\mathcal{P}}))$ ordered pairs of colors, where $\mathcal{S}_1$ requires $\chi(\Gamma_{t}(\mathcal{D}_{1*3}^{16,\mathcal{P}}))$ colors and $\mathcal{S}_2$ requires $\chi(\Gamma_{t}(\mathcal{D}_{2}^{16,\mathcal{P}}))$ colors.
	
	\par Thus, we have the total length of the two-sender index code given by the sum of the lengths of codewords transmitted by the two-senders as, 
	\begin{eqnarray}
	\begin{split}
	& t \times p_t(\mathcal{D}^{16},\mathcal{P}) \\ & =  
	\lceil \log_2(\chi(\Gamma_{t}(\mathcal{D}_{1*3}^{16,\mathcal{P}}))) \rceil + \lceil \log_2(\chi(\Gamma_{t}(\mathcal{D}_{2}^{16,\mathcal{P}}))) \rceil.
	\end{split}
	\end{eqnarray}
	
	Hence, we have the associated broadcast rate given by 
	\begin{equation}
	p_t(\mathcal{D}^{16},\mathcal{P}) = \beta_{t}(\mathcal{D}_{1*3}^{16,\mathcal{P}}) + \beta_{t}(\mathcal{D}_{2}^{16,\mathcal{P}}).
	\end{equation}
	
\end{proof}

\section{Proof of Theorem 2}

\begin{proof} 
	\par To prove this theorem, we follow the same approach used to prove Theorem \ref{thmH1620}. 
	\par To avail the symmetries of the confusion graph, we view  $\Gamma_{t}(\mathcal{D}^{18})$ as the union of all the $J$-blocks connected by inter-$J$-block edges. We list all the edges of $\Gamma_{t}(\mathcal{D}^{18})$ as follows. 
	\par \underline{Edges due to confusions at the vertices  in $\mathcal{V}(\mathcal{D}_{1}^{18,\mathcal{P}})$}:
	If ${\bf{b}}_{\mathcal{P}_{1}}^{i}$ and ${\bf{b}}_{\mathcal{P}_{1}}^{i'}$ are confusable at some vertex in  $\mathcal{V}(\mathcal{D}_{1}^{18,\mathcal{P}})$, where $i,i' \in [2^{tm_1}]$, then the corresponding edges in $\Gamma_{t}(\mathcal{D}^{18})$ due to the confusion at the same vertex in $\mathcal{V}(\mathcal{D}^{18})$ are of the form ($({\bf{b}}_{\mathcal{P}_{1}}^{i},{\bf{b}}_{\mathcal{P}_{2}}^{j},{\bf{b}}_{\mathcal{P}_{3}}^{k}),({\bf{b}}_{\mathcal{P}_{1}}^{i'},{\bf{b}}_{\mathcal{P}_{2}}^{j},{\bf{b}}_{\mathcal{P}_{3}}^{k'})$), where  $j \in [2^{tm_2}]$, and $k,k' \in [2^{tm_3}]$, as the vertex has all the messages represented by $\mathcal{V}(\mathcal{D}_{2}^{18,\mathcal{P}})$ as its  side-information in  $\mathcal{V}(\mathcal{D}^{18})$, and has no side-information in $\mathcal{V}(\mathcal{D}^{18})$ belonging to $\mathcal{V}(\mathcal{D}_{3}^{18,\mathcal{P}})$. Hence, confusion at any vertex in $\mathcal{V}(\mathcal{D}_{1}^{18,\mathcal{P}})$ does not contribute to inter-$J$-block edges.
	\par \underline{Edges due to confusions at the vertices in $\mathcal{V}(\mathcal{D}_{2}^{18,\mathcal{P}})$}:
	If ${\bf{b}}_{\mathcal{P}_{2}}^{j}$ and ${\bf{b}}_{\mathcal{P}_{2}}^{j'}$ are confusable at some vertex in  $\mathcal{V}(\mathcal{D}_{2}^{18,\mathcal{P}})$,  then the corresponding edges in $\Gamma_{t}(\mathcal{D}^{18})$ due to the confusion at the same vertex in $\mathcal{V}(\mathcal{D}^{18})$ are of the form 	 ($({\bf{b}}_{\mathcal{P}_{1}}^{i},{\bf{b}}_{\mathcal{P}_{2}}^{j},{\bf{b}}_{\mathcal{P}_{3}}^{k}),({\bf{b}}_{\mathcal{P}_{1}}^{i'},{\bf{b}}_{\mathcal{P}_{2}}^{j'},{\bf{b}}_{\mathcal{P}_{3}}^{k'})$), where  $i,i' \in [2^{tm_1}]$, and $k,k' \in [2^{tm_3}]$, as the vertex has no  side-information belonging to $\mathcal{V}(\mathcal{D}_{1}^{18,\mathcal{P}})$ and $\mathcal{V}(\mathcal{D}_{3}^{18,\mathcal{P}})$ in  $\mathcal{V}(\mathcal{D}^{18})$. Hence, confusion at any vertex in $\mathcal{V}(\mathcal{D}_{2}^{18,\mathcal{P}})$ results in inter-$J$-block edges.
	
	\par \underline{Edges due to confusions at the  vertices in $\mathcal{V}(\mathcal{D}_{3}^{18,\mathcal{P}})$}: 
	Confusion at any vertex in $\mathcal{V}(\mathcal{D}_{3}^{18,\mathcal{P}})$ does not result in  inter-$J$-block edges, as each vertex has all the messages represented by $\mathcal{V}(\mathcal{D}_{1}^{18,\mathcal{P}})$ and $\mathcal{V}(\mathcal{D}_{2}^{18,\mathcal{P}})$ as its side-information in $\mathcal{V}(\mathcal{D}^{18})$. The edges are of the form $(({\bf{b}}_{\mathcal{P}_{1}}^{i},{\bf{b}}_{\mathcal{P}_{2}}^{j},{\bf{b}}_{\mathcal{P}_{3}}^{k}),({\bf{b}}_{\mathcal{P}_{1}}^{i},{\bf{b}}_{\mathcal{P}_{2}}^{j},{\bf{b}}_{\mathcal{P}_{3}}^{k'}))$, where  ${\bf{b}}_{\mathcal{P}_{3}}^{k}$ and ${\bf{b}}_{\mathcal{P}_{3}}^{k'}$ are confusable at some receiver in  $\mathcal{V}(\mathcal{D}_{3}^{18,\mathcal{P}})$.
	\par \underline{Coloring the confusion graph $\Gamma_{t}(\mathcal{D}^{18})$}: 
	We follow the same approach as that given in the proof of Theorem \ref{thmH1620} to color the confusion graph $\Gamma_{t}(\mathcal{D}^{18})$, as it can be easily verified that the same reasoning given for the coloring approach in the proof of Theorem \ref{thmH1620} also holds in this case. We only mention the required changes.   
	
	From the listing of edges in $\Gamma_{t}(\mathcal{D}^{18})$, we observe that there is an edge given by  $(({\bf{b}}_{\mathcal{P}_{1}}^{i},{\bf{b}}_{\mathcal{P}_{2}}^{j},{\bf{b}}_{\mathcal{P}_{3}}^{k}),({\bf{b}}_{\mathcal{P}_{1}}^{i'},{\bf{b}}_{\mathcal{P}_{2}}^{j},{\bf{b}}_{\mathcal{P}_{3}}^{k'}))$, belonging to any $j$th $J$-block iff either the edge $({\bf{b}}_{\mathcal{P}_{1}}^{i},{\bf{b}}_{\mathcal{P}_{1}}^{i'}) \in \Gamma_t(\mathcal{V}(\mathcal{D}_1^{18,\mathcal{P}}))$, or the edge $({\bf{b}}_{\mathcal{P}_{3}}^{k},{\bf{b}}_{\mathcal{P}_{3}}^{k'}) \in \Gamma_t(\mathcal{V}(\mathcal{D}_3^{18,\mathcal{P}}))$ and ${\bf{b}}_{\mathcal{P}_{1}}^{i}={\bf{b}}_{\mathcal{P}_{1}}^{i'}$. From the definition of the lexicographic graph product, we observe that each $J$-block is isomorphic to $\Gamma_{t}(\mathcal{D}_{1}^{18,\mathcal{P}}) \circ \Gamma_{t}(\mathcal{D}_{3}^{18,\mathcal{P}})$. Hence, $\mathcal{S}_1$ requires a minimum of $\chi(\Gamma_{t}(\mathcal{D}_{1 \circ 3}^{18,\mathcal{P}}))$ colors to color any $J$-block.
	
	As in the proof of Theorem \ref{thmH1620}, a minimum of $\chi(\Gamma_t(\mathcal{D}_2^{18,\mathcal{P}}))$ colors are sufficient for $\mathcal{S}_2$ to color the confusion graph.	
	
	Hence, this is a valid two-sender graph coloring of $\Gamma_{t}(\mathcal{D}^{18})$ requiring a total of $\chi(\Gamma_{t}(\mathcal{D}_{1 \circ  3}^{18,\mathcal{P}})) \times \chi(\Gamma_{t}(\mathcal{D}_{2}^{18,\mathcal{P}}))$ ordered pairs of colors, where $\mathcal{S}_1$ requires $\chi(\Gamma_{t}(\mathcal{D}_{1 \circ 3}^{18,\mathcal{P}}))$ colors and $\mathcal{S}_2$ requires $\chi(\Gamma_{t}(\mathcal{D}_{2}^{18,\mathcal{P}}))$ colors.
	
	\par Thus, we have the total length of the two-sender index code given by the sum of the lengths of the codewords transmitted by the two-senders as, 
	\begin{eqnarray}
	\begin{split}
	& t \times p_t(\mathcal{D}^{18},\mathcal{P}) \\ & =  
	\lceil \log_2(\chi(\Gamma_{t}(\mathcal{D}_{1 \circ 3}^{18,\mathcal{P}}))) \rceil + \lceil \log_2(\chi(\Gamma_{t}(\mathcal{D}_{2}^{18,\mathcal{P}}))) \rceil.
	\end{split}
	\end{eqnarray}
	
	Hence, we have the associated broadcast rate given by 
	\begin{equation}
	p_t(\mathcal{D}^{18},\mathcal{P}) = \beta_{t}(\mathcal{D}_{1 \circ  3}^{18,\mathcal{P}}) + \beta_{t}(\mathcal{D}_{2}^{18,\mathcal{P}}).
	\end{equation}
	
\end{proof}

\section{Proof of Theorem 3}

\begin{proof} 
	\par To prove this theorem, we follow the same approach used to prove Theorem \ref{thmH1821}. 
	\par To avail the symmetries of the confusion graph, we view  $\Gamma_{t}(\mathcal{D}^{23})$ as the union of all the $J$-blocks connected by inter-$J$-block edges. We list all the edges of $\Gamma_{t}(\mathcal{D}^{23})$ as follows. 
	\par \underline{Edges due to confusions at the vertices  in $\mathcal{V}(\mathcal{D}_{1}^{23,\mathcal{P}})$}:
	If ${\bf{b}}_{\mathcal{P}_{1}}^{i}$ and ${\bf{b}}_{\mathcal{P}_{1}}^{i'}$ are confusable at some vertex in  $\mathcal{V}(\mathcal{D}_{1}^{23,\mathcal{P}})$, where $i,i' \in [2^{tm_1}]$, then the corresponding edges in $\Gamma_{t}(\mathcal{D}^{23})$ due to the confusion at the same vertex in $\mathcal{V}(\mathcal{D}^{23})$ are of the form ($({\bf{b}}_{\mathcal{P}_{1}}^{i},{\bf{b}}_{\mathcal{P}_{2}}^{j},{\bf{b}}_{\mathcal{P}_{3}}^{k}),({\bf{b}}_{\mathcal{P}_{1}}^{i'},{\bf{b}}_{\mathcal{P}_{2}}^{j},{\bf{b}}_{\mathcal{P}_{3}}^{k})$), where  $j \in [2^{tm_2}]$, and $k,k' \in [2^{tm_3}]$, as the vertex has all the messages represented by $\mathcal{V}(\mathcal{D}_{2}^{23,\mathcal{P}})$ and $\mathcal{V}(\mathcal{D}_{3}^{23,\mathcal{P}})$ as its  side-information in  $\mathcal{V}(\mathcal{D}^{23})$. Hence, confusion at any vertex in $\mathcal{V}(\mathcal{D}_{1}^{23,\mathcal{P}})$ does not contribute to inter-$J$-block edges.
	\par \underline{Edges due to confusions at the vertices in $\mathcal{V}(\mathcal{D}_{2}^{23,\mathcal{P}})$}:
	If ${\bf{b}}_{\mathcal{P}_{2}}^{j}$ and ${\bf{b}}_{\mathcal{P}_{2}}^{j'}$ are confusable at some vertex in  $\mathcal{V}(\mathcal{D}_{2}^{23,\mathcal{P}})$,  then the corresponding edges in $\Gamma_{t}(\mathcal{D}^{23})$ due to confusion at the same vertex in $\mathcal{V}(\mathcal{D}^{23})$ are of the form 	 ($({\bf{b}}_{\mathcal{P}_{1}}^{i},{\bf{b}}_{\mathcal{P}_{2}}^{j},{\bf{b}}_{\mathcal{P}_{3}}^{k}),({\bf{b}}_{\mathcal{P}_{1}}^{i'},{\bf{b}}_{\mathcal{P}_{2}}^{j'},{\bf{b}}_{\mathcal{P}_{3}}^{k'})$), where  $i,i' \in [2^{tm_1}]$, and $k,k' \in [2^{tm_3}]$, as the vertex has no  side-information belonging to $\mathcal{V}(\mathcal{D}_{1}^{23,\mathcal{P}})$ and $\mathcal{V}(\mathcal{D}_{3}^{23,\mathcal{P}})$ in  $\mathcal{V}(\mathcal{D}^{23})$. Hence, confusion at any vertex in $\mathcal{V}(\mathcal{D}_{2}^{23,\mathcal{P}})$ results in inter-$J$-block edges.
	
	\par \underline{Edges due to confusions at the  vertices in $\mathcal{V}(\mathcal{D}_{3}^{23,\mathcal{P}})$}: 
	Confusion at any vertex in $\mathcal{V}(\mathcal{D}_{3}^{23,\mathcal{P}})$ does not result in  any inter-$J$-block edges, as each vertex has all the messages represented by $\mathcal{V}(\mathcal{D}_{2}^{23,\mathcal{P}})$ as its side-information in $\mathcal{V}(\mathcal{D}^{23})$. The edges are of the form $(({\bf{b}}_{\mathcal{P}_{1}}^{i},{\bf{b}}_{\mathcal{P}_{2}}^{j},{\bf{b}}_{\mathcal{P}_{3}}^{k}),({\bf{b}}_{\mathcal{P}_{1}}^{i'},{\bf{b}}_{\mathcal{P}_{2}}^{j},{\bf{b}}_{\mathcal{P}_{3}}^{k'}))$, where  ${\bf{b}}_{\mathcal{P}_{3}}^{k}$ and ${\bf{b}}_{\mathcal{P}_{3}}^{k'}$ are confusable at some receiver in  $\mathcal{V}(\mathcal{D}_{3}^{23,\mathcal{P}})$.
	\par \underline{Coloring the confusion graph $\Gamma_{t}(\mathcal{D}^{23})$}: 
	We follow the same approach as that given in the proof of Theorem \ref{thmH1620} to color the confusion graph $\Gamma_{t}(\mathcal{D}^{23})$, as it can be easily verified that the same reasoning given for the coloring approach in the proof of Theorem \ref{thmH1620} also holds in this case. We only mention the required changes.   
	
	From the listing of edges in $\Gamma_{t}(\mathcal{D}^{23})$, we observe that there is an edge given by  $(({\bf{b}}_{\mathcal{P}_{1}}^{i},{\bf{b}}_{\mathcal{P}_{2}}^{j},{\bf{b}}_{\mathcal{P}_{3}}^{k}),({\bf{b}}_{\mathcal{P}_{1}}^{i'},{\bf{b}}_{\mathcal{P}_{2}}^{j},{\bf{b}}_{\mathcal{P}_{3}}^{k'}))$, belonging to any $j$th $J$-block iff either the edge $({\bf{b}}_{\mathcal{P}_{1}}^{i},{\bf{b}}_{\mathcal{P}_{1}}^{i'}) \in \Gamma_t(\mathcal{V}(\mathcal{D}_1^{23,\mathcal{P}}))$ and ${\bf{b}}_{\mathcal{P}_{3}}^{k}={\bf{b}}_{\mathcal{P}_{3}}^{k'}$, or the edge $({\bf{b}}_{\mathcal{P}_{3}}^{k},{\bf{b}}_{\mathcal{P}_{3}}^{k'}) \in \Gamma_t(\mathcal{V}(\mathcal{D}_3^{23,\mathcal{P}}))$. From the definition of the lexicographic graph product, we observe that each $J$-block is isomorphic to $\Gamma_{t}(\mathcal{D}_{3}^{23,\mathcal{P}}) \circ \Gamma_{t}(\mathcal{D}_{1}^{23,\mathcal{P}})$. Hence, $\mathcal{S}_1$ requires a minimum of $\chi(\Gamma_{t}(\mathcal{D}_{3 \circ 1}^{23,\mathcal{P}}))$ colors to color any $J$-block.
	
	As in the proof of Theorem \ref{thmH1620}, a minimum of $\chi(\Gamma_t(\mathcal{D}_2^{23,\mathcal{P}}))$ colors are sufficient for $\mathcal{S}_2$ to color the confusion graph.	
	
	Hence, this is a valid two-sender graph coloring of $\Gamma_{t}(\mathcal{D}^{23})$ requiring a total of $\chi(\Gamma_{t}(\mathcal{D}_{3 \circ  1}^{23,\mathcal{P}})) \times \chi(\Gamma_{t}(\mathcal{D}_{2}^{23,\mathcal{P}}))$ ordered pairs of colors, where $\mathcal{S}_1$ requires $\chi(\Gamma_{t}(\mathcal{D}_{3 \circ 1}^{23,\mathcal{P}}))$ colors and $\mathcal{S}_2$ requires $\chi(\Gamma_{t}(\mathcal{D}_{2}^{23,\mathcal{P}}))$ colors.
	
	\par Thus, we have the total length of the two-sender index code given by the sum of the lengths of the codewords transmitted by the two-senders as, 
	\begin{eqnarray}
	\begin{split}
	& t \times p_t(\mathcal{D}^{23},\mathcal{P}) \\ & =  
	\lceil \log_2(\chi(\Gamma_{t}(\mathcal{D}_{3 \circ 1}^{23,\mathcal{P}}))) \rceil + \lceil \log_2(\chi(\Gamma_{t}(\mathcal{D}_{2}^{23,\mathcal{P}}))) \rceil.
	\end{split}
	\end{eqnarray}
	
	Hence, we have the associated broadcast rate given by 
	\begin{equation}
	p_t(\mathcal{D}^{23},\mathcal{P}) = \beta_{t}(\mathcal{D}_{3 \circ  1}^{23,\mathcal{P}}) + \beta_{t}(\mathcal{D}_{2}^{23,\mathcal{P}}).
	\end{equation}
	
\end{proof}

\end{document}